\documentclass{article} 
\usepackage{iclr2027_conference,times}
\usepackage{amssymb}
\usepackage{amsthm}
\usepackage{bbm}
\usepackage{mathtools}
\usepackage{physics}
\usepackage{thmtools}
\usepackage{thm-restate}
\usepackage{algorithm}
\usepackage{algpseudocode}
\usepackage[utf8]{inputenc} 
\usepackage[T1]{fontenc}    
\usepackage{hyperref}       
\usepackage{url}            
\usepackage{booktabs}       
\usepackage{multirow}
\usepackage[version=4]{mhchem}
\usepackage{nicefrac}       
\usepackage{microtype}      
\usepackage{xcolor}         
\usepackage{graphicx}
\usepackage{subcaption}
\usepackage{wrapfig}         
\usepackage{tikz}
\usetikzlibrary{positioning, arrows.meta, fit, calc, backgrounds}
\usepackage[capitalize,noabbrev]{cleveref}
\usepackage{etoc}            
\usepackage{fontawesome5}    
\definecolor{markyes}{HTML}{2E9E73}
\definecolor{markno}{HTML}{E4572E}
\newcommand{\cmark}{\textcolor{markyes}{\faCheckCircle}}
\newcommand{\xmark}{\textcolor{markno}{\faTimesCircle}}

\newcommand{\tnote}[1]{\rlap{$^{#1}$}}

\usepackage{amsmath,amsfonts,bm}

\def\eqref#1{equation~\ref{#1}}

\def\1{\bm{1}}

\def\rmC{{\mathbf{C}}}

\def\rmS{{\mathbf{S}}}

\def\va{{\bm{a}}}
\def\vb{{\bm{b}}}
\def\vc{{\bm{c}}}
\def\vd{{\bm{d}}}

\def\vj{{\bm{j}}}
\def\vk{{\bm{k}}}
\def\vl{{\bm{l}}}
\def\vm{{\bm{m}}}
\def\vn{{\bm{n}}}

\def\vq{{\bm{q}}}
\def\vr{{\bm{r}}}
\def\vs{{\bm{s}}}
\def\vt{{\bm{t}}}

\def\vv{{\bm{v}}}

\def\vx{{\bm{x}}}
\def\vy{{\bm{y}}}
\def\vz{{\bm{z}}}

\def\mA{{\bm{A}}}
\def\mB{{\bm{B}}}
\def\mC{{\bm{C}}}

\def\mF{{\bm{F}}}
\def\mG{{\bm{G}}}
\def\mH{{\bm{H}}}

\def\mJ{{\bm{J}}}
\def\mK{{\bm{K}}}
\def\mL{{\bm{L}}}

\def\mP{{\bm{P}}}
\def\mQ{{\bm{Q}}}
\def\mR{{\bm{R}}}
\def\mS{{\bm{S}}}
\def\mT{{\bm{T}}}
\def\mU{{\bm{U}}}

\def\mW{{\bm{W}}}

\def\mY{{\bm{Y}}}

\def\mSigma{{\bm{\Sigma}}}

\DeclareMathAlphabet{\mathsfit}{\encodingdefault}{\sfdefault}{m}{sl}
\SetMathAlphabet{\mathsfit}{bold}{\encodingdefault}{\sfdefault}{bx}{n}

\newcommand{\xc}{{\text{xc}}}

\newcommand{\vext}{{v _\text{ext}}}
\newcommand{\rr}{{\vr}}
\newcommand{\RR}{{\mR}}
\newcommand{\diag}{{\text{diag}}}

\theoremstyle{plain}
\newtheorem{theorem}{Theorem}[section]
\newtheorem{proposition}[theorem]{Proposition}

\theoremstyle{definition}
\newtheorem{definition}[theorem]{Definition}
\newtheorem{assumption}[theorem]{Assumption}
\theoremstyle{remark}

\iclrpreprint    
\newcommand{\coderepo}{\ificlrfinal\url{https://github.com/andresguzco/gs-dft}\else\url{https://anonymous.4open.science/r/gs-dft-26FC}\fi}
\newcommand{\coderepokind}{\ificlrfinal a public repository\else an anonymized repository\fi}
\newcommand{\papertitle}{Scaling Density Functional Theory with Gaussian Splatting}
\title{Scaling Density Functional Theory with \\Gaussian Splatting}

\author{%
  \textbf{Andr\'es Guzm\'an-Cordero\textsuperscript{1,\,2} \quad
  Cindy Zhang\textsuperscript{3} \quad
  Majdi Hassan\textsuperscript{1,\,2}} \\[0.2em]
  \textbf{Marta Skreta\textsuperscript{1,\,2} \quad
  Kirill Neklyudov\textsuperscript{1,\,2,\,5,\,$\dagger$} \quad
  Matija Medvidovi\'c\textsuperscript{4,\,$\dagger$}} \\[0.4em]
  \textsuperscript{1}Mila - Quebec AI Institute \quad
  \textsuperscript{2}Universit\'e de Montr\'eal \quad
  \textsuperscript{3}Princeton University \\[0.1em]
  \textsuperscript{4}ETH Zurich \quad
  \textsuperscript{5}Institut Courtois \quad
  \textsuperscript{$\dagger$}Equal supervision
}

\begin{document}

\maketitle
\etocdepthtag.toc{mainmatter}
\begin{abstract}
    Density functional theory (DFT) strikes a practical balance between accuracy and computational cost in many problems of computational chemistry and materials science. 
    However, many DFT calculations are limited by fixed atom-centered basis sets, which dictate how accuracy and cost scale with system size. 
    We propose \emph{Gaussian Splatting for Density Functional Theory} (GS-DFT), which represents molecular orbitals as a cloud of Gaussians whose positions, shapes, and mixing coefficients are optimized jointly by gradient descent to minimize the energy without training data. 
    Conceptually, GS-DFT is 3D Gaussian splatting with the renderer replaced by quantum mechanics.
    We introduce two key solver components: adaptive density fitting with screening for efficient evaluation of two-electron integrals, and a regularized differentiable orthogonalization of the molecular orbitals.
    Empirically, the optimized basis reaches the accuracy of the largest conventional basis sets with a fraction of the parameters, converging systematically in energy, density, and nuclear forces.
    At equal parameter count, it captures the stretched-bond and anion physics that fixed bases only recover with specialized basis augmentation.
    The resulting solver exhibits quadratic peak memory scaling in the cloud size, allowing us to simulate systems of up to $2,742$ atoms ($10,406$ electrons) without any modifications at triple-zeta scale using a single four-GPU node.
\end{abstract}

\section{Introduction} \label{sec:intro}

Density functional theory (DFT) offers reliable and practical approximations of many-electron quantum properties from first principles. 
With a uniquely favorable balance between accuracy and cost, it has matured as a computational pipeline, delivering new scientific insight \citep{Burke2012Perspective, jonesDensityFunctionalTheory2015} at scale. 
Many downstream areas of computational science see DFT as the only interface with quantum phenomena. 
It is the foundation of modern high-throughput material screening \citep{jainComputationalPredictionsEnergy2016a, curtaroloHighthroughputHighwayComputational2013, fiedlerDeepDiveMachine2022}, and data generation for machine-learned force fields \citep{unkeMachineLearningForce2021, jacobsPracticalGuideMachine2025a, hollingsworthMolecularDynamicsSimulation2018} in molecular dynamics. 
At the industrial scale, DFT is used in inverse design settings that require frequent evaluations of quantum mechanical properties \citep{butlerMachineLearningMolecular2018}. 
Routine applications include the conformational energy ranking of drug candidates \citep{niaziQuantumMechanicsDrug2025, kairysBindingAffinityDrug2019, guanApplicationDensityFunctional2025}, characterizing defects and interfaces in semiconductor chip design \citep{tranAccurateBandGaps2009, kimImprovementResistiveMemory2006, vandewalleFirstprinciplesCalculationsDefects2004}, and mapping reaction pathways or identifying active sites in computational catalysis \citep{nandyComputationalDiscoveryTransitionmetal2021, sameeraTransitionMetalCatalysis2012, liaoDensityFunctionalTheory2022}.
Therefore, all of these applications inherit DFT's biases and are limited by its data availability.


In practice, calculations in quantum chemistry are bottlenecked by three factors.
The first is scalability.
Electronic orbitals are expanded in a fixed set of $B$ basis functions, traditionally chosen as approximations of atomic orbitals (AO).
The memory cost of the traditional self-consistent field algorithm retains substantial intermediate-storage and tensor contraction costs \citep{linMathematicalIntroductionElectronic2019, lehtolaOverviewSelfConsistentField2020}, making numerical mitigation an area of active research \citep{beckePerspectiveFiftyYears2014}.
The second is physical fidelity.
Accuracy is bounded by how well those predetermined $B$ atomic orbitals can represent the true molecular orbitals.
Standard atom-centered basis sets have well-catalogued limitations \citep{lehtolaReviewNonrelativisticFully2019, jensenAtomicOrbitalBasis2013, klopperGaussianBasisSets1986} in representing known physics and chemistry without manual fine-tuning.
The third is the accuracy of the resulting observables.
Orbital basis errors are reflected and often amplified by output quantities, such as forces and densities.
Forces converge more slowly with basis size than energies and require additional correction terms \citep{Pulay1969Forces} for atom-centered bases.
The fixed-basis parameterization of electronic orbitals contributes to all three issues.
These limitations motivate a solver that is physically constrained by construction, amortizable across related solves, and naturally supports GPU parallelization.

We propose parametrizing electronic orbitals with \textit{floating} anisotropic Gaussian functions.
Instead of solving the conventional DFT self-consistency cycle, we directly minimize the total DFT energy using automatic differentiation, yielding a parallelizable direct optimization procedure. 
Our approach is similar to 3D Gaussian splatting \citep{Kerbl2023GaussianSplatting} with the data-driven photometric loss replaced by variational energy optimization.
In addition, we propose a memory- and compute-efficient \emph{density fitting} scheme for evaluating the two-electron energy contributions, reducing the peak memory cost from $\mathcal{O}(M^4)$ to $\mathcal{O}(M^2)$ and the per-step tensor contraction cost to $\mathcal{O}(M^2)$ for $M$ Gaussians.
The method is summarized in \Cref{fig:visual-abstract}.
On molecules of increasing size (\cref{sec:experiments}), we demonstrate:
\begin{itemize}
    \item \textbf{State-of-the-art accuracy scaling.} 
    We analyze the accuracy scaling laws for GTOs and GS-DFT, showing $2.3\times$ better scaling for all systems. We further show that our results hold for any common exchange-correlation functional type and reproduce the Hartree-Fock ionization potentials.  
    \item \textbf{Correct physics without expert corrections.} In regimes where fixed bases have to be augmented by experts (e.g. anions and stretched bonds), GS-DFT reproduces or surpasses the best results of augmented baselines with the augmented budget.
    \item \textbf{Adaptive density fitting for frontier systems.} 
    The \textit{adaptive} density fitting reduces the memory cost to $\mathcal{O}(M^2)$, so the largest system in this work, \textit{protease Nsp1-alpha from the porcine reproductive and respiratory syndrome virus} wiht $\sim 10,000$ electrons, can be optimized on a single node, thus reduceing the memory footprint by a factor of $\sim 20$.
\end{itemize}


\begin{figure}[t]
\captionsetup{skip=4pt}
    \centering
    \includegraphics[width=\linewidth]{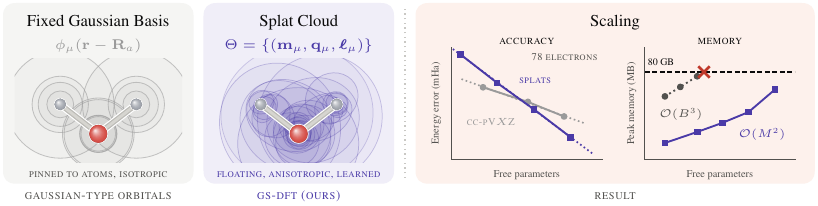}
    \caption{
        \textbf{Gaussian Splatting for Density Functional Theory (GS-DFT).}
        Molecular orbitals are expanded in a cloud of $M$ Gaussians (\emph{splats}) whose means and covariances $\Theta$ are optimized together with the coefficients $\mC$ by variational direct energy minimization.
        The rightmost panels report both scalings for alanine dipeptide ($N{=}78$ electrons). 
        The error against the extrapolated references falls faster per free parameter than for the traditional bases with $B$ functions, and peak memory grows as $\mathcal{O}(M^2)$ instead of the $\mathcal{O}(B^3)$ of most traditional codes.
    }
    \label{fig:visual-abstract}
\end{figure}
\section{Background} \label{sec:background}

\textbf{Quantum mechanics as a variational problem.}
Consider a molecule with $N$ electrons and $M$ nuclei with charges $Z_a$ at positions $\RR_a \in \mathbbm{R}^3$, indexed by $a = 1, \ldots , M$. 
Its exact quantum properties (the \emph{electronic structure}) are governed by the Hamiltonian
\begin{equation}
\label{eq:many-body-hamiltonian}
    \hat{H} \;=\; \sum_{i=1}^{N} \Big[ -\tfrac12 \laplacian_{\rr_i} + \vext(\rr_i) \Big] \;+\; \sum_{i<j} \frac{1}{|\rr_i - \rr_j|} \; ,
\end{equation}
an operator on antisymmetric $N$-electron wavefunctions $\Psi \in L^2(\mathbbm{R}^{3N})$, reflecting physical one- and two-electron interactions. 
For molecules at equilibrium, $\vext(\rr) = -\sum_a \flatfrac{Z_a}{|\rr - \RR_a|}$ is the Coulomb attraction to the nuclei. 
Hamiltonian eigenvalues represent the energies of electronic states. 
Absent external perturbations, electrons are found in the ground state $\Psi _0$, the lowest-energy antisymmetric eigenfunction \citep{sakuraiModernQuantumMechanics2017} of the Hamiltonian in Eq.~\ref{eq:many-body-hamiltonian}. 

Approximating many-electron states by direct minimization of $E[\Psi]$ is a frontier computational problem with prohibitive scaling caused by the exponential growth of the electronic state space \citep{medvidovicNeuralnetworkQuantumStates2024}. 
Kohn-Sham DFT (KS-DFT) \citep{kohnSelfConsistentEquationsIncluding1965} bypasses this obstacle by modeling a surrogate system of non-interacting electrons that yields the same charge density
\begin{equation}
\label{eq:wfn-to-density}
    \rho (\rr ) = N \int \dd[3]{\rr _2} \cdots \int \dd[3]{\rr _N} \left| \Psi (\rr, \rr _2, \ldots, \rr _N) \right| ^2 \quad \in \; L^1(\mathbbm{R}^{3})
\end{equation}
as the full theory given by Eq.~\ref{eq:many-body-hamiltonian}. 
KS-DFT is stated as an optimization problem over $N$ single-electron wave functions, or \emph{orbitals}, $\psi _k \in L^2(\mathbbm{R}^{3})$. 
Following Eq.~\ref{eq:wfn-to-density}, the density $\rho (\rr) = \sum _k |\psi _k (\rr)| ^2$ simplifies to a mixture of orbital Born distributions. 
The KS orbitals are then computed as minimizers of the energy functional $E[\psi]$. 
The total \emph{approximate} KS energy functional reads $E[\psi] = T_s [\psi] + E_\text{ext} [\rho] + E_H [\rho] + E_\xc [\psi] + E_\text{BG}$ where
\begin{equation}
\label{eq:energy-terms}
\begin{gathered}
    T_s [\psi] = \frac12 \sum_{k=1}^{N} \int \dd[3]{\rr} \, \left| \grad \psi _k (\rr) \right| ^2 \; , \quad
    E_\text{ext} [\rho] = \int \dd[3]{\rr} \rho(\rr) \, \vext (\rr) \; , \quad \\
    E_H [\rho] = \frac12 \int \dd[3]{\rr} \int \dd[3]{\rr'} \frac{\rho(\rr) \rho(\rr')}{|\rr - \rr'|}
    \; , \quad \text{and} \quad
    E_\text{BG} = \sum _{a < b} \frac{Z_a Z_b}{|\RR_a - \RR_b|}
\end{gathered}
\end{equation}
are the kinetic, external, Coulomb (\textit{Hartree}) and background contributions, respectively. 
The molecular geometry $\{\RR _a, Z_a \}$ is held fixed under the standard Born-Oppenheimer approximation. 
The final energy term is the exchange-correlation (XC) energy $E_\xc$. 
The XC energy models all many-electron quantum effects, despite commonly accounting for less than $1 \%$ of the total. 
Therefore, the XC functional has been the central amortized learning problem in DFT for decades \citep{burkePerspectiveDensityFunctional2012, beckePerspectiveFiftyYears2014}. 
With the total energy objective specified, DFT can be formulated as a constrained functional optimization problem
\begin{equation}
\label{eq:ksdft}
    \min _{\psi} \; E [\psi]
    \qquad \text{subject to} \qquad
    \int \dd[3]{\rr} \, \psi _k (\rr) \psi _l (\rr) = \delta _{k l} \; .
\end{equation}

\textbf{Gaussian-type orbitals.}
For molecular systems, orbitals are expanded in a fixed basis set of $B$ \emph{Gaussian-type atomic orbital} (GTO) functions $\chi _\mu$ as $\psi_k (\rr) = \sum_{\mu=1}^{B} \mC_{\mu k} \, \chi_\mu (\rr)$. Only the expansion coefficients $\mC$ are optimized while the Gaussian remains pinned to a given atom, with widths and other shape parameters determined once per element and stored in static tables \citep{Boys1950Gaussian, ditchfield1971self, dunning1989gaussian}.
Strictly linear combinations of atomic orbitals (LCAO) have remained the default choice because of analytical integration identities and efficient optimization. 
Firstly, every integral in Eq.~\ref{eq:energy-terms} has a closed form \citep{Helgaker2000MEST}.
Secondly, the linear structure is exploited by the self-consistent field (SCF) fixed-point iteration, which is used as an energy optimization proxy. 
SCF iterations repeatedly diagonalize a $B \times B$ effective Hamiltonian matrix \citep{Martin2004ElectronicStructure} at $\mathcal{O}(B^3)$ cost and combine the resulting density with the current best ground state estimate \citep{linMathematicalIntroductionElectronic2019}.


\textbf{Two-electron integrals.}
The LCAO formalism admits pre-computation of intermediate quantities, known as \emph{density fitting}~\citep{Dunlap1979Fitting, Dunlap2000Robust}, enabling fast on-demand two-electron integral estimation for a given system. 
In DFT, the Hartree energy integral $E_H$ defined in \cref{eq:energy-terms} accounts for the classical electrostatic energy of the electronic density $\rho$. 
In terms of the fixed basis $\chi$, the density is evaluated through its connection to KS orbitals,
\begin{equation}
\label{eq:density-matrix}
    \rho (\rr) = 2 \sum_k \left| \psi _k (\rr) \right| ^2 = \sum_{\mu \nu} \mP_{\mu \nu} \, \chi_\mu (\rr) \chi_\nu (\rr) \; ,
\end{equation}
with $\mP = 2 \bar{\mC} \bar{\mC}^\top$ labeled the \emph{density matrix}. 
Thus, the Hartree integral can be written out as
\begin{equation}
\label{eq:eri}
    E_H = \frac12 \sum_ {\mu \nu \lambda \sigma} \mP_{\mu \nu} \mP_{\lambda \sigma} \int \dd[3]{\rr} \int \dd[3]{\rr '} \frac{\chi_\mu (\rr) \chi_\nu (\rr) \chi_\lambda (\rr ') \chi_\sigma (\rr ')}{|\rr - \rr '|} \; .
\end{equation}
Despite all integrals over GTO products being analytically tractable, naive evaluation of the full electron-repulsion integral (ERI) tensor with $\mathcal{O}(B^4)$ entries in \cref{eq:eri} leads to an optimization that costs $\mathcal{O}(B^4)$ in FLOPs and memory.

Density fitting introduces an auxiliary basis as a set of functions $\{\varphi _a\}_{a=1}^{N_{\mathrm{aux}}}$ with $N_{\mathrm{aux}} \ll B^2$. 
For the Hartree integral in \cref{eq:eri}, the purpose of the auxiliary basis is to replace the quadratic expansion of the density in terms of the basis $\chi$ in \cref{eq:density-matrix} with a linear expansion $\rho (\rr) \approx \sum _{a=1} ^{N_{\mathrm{aux}}} \vd _a \, \varphi _a (\rr)$ in terms of the auxiliary basis. 
This replacement has the effect of bringing down the degree of the integrand in \cref{eq:eri} from quartic to quadratic in $\chi$. 
The method is called \emph{density fitting}~\citep{Dunlap1979Fitting, Dunlap2000Robust} because the coefficients $\vd$ in the auxiliary expansion are determined by a simple least-squares linear fit.
The error of this approximation is the Coulomb-metric projection residual of the full quadratic $\rho$ in \cref{eq:density-matrix} onto the space spanned by the auxiliary basis, second order in the fit error (\cref{prop:dunlap,prop:pairproduct}).
Auxiliary bases for GTOs are human expert-designed GTOs themselves. However, the choice of the auxiliary basis in traditional density fitting introduces a new source of error that has to be independently ablated for each DFT run.

\textbf{DFT and 3DGS share an optimization problem.}
In computer vision, 3D Gaussian Splatting (3DGS) \citep{Kerbl2023GaussianSplatting} is a state-of-the-art method for 3D scene reconstruction that represents a scene as a linear combination of $M$ anisotropic Gaussian primitives $g_{\theta_\mu}$, carrying linear appearance weights $c_\mu$ (opacity and color). 
It minimizes a photometric loss $\mathcal{L}(\cdot, \cdot)$ through a differentiable rendering (\textit{splatting}) operator $\mathcal{R}_v$ against reference views $I_v$ provided as data. 
The problem
\begin{equation}
\label{eq:3dgs}
    \min_{\Theta,\, \vc} \; \mathop{\mathbb{E}} _v  \; \mathcal{L}\left( \mathcal{R}_v \left[ \textstyle\sum_{\mu=1}^{M} c_\mu \, g_{\theta_\mu} \right], \; I_v \right)
\end{equation}
is solved jointly in the nonlinear primitive parameters $\Theta = \{\theta_\mu\}_{\mu=1}^{M}$ and the linear weights $\vc$ for each scene. 
Writing DFT over the same Gaussian primitives results in a problem similar to Eq.~\ref{eq:3dgs}, replacing the photometric loss with the total energy functional $E[\cdot]$ from \cref{eq:energy-terms}. 
When using GTOs, the conceptual difference between Equations~\ref{eq:3dgs}~and~\ref{eq:ksdft} is only in the constraint. 
Therefore, DFT is positioned to inherit the entire optimization toolkit that makes Eq.~\ref{eq:3dgs} work at scale. 
While \emph{splatting} in 3DGS refers to density projections onto two-dimensional view planes, DFT optimization can be understood as sequential projections onto the $N$-dimensional subspace of occupied electronic states.
\section{Related Work} \label{sec:related}

\textbf{Floating and adaptive bases.}
Floating basis functions began with the floating spherical Gaussian orbitals of \citet{Frost1967FSGO}, using one Gaussian per electron pair and made anisotropic by ellipsoidal by \citet{Vescelius1974Ellipsoidal, Cohen1976Ellipsoidal}.
Later on, \citet{Pederson1988FloatingGaussians} minimized the DFT energy over floating Gaussians with Pulay-free forces.
However, these approaches remained inaccurate and unscalable due to the lack of a suitable density fitting scheme.
Alternatively, materials simulations pursue adaptivity through refinable discretizations instead, finite elements \citep{Schauer2013JCPFiniteElements}, wavelets and multiwavelets \citep{Genovese2008Wavelets, Jensen2017Elephant}, and gausslets \citep{White2017Gausslets}, which refine space where needed but multiply degrees of freedom rather than moving them.
Most recently, ELECTRA \citep{Elsborg2025ELECTRA} trains an equivariant network on reference data to predict a molecule's density as a mixture of floating Gaussians that warm-starts a conventional solver, and \citet{medvidović2026multipolesplatsoptimizedinverted} optimize a superposition of multipole Gaussian splats to compute the optimized effective potential and solve the inverse Kohn-Sham problem.

\textbf{Machine learning for DFT.}
Learned components cover the whole pipeline: interatomic potentials replace the calculation \citep{BehlerParrinello2007NNP, Schutt2017SchNet, Batzner2022NequIP, Batatia2022MACE}, learned exchange-correlation functionals improve the energy calculation \citep{snyderFindingDensityFunctionals2012, neuralxc, chen2020deepks, kirkpatrick2021pushing}, and neural networks predict densities, Hamiltonians, and properties in a fixed atomic-orbital basis \citep{Brockherde2017HKMap, Qiao2020OrbNet, ungerPhiSNet2021, hazraPredictingDensityMatrix2024}.
Closest to us methodologically is differentiable, direct-optimization DFT: automatic differentiation through Hartree-Fock exposing basis exponents \citep{TamayoMendoza2018AutodiffHF}, DQC \citep{KasimVinko2022DQC} and D4FT \citep{liD4FTDeepLearning2023}, direct optimization with self-diagonalization in solids \citep{Li2024Diagonalization}, and amortized networks that propose orbitals \citep{hassanSelfRefiningTrainingAmortized2025}; all of them optimize within, or lightly relax, an atom-centered basis.
Instead, we target the basis itself, so every method above composes with GS-DFT.

\section{Method} \label{sec:method}

We introduce Gaussian Splatting for Density Functional Theory (GS-DFT), a DFT solver that represents the occupied Kohn-Sham orbitals with a learnable cloud of Gaussian primitives. 
To efficiently and stably compute the two-electron terms on the fly, we introduce \emph{adaptive} density fitting and a regularized eigensolve to maintain the fundamental orthogonality constraint. 
The total energy functional is optimized using direct variational energy minimization. 
\cref{fig:ablations} shows how each ingredient influences GS-DFT scaling compared to traditional solvers. 

\paragraph{Floating orbitals as 3D Gaussian Splats.}
We replace the conventional fixed atom-centered basis with a learnable set of $M$ normalized anisotropic Gaussian primitives, or \emph{splats},
\begin{equation}
\label{eq:splat}
    g _\theta (\rr) = \det(\tfrac{\mA}{\pi})^{\nicefrac14} \; \exp \left\{ {-\tfrac12} (\rr - \vm)^\top \mA \, (\rr - \vm) \right\},
\end{equation}
with center $\vm\in \mathbb R^3$ and symmetric positive-definite precision matrix $\mA \in \mathbb R^{3\times 3}$. Following 3DGS \citep{Kerbl2023GaussianSplatting}, we parametrize $\mA$ by its eigendecomposition $\mA = \mU(\vq) \diag(e^{\vl}) \mU(\vq)^\top$, with log-eigenvalues $\vl \in \mathbbm{R}^3$ and a unit quaternion $\vq$ whose rotation $\mU(\vq) \in SO(3)$ orients the principal axes.
Every $\theta = \{\vm, \vl, \vq\}$ therefore defines a valid splat with nine free parameters. 
We index splats with $\mu = 1, \ldots , M$ in arbitrary order, denoting the set of all geometric parameters $\Theta = \{ \theta _1, \ldots, \theta _M \}$ and using $g_\mu = g_{\theta _\mu}$ as a shorthand.
The $N_\text{occ}$ occupied orbitals are obtained by linearly mixing these shared primitives, $\psi _k (\rr) = \sum _\mu \rmC _{\mu k} \, g_\mu (\rr)$, with the coefficient matrix $\rmC \in \mathbbm{R}^{M \times N_\text{occ}}$. 
Because every splat is positive, the nodal structure of the orbitals comes from nearby splats with coefficients of opposite sign (Propositions \ref{prop:gto-limit} and \ref{prop:signed}). Proposition \ref{prop:equivariance} shows that this parametrization is equivariant under $O(3)$.

\begin{figure}[t]
\captionsetup{skip=4pt}
    \centering
    \includegraphics[width=\linewidth]{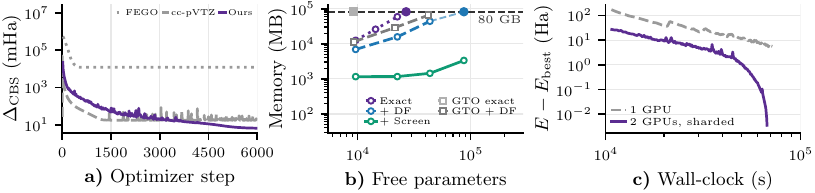}
    \caption{
        \textbf{Components of Gaussian Splatting for Density Functional Theory (GS-DFT).}
        \textit{(a) Parametrization of \cref{eq:splat}}, and the [Floating Elliptical Gaussian Orbitals, FSGO] \citep{Vescelius1974Ellipsoidal} optimization trajectories against converged GS-DFT of the same function count.
        \textit{(b) Adaptive density fitting of \cref{eq:adaptive-density-fitting}} with screened, chunked evaluation, plotted as peak memory against free parameters. 
        \textit{(c) The sharded scaling}, with respect to the best energy, against wall-clock time, on chain of 45 alanine residues ($M = 4299$).
    }
    \label{fig:ablations}
\end{figure}

\paragraph{Orthogonality constraint.} 
Because the Gaussian primitives in \cref{eq:splat} are not generally orthogonal as required by \cref{eq:ksdft}, we directly evaluate their overlap matrix $S(\Theta) \in \mathbbm{R}^{M \times M}$ as
\begin{equation}
\label{eq:ovlp}
    \rmS _{\mu \nu} (\Theta)  = \int \dd[3]{\rr} \, g_\mu (\rr) \, g_\nu (\rr)
    = \sqrt{\tfrac{\det \sqrt{\mA_\mu \mA_\nu}}{\det(\tfrac12 (\mA_\mu + \mA_\nu))}} \; \mK _{\mu \nu}
\end{equation}
with the matrix $\mK$ defined as
\begin{equation}
\label{eq:malahanobis}
    \mK_{\mu \nu} = \exp \left\{ -\tfrac12 (\vm_\mu - \vm_\nu ) ^\top \mA_\mu (\mA_\mu  + \mA_\nu) ^{-1} \mA_\nu (\vm_\mu - \vm_\nu) \right\} \; ,
\end{equation}
so that the orthonormality constraint from \cref{eq:ksdft} reduces to $\rmC^\top \rmS(\Theta) \rmC = I$. 
We explicitly enforce that constraint in the forward pass via L\"owdin orthonormalization,
\begin{equation}
\label{eq:lowdin}
    \bar{\mC} = \mC\, \mG^{\nicefrac{-1}{2}}
    \qquad \text{with} \qquad
    \mG = \mC^\top \mS(\Theta) \, \mC \, ,
\end{equation}
which satisfies $\bar{\mC}^\top \mS \bar{\mC} = I$ identically and is differentiable in both arguments. 
One backward pass propagates the orthonormality constraint into the basis gradients. 
We are left with an unconstrained minimization over $\{ \mC, \vm, \vl, \vq \}$, resembling \cref{eq:3dgs}.
We compute the L\"owdin retraction through a \emph{regularized eigendecomposition} $\mG = \mU \, \diag(\lambda) \, \mU ^\top$ as
\begin{equation}
\label{eq:regeigh}
    \mG^{\nicefrac{-1}{2}} = \mU \, \diag \left( f(\lambda) \right)\, \mU^\top,
    \qquad \text{where} \qquad
    f(\lambda) = \max \left(\lambda,\; \varepsilon \, \lambda_{\max} \right)^{\nicefrac{-1}{2}} \; .
\end{equation}
Bare gradients of an eigenvalue decomposition are proportional to inverse pairwise eigenvalue differences $(\lambda _i - \lambda _j)^{-1}$, causing numerical instabilities for degenerate matrices. 
Because of that, we implement a regularized backward pass in which the eigenvalue differences $|\lambda_i - \lambda_j| \leq \delta \lambda_{\max}$ are replaced by $\tfrac12 ( f'(\lambda_i) + f'(\lambda_j) )$. 
Pairs below the floor receive no gradient. 
Away from degeneracies, the regularization has no effect at convergence.

\begin{algorithm}[b]
\caption{Gaussian Splatting for Density Functional Theory (GS-DFT).}
\label{alg:gsdft}
\begin{algorithmic}[1]
\Require Nuclei $\{\RR_a, Z_a\}$; cloud size $M$; steps $T$; refresh period $K$; threshold $\tau$; functional $E_\xc$
\State $\Theta \gets \mathrm{ChemInit}(\RR, Z)$;\quad $\mC \gets \text{minAO}(\Theta)$ \Comment{\cref{app:recipe}}
\For{$r = 1, \dots, \lceil T/K \rceil$}
    \State $\mathcal{P} \gets \mathrm{stopgrad}\big[\{(\mu , \nu) : |S_{\mu \nu}| > \tau\}\big]$;\quad $\{h_i\}_{i=1}^{N_{\mathrm{aux}}} \gets \mathrm{stopgrad}\big[\mathcal{A}(\Theta)\big]$ \Comment{$\mathcal{O}(M N_{\text{aux}})$}
    \State $\mL \gets \mathrm{stopgrad}\big[\mathrm{chol}(\mQ + \lambda \mathbbm{1})\big]$ \Comment{$\mathcal{O}(N_{\mathrm{aux}}^3 / K)$}
    \For{$t = 1, \dots, K$}
        \State $\mP \gets 2\, \mC \, \mathrm{regeig}\big(\mC^\top \mS(\Theta)\, \mC\big) \mC^\top$ \Comment{$\mathcal{O}(M^2 N_{\mathrm{occ}} + N_{\mathrm{occ}}^3)$}
        \State $\vd \gets \mL^{-\top} \mL^{-1} \, \vt(\Theta, \mP)$ \Comment{$\mathcal{O}(M k N_{\mathrm{aux}} + N_{\mathrm{aux}}^2)$}
        \State $E \gets T_s[\mP] + E_{\mathrm{ext}}[\mP] + \tilde{E}_H[\vd] + E_\xc[\mP]$ \Comment{$\mathcal{O}(N_g M N_{\mathrm{occ}})$}
        \State $(\Theta, \mC) \gets \mathrm{Adam}\big((\Theta, \mC),\; \nabla_{\Theta,\mC}\, E\big)$
    \EndFor
\EndFor
\State \Return $(\Theta, \mC)$;\quad $\mF_a = -\,\partial E / \partial \RR_a$ \Comment{\cref{prop:forces}}
\end{algorithmic}
\end{algorithm}

\paragraph{Adaptive density fitting.}
Efficient evaluation of the variational energy objective requires fast evaluation of two-electron integrals. 
The Hartree energy of \cref{eq:eri} couples every pair of splat products, so evaluating it exactly costs $O(M^4)$.
For Gaussian splats, we extend the static density fitting protocol described in \cref{sec:background} with an adaptive auxiliary basis built from the splats themselves.
Because the density in \cref{eq:density-matrix} is a weighted sum of the splat pair-products, we exploit the fact that each product is a new Gaussian by sub-selecting $N_\text{aux} \ll M^2$ of these products to be our auxiliary basis.
Specifically, a product of Gaussian splats $g_\mu$ and $g_\nu$ is an \emph{auxiliary} Gaussian $h_{\mu\nu}$ defined by 
\begin{equation}
    g_\mu (\rr) \, g_\nu (\rr) = \mK _{\mu \nu} \, h_{\mu \nu} (\rr),
\end{equation}
where $h_{\mu \nu}$ has precision $\mA_{\mu \nu} = \mA_\mu + \mA_\nu$ and center $\vm_{\mu \nu} = (\mA_\mu + \mA_\nu)^{-1}(\mA_\mu \vm_\mu + \mA_\nu \vm_\nu).$
The prefactor $\mK _{\mu \nu}$ is identical to the one in \cref{eq:malahanobis} and directly determines which products matter in the expansion in \cref{eq:density-matrix}. 
As a result, keeping only the pairs with $| \mS_{\mu \nu} | > \tau$ is a consistent way of auxiliary basis \emph{screening}, controlled by the hyperparameter $\tau > 0$. 
For $\tau = 0$, we retain all splat pairs and return to evaluating all $\mathcal{O}(M^4)$ terms in \cref{eq:eri}. 
This screening method is related to Schwarz screening \citep{Haser1989DirectSCF}, which is commonly used in large DFT runs to exploit locality. 
For GS-DFT, it falls out naturally. 
We find that each splat has $k = \mathcal{O}(1)$ overlap neighbors, resulting in $N_\text{aux} = \mathcal{O}(M)$ auxiliary Gaussians in practical calculations.

After specifying the auxiliary basis $h_a$, it is frozen for $K$ optimization steps and detached from the computation graph. 
At each of these $K$ steps, the density is expanded as $\rho (\rr) = \sum _{a=1} ^{N_\text{aux}} \vd _a h_a (\rr)$ using the least-squares linear fit onto the density form in \cref{eq:density-matrix}, where the auxiliary index $a=(\mu, \nu)$ runs over splat pairs. 
The fit is analytic, resulting in $\vd = \mQ^{-1} \vt$ with closed-form integrals
\begin{equation}
\label{eq:adaptive-density-fitting}
    \mQ _{a b} = \int \dd[3]{\rr} \int \dd[3]{\rr'} \frac{h_a (\rr) h_b (\rr')}{|\rr - \rr'|}
    \quad \text{and} \quad
    \vt _a = \sum _{\mu \nu} \mP _{\mu \nu} \int \dd[3]{\rr} \int \dd[3]{\rr'} \frac{g_\mu (\rr) g_\nu (\rr) h_a (\rr')}{|\rr - \rr'|}
\end{equation}
available because $h_a$ are simple Gaussians themselves. After $K$ gradient updates to primitive Gaussians $g_\mu$, the auxiliary basis is refreshed and the procedure repeats.

\paragraph{Natively differentiable simulation.} 
Every step in the forward map from $(\mC, \Theta)$ to the energy $E$ is differentiable. 
This allows us to replace the SCF algorithm with end-to-end optimization by gradient descent or any of its modifications, e.g., Adam \citep{kingmaAdamMethodStochastic2015}. 
The splat parametrization being equivariant under O(3) (\cref{prop:equivariance}), the regularized eigensolve keeping gradients finite at degenerate eigenvalues (\cref{prop:regeigh}), and the adaptive density fitting bounding the fitting error as the cloud moves (\cref{prop:pairproduct}) all keep this optimization well-behaved. \Cref{alg:gsdft} gives the full training loop. The auxiliary metric of \cref{eq:adaptive-density-fitting} is factorized once per refresh and reused through $K$ triangular solves, making the dominant cubic operation in the method scale as $\mathcal{O}(N_{\mathrm{aux}}^3/K)$ amortized.
We exploit the native differentiability of our simulation in computing forces after the given GS-DFT run has converged. 
Because no basis function is anchored to an atom, the converged atomic force can be evaluated by direct automatic differentiation. 
Therefore, the Pulay correction of atom-centered bases \citep{Pulay1969Forces} is absent, resulting in a cheaper operation.
\section{Experiments} \label{sec:experiments}
\vspace{-0.7\baselineskip}
In this section, we show three advantages of GS-DFT: accuracy and memory scaling for many tested XC functionals and molecules, diffuse physics representation in stretched molecular geometries, and access to frontier systems using common hardware.

\begin{figure}[t]
\captionsetup{skip=4pt}
\centering
\includegraphics[width=\linewidth]{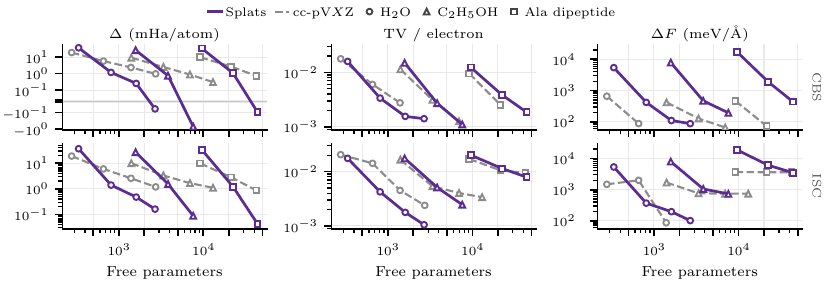}
\caption{
    \textbf{Observables against free parameters.}
    Columns are the energy residual per atom, the total variation of the density per electron, and the maximum deviation of the nuclear forces. \textbf{Top row:} measured against the CBS limit. \textbf{Bottom row:} measured against the ISC limit.
}
\label{fig:observables}
\vspace{-0.5\baselineskip}
\end{figure}

\subsection{Accuracy and functional coverage}
\label{sec:exp-observables}
\vspace{-0.5\baselineskip}

\paragraph{Accuracy.}
After optimizing the DFT energy loss, the orbitals are used to make predictions about the quantum-mechanical properties of the molecule. 
At $\grad _{\Theta} E \approx 0$, the energy errors are second-order in small parameter shifts. 
However, all other downstream observables are first-order sensitive to these small parameter variations (\cref{prop:forces}), requiring numerically precise optimization. 
There are two main quantities in this category: the electronic density $\rho$ itself and the atomic forces, given by the Hellmann-Feynman derivative $\mF _a = -\pdv*{E}{\RR_a}$ at convergence.
To show convergence towards a physical density, we optimize a splat cloud on three benchmark molecules of increasing complexity: water (\ce{H2O}), ethanol (\ce{C2H5OH}), and alanine dipeptide (\ce{C6H12N2O2}). 
We track the three key observables: energy, density, and nuclear forces (\cref{fig:observables}).

In \cref{fig:observables}, we show observables relative to extrapolated references as a function of the number of free parameters in GS-DFT, supplemented with GTO results at a matched parameter count.
Because DFT is solved as a data-free optimization, unbiased external solutions are difficult to obtain, especially at scale.
In \cref{fig:observables}, we use two types of references: the GTO complete basis set (CBS) limit and the infinite splat cloud (ISC) limit.
For any observable $O_P$ computed with a fixed budget of $P$ free parameters, the CBS and ISC limits are independently defined by $P \to \infty$.
In practice, we assume that $O_P - O_\infty \propto P^{-\alpha}$ and compute $O_\infty$ and the scaling exponent $\alpha$ by simple regression.
We find that GS-DFT consistently produces lower energies in the high-parameter regime in \cref{fig:observables}. All reported energies in this work are evaluated with full ERIs, eliminating density fitting errors. Density predictions show reliable convergence and similar or better quality than GTO references, as measured by the dimensionless total variation $\mathrm{TV} = \int \dd[3]{\rr} |\rho_P (\rr) - \rho _\infty (\rr) |$.
For force predictions, the CBS-ISC distinction is required because the absolute ordering of errors on \cref{fig:observables} may depend on the chosen baseline.
However, convergence rate $\alpha$ estimates remain similar.
A third, independently discretized and converged accurate method would resolve absolute force accuracy.

\vspace{-0.5\baselineskip}
\label{sec:xc-coverage}
\paragraph{Functional coverage.} 
The splat representation of KS orbitals is independent of the choice of the XC functional.
We test this on ethanol across four rungs of functional complexity: PBE~\citep{perdewGeneralizedGradientApproximation1996} (a generalized gradient approximation, GGA), r$^2$SCAN~\citep{furnessAccurateNumericallyEfficient2020} (a meta-GGA), the global hybrid B3LYP~\citep{beckeDensityfunctionalThermochemistryIII1993}, and the range-separated hybrid $\omega$B97M-V~\citep{mardirossianWB97MVCombinatoriallyOptimized2016}.
In \cref{tab:xc-coverage}, we compare a GTO reference, cc-pVTZ basis, to GS-DFT at matched function count, $N_{\mathrm{ao}}=M = 174$.
Energies achieved by GS-DFT sit closer to the CBS limit than the GTO reference for all four functionals. 
For the widely-used PBE and r$^2$SCAN functionals, we achieve optimized energy values in the range of the cc-pV5Z GTO basis, using $29\%$ of its functions. 
In addition, reliable convergence and predictions across a wide variety of XC approximations suggest that GS-DFT predicts physical densities instead of exploiting numerical pathologies in the non-linear optimization with adaptive density fitting.

\begin{table}[t]
\vspace{-0.5\baselineskip}
\captionsetup{skip=4pt}
\centering
\footnotesize
\setlength{\tabcolsep}{4pt}
\begin{minipage}[t]{0.366\linewidth}
\centering
\vspace{0pt}
\caption{\textbf{Functional Coverage.} Ethanol's $\mathbf{\Delta}_{\text{CBS}}$ with $M = n_{\mathrm{ao}} = 174$. $\% \mathbf{M}$ denotes the percentage of $\mathbf M$ used by our method.}
\label{tab:xc-coverage}
\vspace{-0.2\baselineskip}
\resizebox{\linewidth}{!}{
\begin{tabular}{lrrr}
\toprule
& \multicolumn{2}{c}{$\mathbf{\Delta}_{\text{CBS}}\downarrow$} & \\
\cmidrule(lr){2-3}
\textbf{Functional} & GS-DFT & cc-pVTZ & \textbf{\%M} \\
\midrule
PBE                 & $\mathbf{2.3}$  & $22.8$ & $<29$\\
B3LYP               & $\mathbf{16.9}$ & $21.3$ & $77$\\
r$^2$SCAN           & $\mathbf{1.8}$  & $19.9$ & $<29$\\
$\omega$B97M-V      & $\mathbf{10.2}$ & $19.8$ & $61$\\
\bottomrule
\end{tabular}
}
\end{minipage}\hfill
\begin{minipage}[t]{0.61\linewidth}
\centering
\vspace{0pt}
\caption{
\textbf{Ionization potential at matched function count.}
Under Hartree--Fock, $-\varepsilon_{\text{HOMO}}$ \emph{is} the ionization potential, in eV. GS-DFT uses $\textbf{M} = n_{\mathrm{ao}}$ of def2-QZVPP; the references are the CCSD(T) and experimental values of GW100.
}
\label{tab:gaps}
\vspace{-0.2\baselineskip}
\resizebox{\linewidth}{!}{
\begin{tabular}{lrrrrr}
\toprule
& & \multicolumn{2}{c}{$-\bm{\varepsilon}_{\text{HOMO}}$} & \multicolumn{2}{c}{\textbf{Reference (GW100)}} \\
\cmidrule(lr){3-4}\cmidrule(lr){5-6}
\textbf{System} & \textbf{M}$\:= n_{\mathrm{ao}}$ & GS-DFT & def2-QZVPP & CCSD(T) & Exp. \\
\midrule
\ce{Li2}    & $70$  & $4.950$  & $4.950$  & $5.266$  & $4.73$  \\
\ce{H2O}    & $117$ & $13.878$ & $13.870$ & $12.565$ & $12.62$ \\
\ce{C2H5OH} & $351$ & $12.042$ & $12.040$ & $10.685$ & $10.64$ \\
\ce{C6H6}   & $522$ & $9.148$  & $9.149$  & $9.29$   & $9.24$  \\
\bottomrule
\end{tabular}
}
\end{minipage}
\vspace{-0.5\baselineskip}
\end{table}

\vspace{-0.5\baselineskip}
\paragraph{Ionization Potential.}
The negative of the highest occupied orbital energy equals the ionization potential (IP) in the Hartree-Fock (HF) approximation \citep{koopmansUeberZuordnungWellenfunktionen1934}.
We treat HF as a special case of DFT, with an energy functional in \cref{eq:energy-terms} set to include exchange and ignore correlation.
We evaluate the largest orbital eigenvalue $\bm{\varepsilon}_{\text{HOMO}}$ against the GW100 \citep{vanSetten2015GW100}, a high-precision benchmark of one hundred closed-shell molecules whose IPs are computed with coupled cluster with singles, doubles, perturbative triples (CCSD(T)). 
We further add the measured experimental values to compare against \citep{maggio2017GW100PlaneWave}.
At matched function-count, GS-DFT reproduces the GTO references across all systems, as shown in \cref{tab:gaps}.
The remaining error of both GS-DFT and GTOs with respect to CCSD(T) and experimental values is attributed to orbital relaxation and the absence of electronic correlation effects.
In addition to the energy, the density, and the forces, the IP serves as an independent numerical test against high-precision numerics and measurements.

\vspace{-0.2\baselineskip}
\subsection{Splats as a physical representation}
\label{sec:exp-fidelity}
\vspace{-0.5\baselineskip}

Narrow conventional basis sets anchored at nuclei can lack the support needed to describe density that extends far from them.
In bare anions, the electron density expands away from the nuclei because of the extra electron-electron repulsion.
Traditionally, this problem in anions and stretched bonds has been mitigated by \emph{augmenting} basis sets with extra diffuse fixed Gaussians, e.g., the aug-cc-pV$\zeta$Z families \citep{Kendall1992AugBasis}.
This augmentation needs to be triggered by human experts before a calculation is started, as GTOs cannot automatically expand their support during optimization.
Splats have adaptive support and do not require expert intervention to represent diffuse systems.
In this subsection, we test GS-DFT on anions with long density tails and stretched chemical bonds.

\begin{figure}[b]
\vspace{-0.5\baselineskip}
\captionsetup{skip=4pt}
\centering
\includegraphics[width=\linewidth]{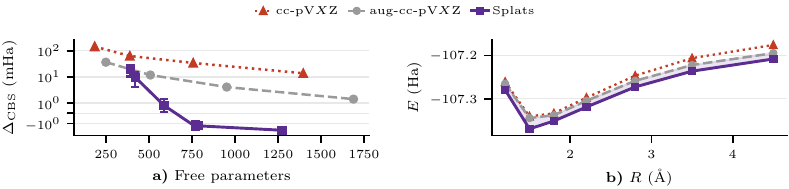}
\caption{
    \textbf{(a) Fluoride ion (\ce{F-}) basis-set convergence.} $\mathbf{\Delta}_{\text{CBS}} = E - E_{\text{CBS}}$ is the residual from the CBS limit in milli-Hartree (mHa).
    \textbf{(b) \ce{LiF} dissociation at matched count.}
    The margin over the unaugmented basis grows as the bond stretches, while the margin over the augmented basis stays flat.
}
\label{fig:anions}
\end{figure}

The left plot in \cref{fig:anions} shows how the splats converge to better energy values than the augmented baseline, cc-pVTZ. 
In GS-DFT computations in \cref{fig:anions}, we use the same number of functions as the unaugmented basis set.
For stretched bond geometries, electronic densities can maintain nonzero values in the interstitial region. 
In those cases, GTOs cannot extend their support to faithfully represent relevant delocalized orbitals.
In the right plot in \cref{fig:anions}, we show this behavior by stretching lithium fluoride (\ce{LiF}). 
We show that splats with adaptive support achieve lower energies compared to the GTO calculations with both cc-pVTZ and the manually augmented aug-cc-pVTZ variant, while using the unaugmented number of basis functions. 
Additionally, direct energy optimization in GS-DFT has shown robust convergence properties for all considered geometries, while SCF has displayed well-known convergence problems for highly deformed molecules.
See \cref{app:dissociation} for details.

\begin{figure}[t]
\captionsetup{skip=4pt}
\centering
\includegraphics[width=\linewidth]{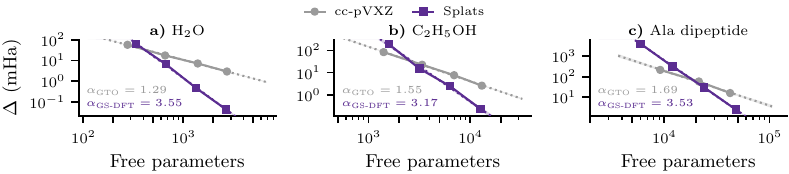}
\caption{
    \textbf{Accuracy scaling law for GS-DFT}. GTO rungs against the extrapolated complete basis set (CBS) energy, and the splat clouds against the infinite splat cloud (ISC) energy. Limits are obtained by fitting $\Delta = E_P - E_\infty \propto P^{-\alpha}$ against four runs with $P$ parameters. 
}
\label{fig:isoparams}
\vspace{-0.6\baselineskip} 
\end{figure}

\subsection{Accuracy and memory scaling}
\label{sec:exp-scaling}
\vspace{-0.5\baselineskip}

Finally, we examine two notions of scaling for GS-DFT: how rapidly basis-set errors decrease with representation expressivity and how accuracy and the memory footprint grow with molecule size. 

\vspace{-0.7\baselineskip}
\paragraph{Accuracy scaling law.}
We study the scaling of the energy error with the number of free parameters $P$ in \cref{fig:isoparams}.
We define the energy error and its scaling power law as $\Delta = E - E_\infty \propto P^{-\alpha}$, independently for GTOs and splats.
Our analysis, in \cref{fig:isoparams}, indicates that GS-DFT can achieve $2.3 \times$ sharper scaling exponents on average.
Across all systems studied in this work, we observe similar scaling improvements. 
This observation supports the intuition that using splats as an \emph{adaptive} basis is more parameter efficient because scaling is determined by the number of basis functions, regardless of how each function is parametrized. 
Therefore, packing more expressivity into each function is an efficient way to gain accuracy with minimal scalability cost.

\vspace{-0.7\baselineskip}
\paragraph{Memory scaling.} 
Scaling DFT beyond hundreds of atoms requires careful optimization, parallelization, and high-performance computing (HPC) using linear-scaling methods \citep{Goedecker1999LinearScaling, Bowler2012LinearScaling, Kitaura1999FMO} that trade off accuracy for scale \citep{Prodan2005Nearsightedness}. 
To show memory scalability, we run five systems from the Fragment Molecular Orbital (FMO) database \citep{takayaFMODBWorldFirst2021, watanabeAutoFMOProtocol2019}, and compare our GS-DFT results with their references, calculated with the linear-scaling ABINIT-MP package \citep{tanakaElectronCorrelatedFMO2014} implementing Hartree-Fock (HF) with 6-31g(d) basis using state-of-the-art HPC resources \citep{doiLargeScaleFMOMP2Spike2025}.
In contrast, we tackle the same large systems with the PBE functional, using a fraction of the memory estimated for the density-fitted cc-pVTZ GTO implementation and inter-node communication, see \cref{tab:size-ladder}.
GS-DFT's favorable memory scaling and data-parallel direct minimization allow us to take advantage of modern hardware and codes used in distributed AI training.

\begin{table}[ht]
\vspace{-0.6\baselineskip}
\centering
\caption{
    \textbf{FMO systems.}
    Energy in Ha, memory in GB. One $4\times$H200 GPU node has 564~GB in total.
}
\label{tab:size-ladder}
\vspace{-0.7\baselineskip}
\resizebox{\linewidth}{!}{
\begin{tabular}{lrrrrrrcccc}
\toprule
 & & & \multicolumn{2}{c}{\textbf{Peak Mem.}} & & & \multicolumn{2}{c}{\textbf{Fits GPU node}} & \multicolumn{2}{c}{\textbf{Simple forces}} \\
\cmidrule(lr){4-5}\cmidrule(lr){8-9}\cmidrule(lr){10-11}
\textbf{System} & \textbf{Atoms} & \textbf{M} & GS-DFT & GTO & $\mathbf{E}_{\text{GS-DFT}}$ & $\mathbf{E}_{\text{HF}}$ & GS-DFT & GTO & GS-DFT & GTO \\
\midrule
Trp-cage            & 304   & 6,720  & $25.2$  & $640^\ast$   & $-7{,}478.3$  & $-7{,}439.0$  & \cmark & \xmark & \cmark & \xmark \\
Defensin            & 465   & 10,390 & $39.6$  & $1{,}060^\ast$ & $-13{,}685.4$ & $-13{,}624.0$ & \cmark & \xmark & \cmark & \xmark \\
Insulin             & 947   & 21,026 & $92.8$  & $2{,}400^\ast$ & $-25{,}523.2$ & $-25{,}404.2$ & \cmark & \xmark & \cmark & \xmark \\
$\beta$-spectrin CH & 1,755 & 38,794 & $212.0$ & $4{,}850^\ast$ & $-44{,}255.4$ & $-44{,}045.7$ & \cmark & \xmark & \cmark & \xmark \\
3IFU (chain A)      & 2,742 & 60,308 & $479.6$ & $8{,}070^\ast$ & $-69{,}823.2$ & $-69{,}631.4$ & \cmark & \xmark & \cmark & \xmark \\
\bottomrule
\end{tabular}}
\par\vspace{0.2em}
{\footnotesize$^\ast$Density-fitted cc-pVTZ, extrapolated from measured runs.}
\vspace{-0.5\baselineskip}
\end{table}
\section{Conclusion}
\label{sec:conclusion}
\vspace{-0.5\baselineskip}
Every DFT calculation is currently limited by its fixed basis set.
We propose GS-DFT, expanding the orbitals in a cloud of floating Gaussians and optimizing their parameters by gradient descent on the variational energy loss.
This enables an \emph{adaptive} representation that achieves lower energies, delivers correct observables, and automatically captures diffuse physics.
Its memory scaling brings frontier quantum systems within reach of modern hardware accelerators.
However, GS-DFT requires more optimization steps and thus more time to converge.
We see this constraint as an opportunity for future research, e.g., physics-informed natural gradient optimization.
Numerical cost is also amortizable by recent methods at the interface of foundation models and 3DGS, taking advantage of a uniquely parameter- and memory-efficient representation of quantum physics offered by GS-DFT.

\ificlrfinal
\subsubsection*{Acknowledgments}

A.G-C. acknowledges support from the Secretaría de Ciencia, Humanidades, Tecnología e Innovación (SECIHTI) under a doctoral fellowship CVU 2014613.
C.Z. acknowledges support from the U.S. National Science Foundation under grant OAC 2118201.
M.S. was supported by the IVADO 2025 Postdoctoral Research Funding Program.
M.M. acknowledges support from a SwissAI grant from the Swiss National Supercomputing Centre (CSCS) under project ID a164.
\textbf{}
This research was enabled in part by compute resources provided by Mila (\href{https://mila.quebec/en}{mila.quebec}) and the Digital Research Alliance of Canada (\href{https://www.alliancecan.ca/en}{alliancecan.ca}).
\fi

\subsubsection*{Reproducibility Statement}
The code is available in \coderepokind{} at \coderepo.
It contains the implementation, the configuration and launch script of every experiment, the data every figure is built from, and every molecular geometry except the FMO proteins, which are available from FMODB \citep{takayaFMODBWorldFirst2021}.
The proofs of the theoretical results are in \cref{appendix:proofs}.
\Cref{appendix:implementation} gives the integrals, the training settings (\cref{tab:recipe}), the evaluation protocol, and the hardware and software versions.

\subsubsection*{AI Use Statement}
We used large language models to assist with writing and debugging code, with formatting the manuscript, and with finding relevant literature.
The authors designed the method and the experiments, checked the code, the results and every cited work, and take full responsibility for the content of the paper.

\subsubsection*{Impact Statement}
Gaussian Splatting for Density Functional Theory is a solver for \textit{ab-initio} molecule simulation.
Its consequences are those of better DFT tooling: efficient quantum-mechanical data generation with a memory footprint that keeps larger systems on common hardware.
We foresee no societal risks specific to this method beyond those of computational chemistry at large.

\bibliography{bibliography, bibliography-matija}
\bibliographystyle{iclr2027_conference}

\clearpage
\appendix

\begin{center}
{\Large\bfseries \papertitle\\[0.2em] (Supplementary Material)\par}
\end{center}
\vspace{0.5em}
\etocdepthtag.toc{appendix}
\etocsettagdepth{mainmatter}{none}
\etocsettagdepth{appendix}{subsection}
\etocsettocstyle{}{}
\tableofcontents
\clearpage

\section{Theoretical properties of GS-DFT}\label{appendix:proofs}

GS-DFT replaces a fixed, atom-centered basis with floating 3D Gaussians whose parameters the optimizer is free to move.
The extra degrees of freedom are what make the representation expressive, trading off the theoretical guarantees that come for free with an atom-centered basis.
This appendix restores them one at a time.
Specifically, we provide the theoretical frame for the following questions central to DFT calculations:
\begin{itemize}
    \item \textbf{\cref{app:proofs-symmetry}}: Does a representation that is no longer attached to the atoms still respect the symmetries of the molecule?
    \item \textbf{\cref{app:proofs-forces}}: Are the forces obtained by differentiating the energy correct, and how wrong are they when the optimization stops early?
    \item \textbf{\cref{app:proofs-df}}: What does the adaptive density fit cost in accuracy, and can the optimizer exploit it?
    \item \textbf{\cref{app:proofs-regeigh}} What does the regularized eigensolve change about the final answer?
\end{itemize}
Each subsection explains why the question matters, introduces the objects needed to answer it, and closes with what the implications of its results.

\subsection{Symmetry of the representation}\label{app:proofs-symmetry}

The energy of a molecule does not depend on where it sits or how it is oriented.
For an atom-centered basis, this invariance holds automatically because every basis function is attached to a nucleus and moves with it.
A splat cloud is not attached to anything. Its centers, orientations, and scales are free parameters, so invariance is no longer automatic.
Energy invariance holds only if every rigid motion of the molecule can be absorbed by a transformation of the parameters that leaves the energy unchanged.
\Cref{prop:equivariance} shows that such a transformation exists, is explicit, and acts on the parameterization of \cref{eq:splat} as a group action.
The symmetry is, therefore, a property of the naturally equivariant model architecture, not something the optimizer has to learn.

Two facts about the parameterization are needed.
First, unit quaternions $\vq \in S^3$ form a group under the Hamilton product $\otimes$, and the map $\mU : S^3 \to SO(3)$ that sends a quaternion to its rotation matrix is a surjective homomorphism,
\begin{equation*}
    \mU(\vq_1 \otimes \vq_2) = \mU(\vq_1)\, \mU(\vq_2) \; ,
\end{equation*}
whose kernel is $\{\pm 1\}$.
The sign of $\vq$ is therefore a discrete gauge: $\vq$ and $-\vq$ define the same splat.
Second, every $(\vq, \vl) \in S^3 \times \mathbbm{R}^3$ yields a symmetric positive-definite precision matrix
\begin{equation*}
    \mA = \mU(\vq)\, \diag(e^{\vl})\, \mU(\vq)^\top \; ,
\end{equation*}
with the closed forms
\begin{align*}
    \mA^{-1} &= \mU \diag(e^{-\vl})\, \mU^\top \; , \\
    \det \mA &= e^{\sum_i \ell_i} \; , \\
    \lambda_{\min}(\mA) &= e^{\min_i \ell_i} \; .
\end{align*}
The optimization is therefore unconstrained, and the normalization in \cref{eq:splat} and the scale of every splat are available without a matrix decomposition.

\begin{proposition}[Equivariance of the splat parameterization]\label{prop:equivariance}
Let $\phi(\rr) = \mY \rr + \vb$ be a rigid motion with $\mY \in O(3)$ and $\vb \in \mathbbm{R}^3$.
Choose a unit quaternion $\vq_\mY$ with $\mU(\vq_\mY) = \mY$ if $\det \mY = 1$, and $\mU(\vq_\mY) = -\mY$ if $\det \mY = -1$.
Map each splat as
\begin{equation*}
    \theta_\mu = (\vm_\mu,\; \vq_\mu,\; \vl_\mu)
    \;\longmapsto\;
    \theta'_\mu = (\mY \vm_\mu + \vb,\; \vq_\mY \otimes \vq_\mu,\; \vl_\mu),
\end{equation*}
keep the coefficients $\mC$, and move the nuclei to $\phi(\RR_a)$.
Then, for every splat $\mu$ and every orbital $k$,
\begin{align*}
    g_{\theta'_\mu} \circ \phi &= g_{\theta_\mu} , \\
    \psi'_k \circ \phi &= \psi_k , \\
    \rho' \circ \phi &= \rho ,
\end{align*}
and every term of \cref{eq:energy-terms} takes the same value, as does the density-fitted Hartree energy of \cref{eq:adaptive-density-fitting}.
\end{proposition}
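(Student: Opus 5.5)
The plan is to establish the three pointwise identities in order, splat, then orbital, then density, and then derive every energy statement from them by a change of variables. For the splat identity, note that $\mU(\vq_\mY \otimes \vq_\mu) = \mU(\vq_\mY)\mU(\vq_\mu) = \pm \mY \mU(\vq_\mu)$ by the homomorphism property. The sign is $+$ when $\det \mY = 1$ and $-$ otherwise.

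Since the precision matrix is quadratic in $\mU$, the sign cancels in both cases:
\begin{equation*}
\mA'_\mu = \mY \mU(\vq_\mu)\diag(e^{\vl_\mu})\mU(\vq_\mu)^\top \mY^\top = \mY \mA_\mu \mY^\top .
\end{equation*}
This is the one point where the improper case needs care. Choosing $\vq_\mY$ to represent $-\mY$ is exactly what keeps $\mU$ inside $SO(3)$ while still producing the conjugation by $\mY$. The normalization is unchanged, because $\det \mA'_\mu = \det \mA_\mu$ follows from $\det(\mY)^2 = 1$. For the exponent, substitute $\phi(\rr) - \vm'_\mu = \mY(\rr - \vm_\mu)$ and use $\mY^\top \mY = I$. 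This gives $(\rr-\vm_\mu)^\top \mY^\top \mY \mA_\mu \mY^\top \mY (\rr-\vm_\mu)$, which is the original quadratic form, so $g_{\theta'_\mu}\circ\phi = g_{\theta_\mu}$. Because the coefficients $\mC$ are kept, linearity gives $\psi'_k\circ\phi = \psi_k$. For $\rho$, I would observe that the overlap matrix $\mS$ is invariant, proved in the next paragraph. Hence $\mG$, its regularized inverse square root, $\bar\mC$ and $\mP$ are all unchanged, and $\rho'\circ\phi = \rho$ follows from \cref{eq:density-matrix}.

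Every energy term is then handled by the substitution $\rr = \phi(\vs)$, whose Jacobian has $|\det \mY| = 1$. The overlap $\mS$ and the density integrals are invariant immediately. For $T_s$, the chain rule gives $\grad(\psi'_k\circ\phi) = \mY^\top (\grad\psi'_k)\circ\phi$, and orthogonality preserves the norm. For $E_\text{ext}$, use $|\phi(\vs) - \phi(\RR_a)| = |\vs - \RR_a|$, so $\vext'\circ\phi = \vext$. The same distance identity applied to both arguments shows that $E_H$ and the Coulomb kernel $|\rr-\rr'|^{-1}$ are invariant, and $E_\text{BG}$ is invariant because nuclear distances are preserved. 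For $E_\xc$, local and semilocal functionals depend on $\rho$, $|\grad\rho|^2$, $\tau$ and similar scalars, all of which are rotation-invariant. Hybrid and range-separated exchange integrals are invariant because they depend only on $\mP$ and on distances.

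For the density-fitted Hartree energy, the main obstacle is the auxiliary basis. It must transform covariantly, and the screening set must not change. I would compute that $\mK'_{\mu\nu} = \mK_{\mu\nu}$, since the Mahalanobis form is invariant under the same conjugation. Then the product Gaussians satisfy $h'_{\mu\nu}\circ\phi = h_{\mu\nu}$: the center transforms as $\vm'_{\mu\nu} = \mY\vm_{\mu\nu}+\vb$ and the precision as $\mY\mA_{\mu\nu}\mY^\top$. Alternatively, this follows from $g'_\mu g'_\nu = \mK'_{\mu\nu} h'_{\mu\nu}$ after composing with $\phi$. Invariance of $\mS$ implies that the screened set $\{|\mS_{\mu\nu}|>\tau\}$ is unchanged. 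The same change of variables then gives $\mQ' = \mQ$ and $\vt' = \vt$, so $\vd' = \vd$ and $\tilde E_H' = \tilde E_H$. I would also note that the ridge term $\lambda\mathbbm{1}$ in the Cholesky factorization is unaffected, since it does not depend on the geometry.
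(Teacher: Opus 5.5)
Your proposal is correct and follows essentially the same route as the paper's proof. Both arguments use the homomorphism property, with the sign cancelling because the precision is quadratic in $\mU$, and both obtain invariance of $\mS$, and hence of $\mG$, $\bar{\mC}$ and $\mP$, from a unit-Jacobian change of variables; the energy is then handled term by term, with the auxiliary pair products transforming like the splats. Your explicit checks that $\mK_{\mu\nu}$, the product centers and the screened pair set are unchanged fill in details the paper leaves implicit, without changing the argument.
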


\begin{proof}
The homomorphism property gives
\begin{align*}
    \mU(\vq_\mY \otimes \vq_\mu) &= \mY\, \mU(\vq_\mu) && \text{if } \det \mY = 1, \\
    \mU(\vq_\mY \otimes \vq_\mu) &= -\mY\, \mU(\vq_\mu) && \text{if } \det \mY = -1.
\end{align*}
In both cases the sign cancels in the quadratic form, so
\begin{align*}
    \mA'_\mu &= \mY\, \mU(\vq_\mu)\, \diag(e^{\vl_\mu})\, \mU(\vq_\mu)^\top \mY^\top \\
             &= \mY \mA_\mu \mY^\top ,
\end{align*}
and $\det \mA'_\mu = \det \mA_\mu$ because $(\det \mY)^2 = 1$.
The normalization of \cref{eq:splat} is unchanged, and for every $\rr$,
\begin{align*}
    (\phi(\rr) - \vm'_\mu)^\top \mA'_\mu\, (\phi(\rr) - \vm'_\mu)
    &= (\rr - \vm_\mu)^\top \mY^\top \mY \mA_\mu \mY^\top \mY\, (\rr - \vm_\mu) \\
    &= (\rr - \vm_\mu)^\top \mA_\mu\, (\rr - \vm_\mu),
\end{align*}
so $g_{\theta'_\mu} \circ \phi = g_{\theta_\mu}$.
Because $|\det \mY| = 1$, the change of variables $\rr \mapsto \phi(\rr)$ preserves every integral, so the overlap matrix $\mS$ of \cref{eq:ovlp} is unchanged.
The occupied Gram matrix and the L\"owdin coefficients $\bar{\mC}$ of \cref{eq:lowdin} are then unchanged, and so is the density matrix $\mP$.
It follows that $\psi'_k \circ \phi = \psi_k$ and $\rho' \circ \phi = \rho$.

Each energy term is invariant for a separate reason.
For the kinetic energy,
\begin{equation*}
    \grad (\psi_k \circ \phi^{-1}) = \mY\, (\grad \psi_k) \circ \phi^{-1},
\end{equation*}
and $\mY$ preserves norms.
For the external energy,
\begin{equation*}
    |\phi(\rr) - \phi(\RR_a)| = |\rr - \RR_a| ,
\end{equation*}
so $v'_{\mathrm{ext}} \circ \phi = v_{\mathrm{ext}}$.
The Hartree and background terms depend only on the distances $|\rr - \rr'|$ and $|\RR_a - \RR_b|$, which $\phi$ preserves.
Semilocal exchange-correlation functionals depend on $\rho$ and on the rotational scalars $|\grad \rho|^2$, $\laplacian \rho$, and the kinetic-energy density $\tfrac12 \sum _k |\grad \psi _k |^2$.
Exact exchange (the Fock term in chemistry) is a Coulomb integral of orbital pair densities, so both are invariant.
Finally, each auxiliary function of the adaptive fit is a pair product $g_\mu g_\nu$, so it transforms like the splats, and the matrices $\mQ$ and $\vt$ of \cref{eq:adaptive-density-fitting} are unchanged.
\end{proof}

The proposition holds for the full orthogonal group $O(3)$.
A reflection needs no special treatment because a precision matrix is quadratic in the frame, so $\mY$ and $-\mY$ act identically.
The quaternion never has to represent an improper rotation.
The log-eigenvalues $\vl$ are frame invariants, so the scales of a cloud carry no information about its orientation.
Only the centers and the orientations transform, the latter by left multiplication:
\begin{align*}
    \vm &\mapsto \mY \vm + \vb , \\
    \vq &\mapsto \vq_\mY \otimes \vq .
\end{align*}

The result thus concerns the objective and not the optimizer.
A converged cloud for one geometry maps to a converged cloud for every rigidly moved copy with the same energy, so no re-optimization is needed.
Two rotated copies optimized from scratch will, in general, follow different trajectories because Adam's per-coordinate preconditioning is not rotation equivariant.

\subsection{Nuclear forces}\label{app:proofs-forces}

Nuclear forces are the negative gradient of the potential-energy surface
\begin{equation*}
    E^\star(\RR) = \min_\theta\, E(\RR, \theta),
\end{equation*}
the ground-state energy as a function of the nuclear coordinates.
They drive geometry optimization and molecular dynamics, and they are the labels on which machine-learned interatomic potentials are trained (\cref{sec:exp-observables}).
Because $E^\star$ is the value function of an inner optimization problem, computing its gradient is an instance of the bilevel differentiation familiar from hyperparameter optimization and implicit layers.
At an exact minimizer, the envelope theorem gives the answer.
The response of the parameters to the nuclei does not contribute, because the energy is stationary in them, and only the explicit dependence of $E$ on $\RR$ remains.

For an atom-centered basis, this explicit dependence is expensive.
Each basis function $\chi_\mu(\rr - \RR_{a(\mu)})$ moves with its nucleus, so $\RR$ enters every integral of \cref{eq:energy-terms} and not only the electron--nucleus attraction.
The envelope theorem removes the response of the coefficients $\mC$, but the derivative of the basis functions themselves remains.
This is the Pulay force \citep{Pulay1969Forces}.
It requires the nuclear derivative of every one- and two-electron integral, and it does not vanish as the optimization converges, because a fixed basis is not variationally optimal with respect to its own position.
A floating basis removes it \citep{Hurley1954Electrostatic}.
The splat centers are optimized, so the motion of the basis is part of the inner problem and the envelope theorem removes it together with $\mC$.
What remains is the electrostatic force of \citet{Feynman1939Forces}.

Two complications keep the envelope argument from applying verbatim.
First, the optimizer stops at a finite residual, so we need to know how the force error depends on it.
Second, the minimizer of $E(\RR, \cdot)$ is never isolated.
The energy depends on $\mC$ only through the density matrix, and
\begin{equation*}
    \bar{\mC}\bar{\mC}^\top = \mC\,(\mC^\top \mS \mC)^{-1} \mC^\top
\end{equation*}
is unchanged under $\mC \mapsto \mC \mB$ for every invertible $\mB \in \mathbbm{R}^{N_{\mathrm{occ}} \times N_{\mathrm{occ}}}$.
The quaternion sign of \cref{app:proofs-symmetry} and relabelings of the splats are further, discrete symmetries.
The Hessian $\mH = \partial^2_\theta E$ is therefore singular at every minimizer, and the implicit function theorem cannot be applied to $\theta$ directly.
We work modulo these symmetries.

\begin{assumption}[Nondegenerate minimum up to symmetry]\label{ass:nondegenerate}
Near a minimizer of $E(\RR_0, \cdot)$, the energy is three times continuously differentiable, and for every $\RR$ near $\RR_0$ the minimizers of $E(\RR, \cdot)$ form a smooth manifold $\mathcal{M}_\RR$.
At every $\theta^\star \in \mathcal{M}_\RR$, the Hessian $\mH$ is positive definite on the normal space
\begin{equation*}
    N_{\theta^\star} = (T_{\theta^\star}\mathcal{M}_\RR)^\perp ,
\end{equation*}
with smallest eigenvalue at least $\kappa > 0$.
\end{assumption}

The manifold $\mathcal{M}_\RR$ contains the symmetry orbit of any one minimizer, together with any other flat directions, such as the parameters of a splat that carries no weight.
We write $\mH^+$ for the pseudo-inverse of $\mH$, which inverts it on $N_{\theta^\star}$, and $\mJ = \partial_\RR \partial_\theta E$ for the mixed second derivative.

\begin{proposition}[Floating-basis forces and their residual error]\label{prop:forces}
Let \cref{ass:nondegenerate} hold.
\begin{enumerate}
\item[(i)] The surface $E^\star$ is differentiable, and at any $\theta^\star \in \mathcal{M}_\RR$ the force is the explicit partial derivative, with no Pulay term:
\begin{align}
    \mF^\star_a
    &= -\frac{\mathrm{d} E^\star}{\mathrm{d} \RR_a} \nonumber \\
    &= -\frac{\partial E}{\partial \RR_a}\bigg|_{\theta^\star} \nonumber \\
    &= Z_a \int \dd[3]{\rr}\, \rho^\star(\rr)\, \frac{\rr - \RR_a}{|\rr - \RR_a|^3}
    + \sum_{b \neq a} Z_a Z_b\, \frac{\RR_a - \RR_b}{|\RR_a - \RR_b|^3} . \label{eq:hf-force}
\end{align}
\item[(ii)] Let $\tilde\theta$ be close to $\mathcal{M}_\RR$, and define its residual and computed force as
\begin{align*}
    \bm{\varepsilon} &= \partial_\theta E(\RR, \tilde\theta) , \\
    \tilde{\mF}_a &= -\frac{\partial E}{\partial \RR_a}\bigg|_{\tilde\theta} .
\end{align*}
Then
\begin{align}
    \tilde{\mF} - \mF^\star &= (\partial_\RR \theta^\star)^\top \bm{\varepsilon} + \mathcal{O}(\|\bm{\varepsilon}\|^2), \label{eq:force-residual} \\
    E(\RR, \tilde\theta) - E^\star(\RR) &= \tfrac12\, \bm{\varepsilon}^\top \mH^+ \bm{\varepsilon} + \mathcal{O}(\|\bm{\varepsilon}\|^3), \label{eq:energy-residual}
\end{align}
where
\begin{equation*}
    \partial_\RR \theta^\star = -\mH^+ \mJ^\top
\end{equation*}
is the response of the minimizer normal to $\mathcal{M}_\RR$.
In particular,
\begin{equation*}
    \|\tilde{\mF} - \mF^\star\| \leq \kappa^{-1} \|\mJ\|\, \|\bm{\varepsilon}\| + \mathcal{O}(\|\bm{\varepsilon}\|^2).
\end{equation*}
\item[(iii)] For every $\tilde\theta$, not only near a minimizer, the net computed force equals the summed center components of the residual,
\begin{equation}
\label{eq:net-force}
    \sum_a \tilde{\mF}_a = \sum_\mu \frac{\partial E}{\partial \vm_\mu}\bigg|_{\tilde\theta} .
\end{equation}
\end{enumerate}
\end{proposition}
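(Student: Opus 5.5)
We prove the three parts in order. Each is a consequence of where the nuclear coordinates $\RR$ enter the forward map $(\RR,\theta)\mapsto E$. In GS-DFT, $\RR$ enters only through $\vext$ in $E_\text{ext}$ and through $E_\text{BG}$. The splats, the overlap $\mS$, the L\"owdin coefficients, $T_s$, $E_H$, the density-fitted Hartree energy and $E_\xc$ depend on $\theta$ alone.

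\textbf{Part (i).} Under \cref{ass:nondegenerate}, choose a local smooth section $\RR\mapsto\theta^\star(\RR)\in\mathcal{M}_\RR$. To build it, apply the implicit function theorem to the normal-space restriction $P_N\,\partial_\theta E(\RR,\theta)=0$. On $N_{\theta^\star}$ the Jacobian of this map is $\mH|_{N}$, which is invertible with smallest eigenvalue at least $\kappa$. This yields $\partial_\RR\theta^\star=-\mH^+\mJ^\top$ for the normal component. Since $E^\star(\RR)=E(\RR,\theta^\star(\RR))$ and $\partial_\theta E=0$ on $\mathcal{M}_\RR$, the chain rule gives
\[
\mathrm{d}E^\star/\mathrm{d}\RR=\partial_\RR E\big|_{\theta^\star}.
\]
This is the envelope theorem. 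Tangential motion along $\mathcal{M}_\RR$ contributes nothing because $E$ is constant there. Differentiating $E_\text{ext}=-\sum_a Z_a\int\rho/|\rr-\RR_a|$ and $E_\text{BG}$ explicitly then gives \cref{eq:hf-force}. No Pulay term appears, because $\rho^\star$ depends on $\theta$ but not explicitly on $\RR$.

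\textbf{Part (ii).} Let $\theta^\star$ be the projection of $\tilde\theta$ onto $\mathcal{M}_\RR$, so that $\delta=\tilde\theta-\theta^\star\in N_{\theta^\star}$ to first order.
\begin{itemize}
\item Taylor-expand the gradient: $\bm\varepsilon=\mH\delta+\mathcal{O}(\|\delta\|^2)$. Positive definiteness on $N$ inverts this as $\delta=\mH^+\bm\varepsilon+\mathcal{O}(\|\bm\varepsilon\|^2)$. This inversion is the step needing care, since it relies on $C^3$ smoothness and on the projection being well defined near $\mathcal{M}_\RR$.
\item Expand the force: $\tilde{\mF}-\mF^\star=-\mJ\delta+\mathcal{O}(\|\delta\|^2)=-\mJ\mH^+\bm\varepsilon+\dots=(\partial_\RR\theta^\star)^\top\bm\varepsilon+\dots$, using the symmetry of $\mH^+$.
\item Expand the energy: $E(\tilde\theta)-E^\star=\tfrac12\delta^\top\mH\delta+\mathcal{O}(\|\delta\|^3)=\tfrac12\bm\varepsilon^\top\mH^+\bm\varepsilon+\mathcal{O}(\|\bm\varepsilon\|^3)$, which is \cref{eq:energy-residual}.
\item The norm bound follows from $\|\mH^+\|\le\kappa^{-1}$ on $N$.
\end{itemize}
Sign conventions for $\mJ$ as $\partial_\RR\partial_\theta E$ versus its transpose must be tracked consistently through these expansions.

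\textbf{Part (iii).} This part is exact and uses the translation case of \cref{prop:equivariance}. Take $\mY=I$ and $\phi(\rr)=\rr+\vb$. Translating all nuclei by $\vb$ and all centers $\vm_\mu$ by $\vb$ leaves $E$ invariant at every $\theta$, not only at minimizers. Differentiating this invariance in $\vb$ at $\vb=0$ gives
\[
\sum_a\partial E/\partial\RR_a+\sum_\mu\partial E/\partial\vm_\mu=0.
\]
Hence $\sum_a\tilde{\mF}_a=-\sum_a\partial_{\RR_a}E=\sum_\mu\partial_{\vm_\mu}E$, which is \cref{eq:net-force}. The invariance holds for the density-fitted Hartree term as well, by the last claim of \cref{prop:equivariance}.
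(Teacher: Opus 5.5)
Your proposal is correct and follows the paper's proof essentially step for step: the implicit function theorem on the normal slice plus the chain-rule (envelope) argument for (i), the closest-point Taylor expansion $\bm{\varepsilon}=\mH\bm{\delta}+\mathcal{O}(\|\bm{\delta}\|^2)$ inverted through $\mH^+$ for (ii), and differentiation of the joint translation invariance from \cref{prop:equivariance} for (iii). The one step the paper spells out and you leave implicit is $T_{\theta^\star}\mathcal{M}_\RR\subseteq\ker\mH$, obtained by differentiating $\partial_\theta E=0$ along $\mathcal{M}_\RR$; it is what makes the range of $\mH$ lie in $N_{\theta^\star}$, so that $\mH^+\mH\bm{\delta}=\bm{\delta}$ and $\|\mH^+\|\le\kappa^{-1}$ actually hold, and the closest-point condition in fact gives $\bm{\delta}\in N_{\theta^\star}$ exactly rather than only to first order.
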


\begin{proof}
(i) Fix $\theta^\star \in \mathcal{M}_\RR$ and restrict $\theta$ to the affine slice $\theta^\star + N_{\theta^\star}$.
On the slice the Hessian is invertible, so the implicit function theorem applied to the normal component of $\partial_\theta E(\RR, \theta) = 0$ gives a continuously differentiable map $\RR \mapsto \theta^\star(\RR) \in \mathcal{M}_\RR$ through $\theta^\star$.
By the chain rule,
\begin{align*}
    \frac{\mathrm{d}E^\star}{\mathrm{d}\RR}
    &= \partial_\RR E + (\partial_\RR \theta^\star)^\top \partial_\theta E \\
    &= \partial_\RR E ,
\end{align*}
since $\partial_\theta E = 0$ on $\mathcal{M}_\RR$.
Every point of $\mathcal{M}_\RR$ lies on such a map and $\mathrm{d}E^\star/\mathrm{d}\RR$ is unique, so $\partial_\RR E$ takes the same value on all of $\mathcal{M}_\RR$.
The splats float free of the nuclei, so the kinetic, Hartree and exchange--correlation terms depend on $\RR$ only through $\theta$, and so does the auxiliary basis of \cref{eq:adaptive-density-fitting}.
The explicit dependence is confined to $E_{\mathrm{BG}}$ and to $E_{\mathrm{ext}}$, through the potential
\begin{equation*}
    v_{\mathrm{ext}}(\rr) = -\sum_b \frac{Z_b}{|\rr - \RR_b|} .
\end{equation*}
Differentiating these two terms gives the last line of \cref{eq:hf-force}.

(ii) Let $\theta^\star$ be the point of $\mathcal{M}_\RR$ closest to $\tilde\theta$, so that $\bm{\delta} = \tilde\theta - \theta^\star \in N_{\theta^\star}$.
Differentiating $\partial_\theta E = 0$ along $\mathcal{M}_\RR$ shows $T_{\theta^\star}\mathcal{M}_\RR \subseteq \ker \mH$.
Since $\mH$ is symmetric, its range lies in $N_{\theta^\star}$, where it is invertible with $\|\mH^+\| \leq \kappa^{-1}$.
Taylor expansion of the gradient gives
\begin{align*}
    \bm{\varepsilon} &= \mH\bm{\delta} + \mathcal{O}(\|\bm{\delta}\|^2) , \\
    \bm{\delta} &= \mH^+\bm{\varepsilon} + \mathcal{O}(\|\bm{\varepsilon}\|^2) .
\end{align*}
Expanding the explicit partial in the same way,
\begin{align*}
    \partial_\RR E(\RR, \tilde\theta) &= \partial_\RR E(\RR, \theta^\star) + \mJ \bm{\delta} + \mathcal{O}(\|\bm{\delta}\|^2) , \\
    \tilde{\mF} - \mF^\star &= -\mJ \mH^+ \bm{\varepsilon} + \mathcal{O}(\|\bm{\varepsilon}\|^2) .
\end{align*}
Differentiating $\partial_\theta E(\RR, \theta^\star(\RR)) = 0$ in $\RR$ gives
\begin{equation*}
    \mJ^\top + \mH\, \partial_\RR\theta^\star = 0 ,
\end{equation*}
whose solution in $N_{\theta^\star}$ is $\partial_\RR\theta^\star = -\mH^+\mJ^\top$.
Its transpose is $-\mJ\mH^+$, which proves \cref{eq:force-residual}.
For the energy,
\begin{align*}
    E(\RR, \tilde\theta) - E(\RR, \theta^\star)
    &= \tfrac12\, \bm{\delta}^\top \mH \bm{\delta} + \mathcal{O}(\|\bm{\delta}\|^3) \\
    &= \tfrac12\, \bm{\varepsilon}^\top \mH^+ \bm{\varepsilon} + \mathcal{O}(\|\bm{\varepsilon}\|^3) ,
\end{align*}
which is \cref{eq:energy-residual}.

(iii) By \cref{prop:equivariance} with $\mY = \mathbbm{1}$, translating every nucleus and every splat center by the same $\vb$ leaves $E$ unchanged, for any $\tilde\theta$:
\begin{equation*}
    E\big(\{\RR_a + \vb\},\, \{\vm_\mu + \vb\},\, \ldots\big) = E(\RR, \tilde\theta) .
\end{equation*}
Differentiating at $\vb = 0$ gives
\begin{equation*}
    \sum_a \frac{\partial E}{\partial \RR_a} + \sum_\mu \frac{\partial E}{\partial \vm_\mu} = 0 ,
\end{equation*}
which is \cref{eq:net-force}.
\end{proof}

The proposition treats the auxiliary basis as a function of the cloud, which it is at every refresh.
Between refreshes, the optimizer follows the gradient with the auxiliary basis held fixed (\cref{alg:gsdft}).
That gradient omits the dependence of the fit on the auxiliary functions, and the omitted term is proportional to the Coulomb-norm fit residual that \cref{app:proofs-df} controls.

\Cref{prop:forces} allows three statements in the main text.
First, a force costs one reverse-mode pass through two terms of the energy, with no derivative integrals and no Pulay correction.
Second, the energy error is quadratic in the residual and the force error is linear.
Along one optimization, forces therefore converge more slowly than energies even at a fixed cloud size, so force labels need a tighter stopping criterion than energies do, and the gradient norm that the optimizer already reports bounds their error.
Third, \cref{eq:net-force} is a free and reference-free check.
An exact force field has zero net force, and the computed net force is exactly the summed center gradient, so a nonzero net force measures incomplete optimization and nothing else.

\subsection{Adaptive density fitting}\label{app:proofs-df}

The Hartree energy is a six-dimensional integral that is quartic in the basis (\cref{eq:eri}), and density fitting replaces it with a quadratic one by expanding the density in an auxiliary basis.
In GS-DFT, the fitted energy sits inside the objective that the optimizer minimizes.
This raises two questions.
The first is how accurate the fit is.
The second is familiar from learning with a surrogate inside the loss: can the optimizer lower the objective by degrading the surrogate instead of improving the physics?
\Cref{prop:dunlap} answers the first for any auxiliary basis, and \cref{prop:pairproduct} answers the second for the adaptive basis of \cref{sec:method}.

The object that makes both answers exact is the Coulomb inner product,
\begin{align}
    (f \,|\, g)
    &= \int \dd[3]{\rr} \int \dd[3]{\rr'}\, \frac{f(\rr)\, g(\rr')}{|\rr - \rr'|} \nonumber \\
    &= \int \frac{\dd[3]{\vk}}{(2\pi)^3}\, \frac{4\pi}{|\vk|^2}\, \overline{\hat f(\vk)}\, \hat g(\vk), \label{eq:coulomb-inner}
\end{align}
defined for real functions of finite Coulomb energy, with $\hat f$ the Fourier transform.
The kernel $4\pi / |\vk|^2$ is strictly positive, so $(\cdot|\cdot)$ is an inner product with norm $\|f\|_{\mathrm{C}} = (f|f)^{1/2}$.
In this notation, the Hartree energy of \cref{eq:energy-terms} and the quantities of \cref{eq:adaptive-density-fitting} read
\begin{align*}
    E_H[\rho] &= \tfrac12\, \|\rho\|_{\mathrm{C}}^2 , \\
    \mQ_{ab} &= (h_a \,|\, h_b) , \\
    \vt_a &= (\rho \,|\, h_a) .
\end{align*}
$\mQ$ is the Gram matrix of the auxiliary functions, and it is positive definite whenever they are linearly independent.
The density fit of \cref{sec:background} is therefore least squares in the Coulomb norm, and its solution defines the Coulomb-orthogonal projection of $\rho$ onto $\operatorname{span}\{h_a\}$,
\begin{align*}
    \vd &= \mQ^{-1}\vt , \\
    \Pi \rho &= \textstyle\sum_a d_a h_a .
\end{align*}

\begin{proposition}[Robust density fitting \citep{Dunlap2000Robust}]\label{prop:dunlap}
Let $\{h_a\}_{a=1}^{N_{\mathrm{aux}}}$ be linearly independent functions of finite Coulomb energy, and let $\rho_{\vd} = \sum_a d_a h_a$ for $\vd \in \mathbbm{R}^{N_{\mathrm{aux}}}$.
For every $\vd$,
\begin{equation}
\label{eq:robust}
    \vt^\top \vd - \tfrac12\, \vd^\top \mQ\, \vd = E_H[\rho] - \tfrac12\, \|\rho - \rho_{\vd}\|_{\mathrm{C}}^2 .
\end{equation}
The left-hand side is therefore a lower bound on $E_H[\rho]$, with an error that is quadratic in the fit error.
It is maximized at $\vd = \mQ^{-1}\vt$, where it equals
\begin{equation*}
    \tilde{E}_H = \tfrac12\, \vt^\top \mQ^{-1} \vt ,
\end{equation*}
and the gap to the exact Hartree energy is
\begin{equation}
\label{eq:df-gap}
    E_H[\rho] - \tilde{E}_H = \tfrac12\, \|\rho - \Pi\rho\|_{\mathrm{C}}^2 \;\geq\; 0 .
\end{equation}
For $\lambda \geq 0$, the regularized estimate satisfies
\begin{align*}
    \tilde{E}_H^\lambda &= \tfrac12\, \vt^\top (\mQ + \lambda\mathbbm{1})^{-1} \vt , \\
    0 \;\leq\; \tilde{E}_H - \tilde{E}_H^\lambda &\leq \frac{\lambda}{\lambda_{\min}(\mQ) + \lambda}\, \tilde{E}_H .
\end{align*}
\end{proposition}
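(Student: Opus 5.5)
The argument is a direct expansion in the Coulomb inner product of \cref{eq:coulomb-inner}. Write $\rho_{\vd} = \sum_a d_a h_a$ and expand
\begin{equation*}
    \|\rho - \rho_{\vd}\|_{\mathrm{C}}^2 = \|\rho\|_{\mathrm{C}}^2 - 2(\rho\,|\,\rho_{\vd}) + \|\rho_{\vd}\|_{\mathrm{C}}^2 .
\end{equation*}
Bilinearity gives $(\rho\,|\,\rho_{\vd}) = \sum_a d_a (\rho|h_a) = \vt^\top\vd$ and $\|\rho_{\vd}\|_{\mathrm{C}}^2 = \sum_{ab} d_a d_b (h_a|h_b) = \vd^\top\mQ\vd$. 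Using $E_H[\rho] = \tfrac12\|\rho\|_{\mathrm{C}}^2$, halving and rearranging yields \cref{eq:robust} for every $\vd$. This identity holds without any optimality assumption. Since the norm term is nonnegative, the lower bound and its quadratic dependence on the fit error follow at once.

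For the maximizer, we use the positive definiteness of $\mQ$, which follows from linear independence and strict positivity of the kernel $4\pi/|\vk|^2$. Then the left side of \cref{eq:robust} is strictly concave in $\vd$, and its stationary point $\vd = \mQ^{-1}\vt$ is the unique maximum. Substituting gives $\vt^\top\mQ^{-1}\vt - \tfrac12\vt^\top\mQ^{-1}\vt = \tfrac12\vt^\top\mQ^{-1}\vt$. The stationarity condition $\mQ\vd = \vt$ is equivalent to $(\rho - \rho_{\vd}\,|\,h_a) = 0$ for all $a$. So $\rho_{\vd} = \Pi\rho$ is the Coulomb-orthogonal projection, and \cref{eq:robust} evaluated there is \cref{eq:df-gap}.

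For the regularized bound, diagonalize $\mQ = \sum_i q_i \vv_i\vv_i^\top$ with $q_i \ge \lambda_{\min}(\mQ) > 0$, and set $t_i = \vv_i^\top\vt$. Then
\begin{equation*}
    \tilde E_H - \tilde E_H^\lambda = \tfrac12\sum_i t_i^2\Big(\frac1{q_i} - \frac1{q_i+\lambda}\Big) = \tfrac12\sum_i \frac{t_i^2}{q_i}\cdot\frac{\lambda}{q_i+\lambda} .
\end{equation*}
Each factor $\lambda/(q_i+\lambda)$ lies in $[0,\,\lambda/(\lambda_{\min}(\mQ)+\lambda)]$, because $x\mapsto \lambda/(x+\lambda)$ is decreasing in $x$. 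This gives both inequalities, since $\tilde E_H = \tfrac12\sum_i t_i^2/q_i$.

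The main obstacle is functional-analytic rather than algebraic. We must check that $(\cdot|\cdot)$ is a genuine inner product on the relevant space, so that $\rho$ and the $h_a$ have finite Coulomb norm and the Fourier form is justified. For Gaussians and finite sums of Gaussian products this is immediate, because their Fourier transforms are Gaussians and $\int |\hat f|^2/|\vk|^2\,\dd[3]{\vk} < \infty$ in three dimensions. I would state this once, and the rest is the finite-dimensional linear algebra above.
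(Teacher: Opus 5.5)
Your proposal is correct and follows essentially the same route as the paper: you expand $\tfrac12\|\rho-\rho_{\vd}\|_{\mathrm{C}}^2$ by bilinearity, use strict concavity to place the maximizer at $\vd=\mQ^{-1}\vt$, and bound the regularized gap in the eigenbasis of $\mQ$. The paper phrases the last step as the Loewner sandwich $0 \preceq \mQ^{-1}-(\mQ+\lambda\mathbbm{1})^{-1} \preceq \frac{\lambda}{\lambda_{\min}(\mQ)+\lambda}\mQ^{-1}$, which is the same computation as your coordinatewise one with $t_i = \vv_i^\top\vt$.
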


\begin{proof}
Expanding the square,
\begin{align*}
    \tfrac12\|\rho - \rho_{\vd}\|_{\mathrm{C}}^2
    &= \tfrac12 (\rho|\rho) - \sum_a d_a (\rho|h_a) + \tfrac12 \sum_{ab} d_a d_b (h_a|h_b) \\
    &= E_H[\rho] - \vt^\top\vd + \tfrac12\, \vd^\top\mQ\vd ,
\end{align*}
which rearranges to \cref{eq:robust}.
The left-hand side is strictly concave in $\vd$, with stationary point $\mQ\vd = \vt$.
Substituting $\vd = \mQ^{-1}\vt$ gives $\tilde{E}_H$, and the right-hand side of \cref{eq:robust} at this point gives \cref{eq:df-gap}.
For the regularized estimate,
\begin{equation*}
    \mQ^{-1} - (\mQ + \lambda\mathbbm{1})^{-1} = \lambda\, \mQ^{-1}(\mQ + \lambda\mathbbm{1})^{-1} .
\end{equation*}
In the eigenbasis of $\mQ$, with eigenvalues $q_i$, the right-hand side is diagonal with entries $q_i^{-1}\, \lambda / (q_i + \lambda)$, so
\begin{equation*}
    0 \;\preceq\; \mQ^{-1} - (\mQ + \lambda\mathbbm{1})^{-1} \;\preceq\; \frac{\lambda}{\lambda_{\min}(\mQ) + \lambda}\, \mQ^{-1} .
\end{equation*}
Taking the quadratic form in $\vt$ gives the bound.
\end{proof}

\Cref{eq:robust} is why the fit is called robust.
A first-order error in the fitted coefficients produces only a second-order error in the energy, because the energy is stationary in $\vd$ at the fit.
\Cref{eq:df-gap} makes the error a geometric quantity: half the squared Coulomb distance from $\rho$ to the auxiliary span.
The regularizer $\lambda = 10^{-8}$ used in all experiments keeps the Cholesky factorization of \cref{alg:gsdft} stable when two auxiliary functions nearly coincide, and it can only lower the estimate further.

For the adaptive basis, the span is built from the density itself.
By the Gaussian product theorem, the product of two splats is a single Gaussian,
\begin{equation}
\label{eq:gpt}
    g_\mu(\rr)\, g_\nu(\rr) \propto h_{\mu\nu}(\rr) ,
\end{equation}
with the proportionality constant of \cref{eq:ovlp}, where $h_{\mu\nu}$ has precision and center
\begin{align*}
    \mA_{\mu\nu} &= \mA_\mu + \mA_\nu , \\
    \vm_{\mu\nu} &= (\mA_\mu + \mA_\nu)^{-1}(\mA_\mu \vm_\mu + \mA_\nu \vm_\nu) .
\end{align*}
The density $\rho = \sum_{\mu\nu} \mP_{\mu\nu}\, g_\mu g_\nu$ of \cref{eq:density-matrix} is a linear combination of these products.
Screening retains a set of unordered pairs and splits the density into a retained and a discarded part,
\begin{align*}
    \mathcal{P} &= \big\{ \{\mu,\nu\} : |\mS_{\mu\nu}| > \tau \big\} , \\
    \rho &= \rho_{\mathcal{P}} + \rho_{\mathcal{P}^c} .
\end{align*}
The auxiliary basis is refreshed every $K$ steps and held fixed in between (\cref{alg:gsdft}).
We write $\Pi_r$ for the projection onto the span built from the cloud $\Theta_r$ at the most recent refresh, and $\rho_r$ for the density at that refresh.

\begin{proposition}[Closure of the adaptive fit]\label{prop:pairproduct}
At a refresh, the fit error is confined to the discarded pairs,
\begin{align}
    E_H[\rho_r] - \tilde{E}_H
    &= \tfrac12\, \|(1 - \Pi_r)\, \rho_{r, \mathcal{P}^c}\|_{\mathrm{C}}^2 \nonumber \\
    &\leq \tfrac12\, \|\rho_{r, \mathcal{P}^c}\|_{\mathrm{C}}^2 , \label{eq:closure}
\end{align}
and the fit is exact for $\tau = 0$.
At any later step before the next refresh, with density $\rho$,
\begin{equation}
\label{eq:drift}
    \Big( E_H[\rho] - \tilde{E}_H \Big)^{1/2} \;\leq\; \tfrac{1}{\sqrt 2} \Big( \|\rho_{r, \mathcal{P}^c}\|_{\mathrm{C}} + \|\rho - \rho_r\|_{\mathrm{C}} \Big) .
\end{equation}
The same statements hold for each exchange pair density $\rho_{kl} = \bar\psi_k \bar\psi_l$.
With $(\vt_{kl})_a = (\rho_{kl} | h_a)$, the fitted and exact exchange energies are
\begin{align*}
    \tilde{E}_x &= -\sum_{kl} \vt_{kl}^\top \mQ^{-1} \vt_{kl} , \\
    E_x &= -\sum_{kl} \|\rho_{kl}\|_{\mathrm{C}}^2 ,
\end{align*}
and the fit is an upper bound, $\tilde{E}_x \geq E_x$.
\end{proposition}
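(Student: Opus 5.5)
The whole argument runs through \cref{prop:dunlap}, which already identifies the fit error as $\tfrac12\|\rho-\Pi\rho\|_{\mathrm{C}}^2$ (\cref{eq:df-gap}). It then reduces to properties of the orthogonal projection $\Pi_r$ in the Coulomb Hilbert space.

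\textbf{Step 1: the fit at a refresh.} By \cref{eq:gpt}, every retained product $g_\mu g_\nu$ with $\{\mu,\nu\}\in\mathcal{P}$ is a scalar multiple of an auxiliary function $h_{\mu\nu}$. Hence $\rho_{r,\mathcal{P}}\in\operatorname{span}\{h_a\}$ and $(1-\Pi_r)\rho_{r,\mathcal{P}}=0$. Writing $\rho_r=\rho_{r,\mathcal{P}}+\rho_{r,\mathcal{P}^c}$ and using linearity of $\Pi_r$ gives the equality in \cref{eq:closure}. The inequality follows because $1-\Pi_r$ is itself an orthogonal projection, so its operator norm is at most one. For $\tau=0$ every pair is retained, so $\rho_{\mathcal{P}^c}=0$ and the fit is exact.

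One technicality needs care here. If some $h_a$ are linearly dependent, \cref{prop:dunlap} does not literally apply, since $\mQ$ is singular. I would restrict to a maximal independent subset, which has the same span, or note that the $\lambda$-regularized version only lowers $\tilde E_H$. I would state the result for the unregularized, independent case.

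\textbf{Step 2: drift between refreshes.} The basis is frozen, so $\tilde E_H$ at a later step equals $\tfrac12\|\Pi_r\rho\|_{\mathrm{C}}^2$. By \cref{eq:df-gap}, $(E_H[\rho]-\tilde E_H)^{1/2}=\tfrac1{\sqrt2}\|(1-\Pi_r)\rho\|_{\mathrm{C}}$. I then split $\rho=\rho_r+(\rho-\rho_r)$ and apply the triangle inequality. Next I bound $\|(1-\Pi_r)\rho_r\|_{\mathrm{C}}\le\|\rho_{r,\mathcal{P}^c}\|_{\mathrm{C}}$ by Step 1, and $\|(1-\Pi_r)(\rho-\rho_r)\|_{\mathrm{C}}\le\|\rho-\rho_r\|_{\mathrm{C}}$ by contractivity. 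This yields \cref{eq:drift}.

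\textbf{Step 3: exchange pair densities.} Each $\rho_{kl}=\bar\psi_k\bar\psi_l=\sum_{\mu\nu}\bar{\mC}_{\mu k}\bar{\mC}_{\nu l}g_\mu g_\nu$ is again a linear combination of the same splat products. It therefore splits into retained and discarded parts exactly as $\rho$ does, and Steps 1–2 apply verbatim with $\rho$ replaced by $\rho_{kl}$. Only the product structure was used, not the symmetry of $\mP$. Since $\rho_{kl}$ need not be symmetric in $(\mu,\nu)$, I would group the $(\mu,\nu)$ and $(\nu,\mu)$ terms into the unordered pair $\{\mu,\nu\}$, so that both are multiples of $h_{\mu\nu}$.

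For the bound on exchange, I use the fact that the optimal fit value $\vt_{kl}^\top\mQ^{-1}\vt_{kl}$ equals $\|\Pi\rho_{kl}\|_{\mathrm{C}}^2$ (twice the $\tilde E_H$ formula), which is at most $\|\rho_{kl}\|_{\mathrm{C}}^2$ by Pythagoras. Summing with the minus sign gives $\tilde E_x\ge E_x$.

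\textbf{Expected obstacle.} The main subtlety is bookkeeping rather than analysis. One has to check that the proportionality constant $\mK_{\mu\nu}$ and the ordered-versus-unordered pair convention do not prevent the retained part from lying exactly in the span. The same issue arises for the linear-dependence caveat, where $\mQ^{-1}$ should be read as the inverse on the span.
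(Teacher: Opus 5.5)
Your proposal is correct and follows the paper's proof essentially step for step. The identity uses the fact that the retained part $\rho_{r,\mathcal{P}}$ lies in the auxiliary span, combined with \cref{eq:df-gap}. The inequality uses contractivity of $1-\Pi_r$, and the drift bound adds the triangle inequality on $\rho = \rho_r + (\rho-\rho_r)$. Exchange follows by applying the same argument to each $\rho_{kl}$, with the minus sign reversing the inequality. Your extra bookkeeping on unordered pairs and on possible linear dependence of the $h_a$ (which \cref{prop:dunlap} rules out by assumption) is care that the paper's proof leaves implicit, and it does not change the argument.
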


\begin{proof}
Every retained product is proportional to an auxiliary function, so $\rho_{r, \mathcal{P}} \in \operatorname{span}\{h_a\}$ and
\begin{align*}
    \Pi_r \rho_{r, \mathcal{P}} &= \rho_{r, \mathcal{P}} , \\
    \rho_r - \Pi_r \rho_r &= (1 - \Pi_r)\, \rho_{r, \mathcal{P}^c} .
\end{align*}
\Cref{eq:df-gap} then gives the identity in \cref{eq:closure}.
The inequality holds because an orthogonal projection does not increase the norm, and for $\tau = 0$ the discarded part is empty.
For \cref{eq:drift}, write $\rho = \rho_r + (\rho - \rho_r)$ and use the triangle inequality together with $\|(1 - \Pi_r) f\|_{\mathrm{C}} \leq \|f\|_{\mathrm{C}}$,
\begin{align*}
    \|(1 - \Pi_r)\, \rho\|_{\mathrm{C}}
    &\leq \|(1 - \Pi_r)\, \rho_r\|_{\mathrm{C}} + \|(1 - \Pi_r)(\rho - \rho_r)\|_{\mathrm{C}} \\
    &\leq \|\rho_{r, \mathcal{P}^c}\|_{\mathrm{C}} + \|\rho - \rho_r\|_{\mathrm{C}} ,
\end{align*}
and apply \cref{eq:df-gap}.
For exchange, the L\"owdin orbitals are linear combinations of splats, so each $\rho_{kl}$ is again a linear combination of splat products and the same argument applies.
Applying \cref{eq:df-gap} to each $\rho_{kl}$ gives
\begin{equation*}
    \|\rho_{kl}\|_{\mathrm{C}}^2 \geq \vt_{kl}^\top\mQ^{-1}\vt_{kl} ,
\end{equation*}
and the minus sign in $E_x$ reverses the inequality.
\end{proof}

The answer to the second question follows.
The objective the optimizer minimizes is the exact-integral energy minus the gap of \cref{eq:df-gap}, so a step that enlarges the gap lowers the objective without improving the physics.
The fit coefficients are solved in closed form, so there is no inner adversary to train and no alternating min--max dynamics.
The only exploitable quantity is the gap itself, and \cref{prop:pairproduct} bounds it.
At every refresh it resets to the contribution of the discarded pairs, which carry charge $|\mP_{\mu\nu} \mS_{\mu\nu}| \leq \tau |\mP_{\mu\nu}|$ each.
Between refreshes it can grow only with the distance the density has moved since the last refresh.
This is the trade-off that the refresh period $K$ controls: a larger $K$ amortizes the cubic factorization of \cref{alg:gsdft} over more steps, and allows the density to drift further from the span it was fitted in.

Two consequences for the reported energies follow.
For semilocal functionals only the Hartree term is fitted, so the fitted energy lies below the exact-integral energy at the same parameters, by the gap.
For hybrid functionals and Hartree--Fock, the Hartree and exchange errors have opposite signs, and the fitted energy is not a bound in either direction.
In both cases the error vanishes as $\tau \to 0$ and $K \to 1$.

\subsection{Regularized eigensolve}\label{app:proofs-regeigh}

The L\"owdin retraction of \cref{eq:lowdin} needs the inverse square root of the occupied Gram matrix
\begin{equation*}
    \mG = \mC^\top \mS \mC ,
\end{equation*}
and both its value and its gradient can fail.
In the forward pass, $\mG^{-1/2}$ diverges as the smallest eigenvalue of $\mG$ goes to zero.
In a floating basis this is not an edge case.
The energy is invariant under $\mC \mapsto \mC\mB$ (\cref{app:proofs-forces}), so nothing in the objective keeps the columns of $\mC$ from becoming nearly parallel, and splats that drift on top of each other make $\mS$ itself nearly singular.
In the backward pass, the standard gradient of an eigendecomposition contains the factors $(\lambda_i - \lambda_j)^{-1}$, which diverge at degenerate eigenvalues even when the forward pass is harmless.
This is not an edge case either: an orthonormal initialization gives $\mG = \mathbbm{1}$, where every pair of eigenvalues is degenerate.
The same failure appears in machine learning wherever a network differentiates through an eigendecomposition, as in whitening layers.
\Cref{prop:regeigh} states what the regularization of \cref{eq:regeigh} changes, and shows that the divergence of the backward pass is an artifact of how the gradient is assembled rather than a property of the derivative.

What we need is the derivative of a matrix function.
For a symmetric $\mG = \mU \diag(\bm{\lambda})\, \mU^\top$ and a scalar function $f$ differentiable on the spectrum, the matrix function $f(\mG) = \mU \diag(f(\bm{\lambda}))\, \mU^\top$ has the Fr\'echet derivative \citep[Thm.~V.3.3]{Bhatia1997MatrixAnalysis}
\begin{equation}
\label{eq:daleckii-krein}
    \mathrm{D}f(\mG)[\dot{\mG}] = \mU \big( \bm{\Omega} \circ \mU^\top \dot{\mG}\, \mU \big)\, \mU^\top ,
\end{equation}
where $\circ$ is the elementwise product and the kernel is the first divided difference $f^{[1]}$ of $f$,
\begin{align*}
    \Omega_{ij} &= f^{[1]}(\lambda_i, \lambda_j) , \\
    f^{[1]}(\lambda_i, \lambda_j) &=
    \begin{cases}
        \dfrac{f(\lambda_i) - f(\lambda_j)}{\lambda_i - \lambda_j}, & \lambda_i \neq \lambda_j, \\[1.5ex]
        f'(\lambda_i), & \lambda_i = \lambda_j .
    \end{cases}
\end{align*}
By the mean value theorem, for some $\xi$ between $\lambda_i$ and $\lambda_j$,
\begin{equation*}
    f^{[1]}(\lambda_i, \lambda_j) = f'(\xi) ,
\end{equation*}
so $\bm{\Omega}$ is bounded wherever $f'$ is.
The derivative itself therefore has no singularity at degenerate eigenvalues.
The standard eigendecomposition gradient differentiates $\mU$ and $\bm{\lambda}$ separately, and each of those derivatives is singular at a degeneracy, even though their combination in \cref{eq:daleckii-krein} is not.

The forward pass of \cref{eq:regeigh} uses
\begin{align*}
    f(\lambda) &= \max(\lambda, \varphi)^{-1/2} , \\
    \varphi &= \varepsilon \lambda_{\max} , \\
    \mT &= f(\mG) , \\
    \bar{\mC} &= \mC\mT ,
\end{align*}
and we call an eigenvalue floored when $\lambda_i < \varphi$.
The regularized backward pass replaces $\bm{\Omega}$ by
\begin{equation*}
    \tilde\Omega_{ij} =
    \begin{cases}
        0, & \lambda_i < \varphi \text{ or } \lambda_j < \varphi, \\[0.5ex]
        \tfrac12 \big( f'(\lambda_i) + f'(\lambda_j) \big), & |\lambda_i - \lambda_j| \leq \delta \lambda_{\max}, \\[0.5ex]
        \Omega_{ij}, & \text{otherwise}.
    \end{cases}
\end{equation*}
All derivatives treat the floor level $\varphi$ as a constant.

\begin{proposition}[Regularized L\"owdin retraction]\label{prop:regeigh}
Let $\mS$ be positive semidefinite, so that $\mG \succeq 0$, and let $\lambda_{\max} > 0$.
\begin{enumerate}
\item[(i)] If $\lambda_{\min}(\mG) \geq \varepsilon\lambda_{\max}$, that is, if $\operatorname{cond}(\mG) \leq \varepsilon^{-1}$, then exactly
\begin{align*}
    \mT &= \mG^{-1/2} , \\
    \bar{\mC}^\top \mS \bar{\mC} &= \mathbbm{1} .
\end{align*}
\item[(ii)] In general, $\|\mT\| \leq \varphi^{-1/2}$, and the electron count of $\mP = 2\bar{\mC}\bar{\mC}^\top$ is
\begin{align}
    \operatorname{Tr}(\mP\mS)
    &= 2 \sum_{i=1}^{N_{\mathrm{occ}}} \min\!\Big(1, \frac{\lambda_i}{\varphi}\Big) \nonumber \\
    &\leq 2 N_{\mathrm{occ}} , \label{eq:charge-deficit}
\end{align}
with equality if and only if no eigenvalue is floored.
\item[(iii)] For every pair of unfloored eigenvalues, $\tilde\Omega_{ij} = \Omega_{ij}$ if $|\lambda_i - \lambda_j| > \delta\lambda_{\max}$, and otherwise
\begin{equation}
\label{eq:midpoint-error}
    \big| \tilde\Omega_{ij} - \Omega_{ij} \big| \;\leq\; \frac{5}{32}\, \frac{(\lambda_i - \lambda_j)^2}{\min(\lambda_i, \lambda_j)^{7/2}} .
\end{equation}
To leading order, the relative error is
\begin{equation*}
    \frac{\big| \tilde\Omega_{ij} - \Omega_{ij} \big|}{|\Omega_{ij}|} \approx \frac{5}{16}\, \frac{(\lambda_i - \lambda_j)^2}{\lambda_i^2} .
\end{equation*}
\item[(iv)] The regularized gradient is the exact gradient with respect to perturbations of $\mG$ restricted to the unfloored eigenspace, up to the error in (iii).
If no eigenvalue is floored and no two eigenvalues lie within $\delta\lambda_{\max}$ of each other, it is the exact gradient.
\end{enumerate}
\end{proposition}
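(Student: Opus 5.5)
The plan is to prove the four parts directly from the spectral definition $\mT = f(\mG) = \mU\diag(f(\bm\lambda))\mU^\top$. Parts (i) and (ii) are forward-pass identities. Part (iii) is a one-dimensional quadrature estimate, and part (iv) is a chain-rule bookkeeping argument built on \cref{eq:daleckii-krein}.

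For (i), if every $\lambda_i \geq \varphi$ then $f(\lambda_i) = \lambda_i^{-1/2}$, so $\mT = \mG^{-1/2}$. Then $\bar\mC^\top\mS\bar\mC = \mT\mG\mT = \mathbbm{1}$, because $\mT$ and $\mG$ are diagonal in the same basis.

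For (ii), the operator norm is $\|\mT\| = \max_i f(\lambda_i) = \max_i \max(\lambda_i,\varphi)^{-1/2} \leq \varphi^{-1/2}$. For the electron count I would use cyclicity of the trace:
\begin{equation*}
\operatorname{Tr}(\mP\mS) = 2\operatorname{Tr}(\mT\mC^\top\mS\mC\mT) = 2\operatorname{Tr}(\mT\mG\mT) = 2\sum_i \lambda_i f(\lambda_i)^2 = 2\sum_i \frac{\lambda_i}{\max(\lambda_i,\varphi)} .
\end{equation*}
Each summand equals $\min(1,\lambda_i/\varphi)$. It is at most $1$, with equality exactly when $\lambda_i \geq \varphi$. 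A summand with $\lambda_i<\varphi$ is strictly below $1$, which gives the equality case. Here $\mS\succeq 0$ is used only to guarantee $\lambda_i\geq 0$.

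For (iii), unfloored eigenvalues satisfy $f(\lambda)=\lambda^{-1/2}$ on the interval between them. The plan is to write the divided difference as an average:
\begin{equation*}
\Omega_{ij} = \frac{1}{\lambda_j-\lambda_i}\int_{\lambda_i}^{\lambda_j} f'(s)\,\dd{s} .
\end{equation*}
The replacement $\tfrac12(f'(\lambda_i)+f'(\lambda_j))$ is then the trapezoid rule applied to $f'$. The trapezoid error formula gives
\begin{equation*}
|\tilde\Omega_{ij}-\Omega_{ij}| = \frac{(\lambda_i-\lambda_j)^2}{12}\,|f'''(\xi)|
\end{equation*}
for some $\xi$ between the two eigenvalues. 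With $f'''(s) = -\tfrac{15}{8}s^{-7/2}$, which is monotone in $s$, this is bounded by $\tfrac{15}{96}(\lambda_i-\lambda_j)^2/\min(\lambda_i,\lambda_j)^{7/2}$, and $\tfrac{15}{96}=\tfrac{5}{32}$. For the relative error I would divide by $|\Omega_{ij}|\approx \tfrac12\lambda_i^{-3/2}$ to leading order, which yields $\tfrac{5}{16}(\lambda_i-\lambda_j)^2/\lambda_i^2$. The case $|\lambda_i-\lambda_j|>\delta\lambda_{\max}$ holds by definition of $\tilde\Omega$.

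For (iv), the main subtlety is making precise what ``exact gradient restricted to the unfloored eigenspace'' means. I expect this to be the main obstacle. With $\varphi$ held constant, $f$ is locally constant below the floor, so $f'(\lambda_i)=0$ for floored $i$. However, mixed entries $\Omega_{ij}$ with one index floored are generally nonzero, and the regularization zeroes them. The plan is to let $\Pi_u$ be the spectral projector onto the unfloored eigenvectors and to restrict perturbations to $\dot\mG = \Pi_u\dot\mG\,\Pi_u$. For such perturbations $\mU^\top\dot\mG\,\mU$ is supported on the unfloored block, so all entries of $\bm\Omega$ with a floored index multiply zeros. On that block $\tilde\bm\Omega$ differs from $\bm\Omega$ only by the midpoint substitution bounded in (iii).

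Pulling back through the adjoint of \cref{eq:daleckii-krein} together with the chain rules for $\mG=\mC^\top\mS\mC$ and $\bar\mC=\mC\mT$ then identifies the regularized backward pass with the exact gradient of that restricted map, up to the (iii) error. If nothing is floored, $\Pi_u=\mathbbm{1}$. If moreover no pair of eigenvalues lies within $\delta\lambda_{\max}$, then $\tilde\bm\Omega=\bm\Omega$ and the gradient is exact. Finally, \cref{eq:daleckii-krein} needs $f$ differentiable on the spectrum, so the restriction is also what rules out an eigenvalue sitting exactly at the kink $\varphi$.
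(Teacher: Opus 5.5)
Your proposal is correct and follows the paper's proof essentially step for step: the shared eigenbasis of $\mT$ and $\mG$ for (i)--(ii), the trapezoid-rule error for $f'$ between the two eigenvalues in (iii), and the projector $\mU_+\mU_+^\top$ onto the unfloored eigenspace combined with the adjoint of \cref{eq:daleckii-krein} in (iv). One small correction to your closing remark: an eigenvalue is floored only when $\lambda_i < \varphi$, so $\lambda_i = \varphi$ counts as unfloored and lies inside the restricted block, which means the restriction does not exclude the kink, and strictly floored eigenvalues never threatened differentiability because $f' = 0$ there; that non-generic case is simply left unaddressed, as it is in the paper.
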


\begin{proof}
(i) If no eigenvalue is floored, $f(\lambda) = \lambda^{-1/2}$ on the spectrum, so $\mT = \mG^{-1/2}$ and
\begin{align*}
    \bar{\mC}^\top\mS\bar{\mC}
    &= \mT\, \mC^\top \mS \mC\, \mT \\
    &= \mG^{-1/2}\, \mG\, \mG^{-1/2} \\
    &= \mathbbm{1} .
\end{align*}

(ii) The eigenvalues of $\mT$ are $\max(\lambda_i, \varphi)^{-1/2} \leq \varphi^{-1/2}$.
Since $\mT$ and $\mG$ share eigenvectors,
\begin{align*}
    \bar{\mC}^\top\mS\bar{\mC}
    &= \mT\mG\mT \\
    &= \mU \diag\!\Big(\frac{\lambda_i}{\max(\lambda_i, \varphi)}\Big)\, \mU^\top ,
\end{align*}
and $\operatorname{Tr}(\mP\mS) = 2\operatorname{Tr}(\bar{\mC}^\top\mS\bar{\mC})$ gives \cref{eq:charge-deficit}.
Each term is at most one, with equality exactly when $\lambda_i \geq \varphi$.

(iii) Above the floor, $f(\lambda) = \lambda^{-1/2}$.
The divided difference is the average of $f'$ over the interval between $\lambda_j$ and $\lambda_i$, and $\tilde\Omega_{ij}$ is the trapezoidal estimate of that average.
On an interval of length $h$, the trapezoidal rule integrates $f'$ with error $h^3 |f'''(\xi)| / 12$ for some $\xi$ in the interval, so the error in the average is $h^2 |f'''(\xi)| / 12$.
The third derivative and the kernel are
\begin{align*}
    |f'''(\lambda)| &= \tfrac{15}{8}\, \lambda^{-7/2} , \\
    |\Omega_{ij}| &\approx |f'(\lambda_i)| = \tfrac12\, \lambda_i^{-3/2} .
\end{align*}
The first is largest at the smaller eigenvalue, which gives \cref{eq:midpoint-error}, and dividing by the second gives the relative error.
On the diagonal, $\tilde\Omega_{ii} = f'(\lambda_i) = \Omega_{ii}$ exactly.

(iv) Let $\mU_+$ collect the eigenvectors of the unfloored eigenvalues.
Because $\tilde\Omega_{ij} = 0$ whenever $i$ or $j$ is floored, the regularized backward pass maps an upstream gradient $\bar{\mT}$ to
\begin{equation*}
    \bar{\mT} \;\longmapsto\; \mU_+ \big( \tilde{\bm{\Omega}}_{++} \circ \mU_+^\top \bar{\mT}\, \mU_+ \big)\, \mU_+^\top .
\end{equation*}
By \cref{eq:daleckii-krein}, this is the adjoint of the restricted derivative
\begin{equation*}
    \dot{\mG} \;\longmapsto\; \mathrm{D}f(\mG)\big[\mU_+\mU_+^\top \dot{\mG}\, \mU_+\mU_+^\top\big] ,
\end{equation*}
with $\bm{\Omega}_{++}$ replaced by $\tilde{\bm{\Omega}}_{++}$.
By (iii), the two agree when no two unfloored eigenvalues lie within $\delta\lambda_{\max}$, and if no eigenvalue is floored the projection is the identity.
\end{proof}

The proposition separates what the regularization does in the two passes.
The forward floor is inactive on a well-conditioned Gram matrix, and when it engages it can only remove charge, never add it, by the amount in \cref{eq:charge-deficit}.
The backward regularization is exact on the diagonal and between floored eigenvalues, and within a near-degenerate pair it makes a second-order error that is small unless the pair sits close to the floor.
The one deliberate change to the gradient is that the directions coupling floored and unfloored modes are dropped.
This is a choice we made to stop the gradient from acting on the orbital combinations that the retraction is failing to normalize.

Two consequences follow.
First, at convergence the regularization has no effect whenever the converged Gram matrix has condition number below $\varepsilon^{-1}$ and no near-degenerate eigenvalues.
The optimizer then solves the exact stationarity condition, and \cref{prop:forces} applies unchanged.
Second, whether this holds is checkable after the fact from the logged ratio $\lambda_{\min}(\mG) / \lambda_{\max}(\mG)$, and any floor that remains active shows up as the electron deficit of \cref{eq:charge-deficit}.

\section{From Gaussian-type orbitals to floating splats}\label{appendix:representation}

Quantum chemistry has refined fixed Gaussian basis sets for seventy years, and computer graphics has recently shown that a cloud of free anisotropic Gaussians, optimized end to end, can represent a complicated three-dimensional signal.
This appendix makes the connection precise in both directions.
\Cref{app:gto} defines a contracted Gaussian basis and counts its parameters, which is the accounting behind every comparison in \cref{sec:experiments}.
\Cref{app:gto-limit} shows that splats contain Gaussian-type orbitals as limits, and what that inclusion costs.
\Cref{app:fsgo} places the earlier floating-orbital models inside the GS-DFT family and identifies what stopped them.
\Cref{app:3dgs} states the correspondence with 3DGS and proves where it stops.

\subsection{Contracted Gaussian bases}\label{app:gto}

Every Gaussian baseline in \cref{sec:experiments} belongs to a standard basis-set family, and every comparison there is made either at matched function count or per free parameter.
Both comparisons require a precise definition of a basis function and of the numbers that define it.

The most basic building block is a primitive Gaussian pinned to a nucleus $\RR_a$.
Writing $\vs = \rr - \RR_a$, a Cartesian primitive of angular momentum $l = n_x + n_y + n_z$ and exponent $\alpha > 0$ is
\begin{equation}
\label{eq:primitive}
    \chi^{\mathrm{prim}}(\rr) = s_x^{n_x}\, s_y^{n_y}\, s_z^{n_z}\; e^{-\alpha |\vs|^2} .
\end{equation}
The $(l+1)(l+2)/2$ monomials of degree $l$ are usually replaced by the $2l+1$ real solid harmonics
\begin{equation*}
    \mathcal{Y}_{lm}(\vs) = |\vs|^l\, Y_{lm}(\vs / |\vs|) , \qquad m = -l, \ldots, l ,
\end{equation*}
which span the part of angular momentum $l$.

\begin{definition}[Contracted Gaussian basis]\label{def:gto-basis}
A shell $s$ on atom $a(s)$ consists of an angular momentum $l_s$, $K_s$ exponents $\alpha_{sp} > 0$ and $K_s$ contraction weights $d_{sp}$.
It contributes the $2l_s + 1$ functions
\begin{equation}
\label{eq:lcao}
    \chi_{sm}(\rr) = \mathcal{Y}_{l_s m}(\rr - \RR_{a(s)}) \sum_{p=1}^{K_s} d_{sp}\; e^{-\alpha_{sp} |\rr - \RR_{a(s)}|^2} ,
    \qquad m = -l_s, \ldots, l_s .
\end{equation}
A basis set is a table that assigns a list of shells to every element.
For a molecule it yields $B = \sum_s (2l_s + 1)$ functions, and the orbitals are expanded as
\begin{equation*}
    \psi_k(\rr) = \sum_{\mu=1}^{B} \mC_{\mu k}\, \chi_\mu(\rr) .
\end{equation*}
\end{definition}

All $2l_s + 1$ functions of a shell share the same exponents and weights, so a shell is defined by $2K_s$ numbers, whatever its angular momentum.
The exponents and weights are fitted once per element, largely to atomic calculations, and then stored \citep{Boys1950Gaussian, ditchfield1971self, dunning1989gaussian}.
During a calculation, only $\mC$ is optimized.

Basis sets come in families indexed by a cardinal number $\zeta$ \citep{jensenAtomicOrbitalBasis2013}.
In the correlation-consistent family cc-pV$\zeta$Z \citep{dunning1989gaussian}, with $\zeta = 2, 3, 4$ for D, T, Q, a first-row atom carries
\begin{equation*}
    n_l(\zeta) = \zeta + 1 - l , \qquad l = 0, \ldots, \zeta ,
\end{equation*}
contracted shells of angular momentum $l$, which gives $3s\,2p\,1d$, $4s\,3p\,2d\,1f$ and $5s\,4p\,3d\,2f\,1g$.
Each step in $\zeta$ adds one shell of every angular momentum already present and raises the highest angular momentum by one.
The number of functions per first-row atom is therefore
\begin{align*}
    B_{\mathrm{atom}}(\zeta)
    &= \sum_{l=0}^{\zeta} (\zeta + 1 - l)(2l + 1) \\
    &= \tfrac16 (\zeta + 1)(\zeta + 2)(2\zeta + 3) ,
\end{align*}
that is $14$, $30$ and $55$, and hydrogen carries $B_{\mathrm{atom}}(\zeta - 1)$.
The basis grows cubically in $\zeta$.
For self-consistent-field energies such as Hartree--Fock and DFT, the energy converges close to exponentially in $\zeta$ \citep{jensenAtomicOrbitalBasis2013},
\begin{equation*}
    E_\zeta \approx E_{\mathrm{CBS}} + A\, e^{-b\zeta} ,
\end{equation*}
and this regularity is what makes an extrapolation to the complete-basis-set (CBS) limit $E_{\mathrm{CBS}}$ meaningful.
The def2 family \citep{Weigend2005Def2} uses segmented contractions designed for balanced accuracy across the periodic table, and we use its quadruple-zeta member def2-QZVPP where a large reference basis is needed.

A tabulated basis is often described as having no parameters, because nothing in it is optimized during a calculation.
We count its tabulated numbers anyway.
They are parameters of the representation in the machine-learning sense: values that had to be chosen, fitted to data, and shipped with the model for its outputs to mean anything, like the weights of a pretrained network that is used frozen.
With the nine-parameter chart of \cref{eq:splat}, the parameter counts of a basis of $B$ functions and of a cloud of $M$ splats are
\begin{align}
    P_{\mathrm{GTO}} &= 2 \sum_{s} K_s + B\, N_{\mathrm{occ}} , \label{eq:param-gto} \\
    P_{\mathrm{splat}} &= 9 M + M\, N_{\mathrm{occ}} . \label{eq:param-splat}
\end{align}
We write
\begin{equation*}
    \eta = \frac{2 \sum_s K_s}{B}
\end{equation*}
for the tabulated numbers per Gaussian function.
Across the cc-pV$\zeta$Z bases used in this paper, $\eta \in [6.6, 7.5]$, against $9$ for a splat, so both representations pay the same order per function.
The basis part of each count scales with the number of functions and the coefficient part with the number of functions times $N_{\mathrm{occ}}$, so
\begin{equation*}
    \frac{\text{basis parameters}}{\text{coefficient parameters}} = \frac{\eta}{N_{\mathrm{occ}}} ,
\end{equation*}
with $\eta = 9$ for splats.
\Cref{tab:param-share} evaluates this ratio for the three molecules of \cref{sec:experiments}.

\begin{table}[h]
\centering
\caption{Basis parameters per coefficient parameter, $\eta / N_{\mathrm{occ}}$, with $\eta = 7$ for the Gaussian basis and $\eta = 9$ for splats.}
\label{tab:param-share}
\begin{tabular}{lccc}
\toprule
 & \ce{H2O} & \ce{C2H5OH} & Ala dipeptide \\
\midrule
$N_{\mathrm{occ}}$ & 5 & 13 & 39 \\
Gaussian basis & 1.40 & 0.54 & 0.18 \\
Splat cloud & 1.80 & 0.69 & 0.23 \\
\bottomrule
\end{tabular}
\end{table}

At matched function count $M = B$, \cref{eq:param-gto,eq:param-splat} give the relative overhead of the cloud,
\begin{equation}
\label{eq:param-overhead}
    \frac{P_{\mathrm{splat}}}{P_{\mathrm{GTO}}} - 1 = \frac{9 - \eta}{\eta + N_{\mathrm{occ}}} ,
\end{equation}
which decreases monotonically in $N_{\mathrm{occ}}$.
For $\eta \geq 6.6$ it is at most $21\%$ on water and at most $5.3\%$ on alanine dipeptide.
Water with cc-pVDZ is a concrete case, with $B = 24$ functions defined by $158$ tabulated numbers:
\begin{align*}
    P_{\mathrm{GTO}} &= 158 + 24 \cdot 5 \\
                      &= 278 , \\
    P_{\mathrm{splat}} &= 9 \cdot 24 + 24 \cdot 5 \\
                        &= 336 .
\end{align*}
The ratio is $1.21$, while counting only the coefficients on the Gaussian side would give $2.8$.
Matched function count is therefore also matched parameter count, up to the correction of \cref{eq:param-overhead}.
This is why the accuracy-per-parameter comparison of \cref{fig:isoparams} counts the tabulated basis on the Gaussian side, and why a gain at matched function count is not bought with hidden parameters.

\subsection{Gaussian-type orbitals as limits of splats}\label{app:gto-limit}

Every splat in \cref{eq:splat} is a positive function with no angular factor, while a Gaussian basis carries $p$, $d$ and $f$ functions whose angular factors create the nodal structure of an orbital.
\Cref{sec:method} claims that signed coefficients recover this structure from displaced splats of opposite sign.
\Cref{prop:gto-limit} makes the claim precise, and \cref{prop:gto-cost} shows what it costs.

We use isotropic Gaussians, which are splats with precision $\mA = 2\alpha\mathbbm{1}$ up to normalization,
\begin{equation*}
    e_{\vm}(\rr) = e^{-\alpha|\rr - \vm|^2} .
\end{equation*}
Two facts are needed.
First, derivatives with respect to the center produce Hermite polynomials.
For a multi-index $\vn = (n_x, n_y, n_z)$ with $|\vn| = n_x + n_y + n_z$,
\begin{equation}
\label{eq:hermite}
    \partial_{\vm}^{\vn}\, e_{\vm}(\rr) = \alpha^{|\vn|/2} \prod_{i \in \{x,y,z\}} H_{n_i}\!\big(\sqrt{\alpha}\,(r_i - m_i)\big)\; e_{\vm}(\rr) ,
\end{equation}
where $H_n$ is the Hermite polynomial of degree $n$.
This is the Hermite-Gaussian form used to evaluate molecular integrals \citep{Helgaker2000MEST}.
Second, a derivative in the center is a limit of differences of displaced copies.
For a step $h > 0$, the forward difference is
\begin{equation}
\label{eq:forward-diff}
    \Delta_h^{\vn} e_{\vm} = \sum_{\vj \leq \vn} (-1)^{|\vn| - |\vj|} \binom{\vn}{\vj}\, e_{\vm + h\vj} ,
    \qquad
    \binom{\vn}{\vj} = \prod_i \binom{n_i}{j_i} ,
\end{equation}
where $\vj \leq \vn$ holds componentwise.
It combines $(n_x + 1)(n_y + 1)(n_z + 1)$ displaced copies of $e_{\vm}$, all of which lie on the lattice
\begin{equation*}
    T_l = \big\{ \vj \in \mathbbm{N}^3 : |\vj| \leq l \big\} ,
    \qquad
    |T_l| = \tfrac16 (l+1)(l+2)(l+3) ,
\end{equation*}
whenever $|\vn| \leq l$.

\begin{proposition}[Gaussian-type orbitals are limits of splats]\label{prop:gto-limit}
Let $H^1$ be the Sobolev space of square-integrable functions with square-integrable gradient, the natural space for the kinetic energy of \cref{eq:energy-terms}.
\begin{enumerate}
\item[(i)] For every $l \geq 0$, with $\vs = \rr - \vm$,
\begin{equation*}
    \operatorname{span}\big\{ \vs^{\vn} e^{-\alpha|\vs|^2} : |\vn| \leq l \big\}
    = \operatorname{span}\big\{ \partial_{\vm}^{\vn} e_{\vm} : |\vn| \leq l \big\} .
\end{equation*}
\item[(ii)] For every multi-index $\vn$, as $h \to 0$,
\begin{equation*}
    \big\| h^{-|\vn|} \Delta_h^{\vn} e_{\vm} - \partial_{\vm}^{\vn} e_{\vm} \big\|_{H^1} = \mathcal{O}(h) .
\end{equation*}
\item[(iii)] Every function of a shell with angular momentum $l$ and $K$ primitives, for any contraction weights, is an $H^1$ limit of combinations of the $K\,|T_l|$ splats with the same exponents centered at $\RR_a + h\vj$, $\vj \in T_l$.
The same splats serve all $2l + 1$ functions of the shell.
\end{enumerate}
\end{proposition}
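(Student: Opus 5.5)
The plan is to prove the three parts in order, each building on the previous one. The whole argument rests on the Hermite identity \cref{eq:hermite} and the finite-difference formula \cref{eq:forward-diff}.

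For (i), I will compare degrees using the triangular structure of \cref{eq:hermite}. Since $H_n$ has degree exactly $n$ with leading coefficient $2^n$, we can write
\begin{equation*}
\partial_{\vm}^{\vn} e_{\vm}(\rr) = \big(2^{|\vn|}\alpha^{|\vn|}\,\vs^{\vn} + q_{\vn}(\vs)\big)\, e^{-\alpha|\vs|^2},
\end{equation*}
where $q_{\vn}$ is a polynomial of total degree below $|\vn|$. Hence $\partial_{\vm}^{\vn} e_{\vm}$ lies in the left-hand span for $|\vn|\le l$, which gives one inclusion.

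For the reverse inclusion I will induct on $|\vn|$. The function $\vs^{\vn}e^{-\alpha|\vs|^2}$ equals $(2\alpha)^{-|\vn|}\partial^{\vn}_{\vm}e_{\vm}$ minus the lower-degree terms. Those terms lie in the span of derivatives of order $<|\vn|$ by the induction hypothesis. Equivalently, the change-of-basis matrix between the two families, ordered by degree, is triangular with nonzero diagonal, so it is invertible.

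For (ii), I will use the standard representation of a forward difference as an averaged derivative:
\begin{equation*}
h^{-|\vn|}\Delta_h^{\vn} e_{\vm} = \int_{[0,1]^{3}} \partial_{\vm}^{\vn} e_{\vm + h\,\vt\circ\ldots}\,\dd{\vt}.
\end{equation*}
This comes from writing each one-dimensional difference as iterated integrals of $\partial_{m_i}$, i.e. a convolution of $\partial^{\vn}e$ with a B-spline kernel $\beta_h$ of support $[0,h\vn]$ and unit mass. The error is therefore
\begin{equation*}
\int \beta_h(\vt)\big(\partial^{\vn}e_{\vm+\vt}-\partial^{\vn}e_{\vm}\big)\,\dd{\vt}.
\end{equation*}
Applying Minkowski's inequality in $H^1$, the error is bounded by $\sup_{|\vt|\le h|\vn|}\|\partial^{\vn}e_{\vm+\vt}-\partial^{\vn}e_{\vm}\|_{H^1}$. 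By the mean value theorem this supremum is at most $h|\vn|\sup\|\partial_{\vm}\partial^{\vn}e_{\vm'}\|_{H^1}$.

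That last supremum is finite and independent of $\vm'$, because translation preserves the norm and the function is a polynomial times a Gaussian with all derivatives in $L^2$. The only care needed is justifying the move of the $H^1$ norm inside the integral. I will handle this by noting that the integrand is a continuous $H^1$-valued map of $\vt$ (translation is continuous in $H^1$), so the Bochner integral applies.

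For (iii), fix a shell. By \cref{def:gto-basis}, each $\chi_{sm}$ is $\sum_p d_{sp}\mathcal{Y}_{l m}(\vs)e^{-\alpha_{sp}|\vs|^2}$, and $\mathcal{Y}_{lm}$ is a homogeneous polynomial of degree $l$. By (i), applied at $\vm=\RR_a$ for each exponent $\alpha_{sp}$, each term equals $\sum_{|\vn|\le l} c^{(p,m)}_{\vn}\partial^{\vn}e^{(p)}_{\RR_a}$ with finite coefficients. I then replace $\partial^{\vn}e^{(p)}_{\RR_a}$ by $h^{-|\vn|}\Delta_h^{\vn}e^{(p)}_{\RR_a}$. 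By (ii) the total $H^1$ error is at most $\sum|c|\,\mathcal{O}(h)\to0$.

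Each difference uses only centers $\RR_a+h\vj$ with $\vj\le\vn$, hence $|\vj|\le l$, so $\vj\in T_l$. The approximant is therefore a combination of the $K|T_l|$ splats (exponent $\alpha_{sp}$, center $\RR_a+h\vj$). It is a splat combination after absorbing the normalization constant of \cref{eq:splat} into the coefficients. The set of centers does not depend on $m$ or on the contraction weights, so the same splats serve all $2l+1$ functions, and only the coefficients change.

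The main obstacle is (ii), specifically making the $H^1$ bound uniform and handled rigorously via the integral representation. Parts (i) and (iii) are bookkeeping once (ii) is in place.
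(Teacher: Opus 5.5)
Your proof is correct and follows the paper's route. Part (i) uses a triangular change of basis between Hermite polynomials and monomials, part (iii) assembles the approximants over the lattice $T_l$ with the normalizations absorbed into the coefficients, and part (ii) is the paper's Taylor-remainder estimate for the smooth map $\vm \mapsto e_{\vm}$ into $H^1$, which you make explicit through the B-spline (averaged-derivative) representation, getting a uniform constant from translation invariance. The one slip is cosmetic: the averaging integral in (ii) runs over $[0,1]^{|\vn|}$ with the increments summed within each coordinate, not over $[0,1]^3$, as your subsequent B-spline description already states correctly.
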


\begin{proof}
(i) In one dimension, the Rodrigues formula with $t = \sqrt{\alpha}(x - m)$ reads
\begin{equation*}
    H_n(t)\, e^{-t^2} = (-1)^n\, \frac{\mathrm{d}^n}{\mathrm{d}t^n}\, e^{-t^2} ,
\end{equation*}
and $\partial_m = -\sqrt{\alpha}\, \partial_t$ gives \cref{eq:hermite}.
The three-dimensional case follows because $e_{\vm}$ factorizes over coordinates.
$H_n$ has degree exactly $n$, so the Hermite polynomials and the monomials of degree at most $n$ are related by an invertible triangular change of basis.
Taking products over coordinates, the polynomials multiplying $e_{\vm}$ on the two sides span the same space.

(ii) The map $\vm \mapsto e_{\vm}$ from $\mathbbm{R}^3$ to $H^1$ is infinitely differentiable, because every center derivative is a polynomial times a Gaussian.
For a smooth map into a Banach space, $h^{-|\vn|}\Delta_h^{\vn}$ equals $\partial^{\vn}_{\vm}$ plus a Taylor remainder bounded by $h$ times the supremum of the derivatives of order $|\vn| + 1$ over the box spanned by the lattice points, which is finite.

(iii) A function of the shell is a solid harmonic of degree $l$ times a contracted radial part, and a solid harmonic is a combination of monomials of degree $l$.
By (i), each primitive is a finite combination of center derivatives of order at most $l$.
By (ii), each of these is the $H^1$ limit of combinations of copies centered at $\RR_a + h\vj$, $\vj \in T_l$.
The contraction is a fixed finite sum over the $K$ exponents, and the lattice does not depend on $m$.
\end{proof}

The inclusion extends from functions to energies.
Suppose $E$ is continuous in $H^1$ on orthonormal orbitals, as it is for Hartree--Fock and the local density approximation.
Approximating each orbital within $\delta$ in $H^1$ moves the L\"owdin-orthonormalized orbitals by $\mathcal{O}(\delta)$, because the retraction of \cref{eq:lowdin} is continuous where the Gram matrix is invertible.
Hence, for a basis with shells $(l_s, K_s)$,
\begin{equation}
\label{eq:gto-energy-bound}
    \inf_{\Theta,\, \mC} E \;\leq\; E_{\mathrm{GTO}}
    \qquad \text{over clouds of size} \qquad
    M = \sum_s K_s\, |T_{l_s}| .
\end{equation}

\begin{proposition}[The cost of the limit]\label{prop:gto-cost}
Let $\vc_h$ be the coefficient vector of $h^{-|\vn|}\Delta_h^{\vn} e_{\vm}$ over the displaced copies, let $\mS_h$ be their overlap matrix, and let
\begin{equation*}
    \kappa_h = \frac{\sum_{\vj} |(\vc_h)_{\vj}|\, \|e_{\vm + h\vj}\|_{L^2}}{\| h^{-|\vn|}\Delta_h^{\vn} e_{\vm} \|_{L^2}}
\end{equation*}
be the relative condition number of the sum.
Then, as $h \to 0$,
\begin{align*}
    \|\vc_h\| &\geq h^{-|\vn|} , \\
    \lambda_{\min}(\mS_h) &= \mathcal{O}\big(h^{2|\vn|}\big) , \\
    \kappa_h &= \Theta\big(h^{-|\vn|}\big) .
\end{align*}
\end{proposition}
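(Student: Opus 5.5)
The plan is to work directly from the explicit coefficients in \cref{eq:forward-diff} and to use \cref{prop:gto-limit}(ii) as the only analytic input. The function $h^{-|\vn|}\Delta_h^{\vn} e_{\vm}$ is written over the displaced copies $e_{\vm + h\vj}$, $\vj \leq \vn$. Its coefficients are
\begin{equation*}
    (\vc_h)_{\vj} = h^{-|\vn|}(-1)^{|\vn|-|\vj|}\binom{\vn}{\vj} .
\end{equation*}
Every binomial factor is a positive integer, so every entry has magnitude at least $h^{-|\vn|}$. Already the single entry $\vj = 0$ gives $\|\vc_h\| \geq h^{-|\vn|}$, which is the first claim. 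The copies are pairwise distinct translates of one Gaussian for $h > 0$, so they are linearly independent and $\mS_h$ is positive definite. This makes $\lambda_{\min}(\mS_h)$ and the expansion well defined.

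For the eigenvalue bound I will use the Rayleigh quotient with the test vector $\vc_h$. By construction,
\begin{equation*}
    \vc_h^\top \mS_h \vc_h = \big\| h^{-|\vn|}\Delta_h^{\vn} e_{\vm} \big\|_{L^2}^2 .
\end{equation*}
By \cref{prop:gto-limit}(ii), this function converges to $\partial_{\vm}^{\vn} e_{\vm}$ in $H^1$, and hence in $L^2$. So the right-hand side converges to the finite number $\|\partial_{\vm}^{\vn} e_{\vm}\|_{L^2}^2$ and is bounded for small $h$. Dividing by $\|\vc_h\|^2 \geq h^{-2|\vn|}$ gives $\lambda_{\min}(\mS_h) \leq C h^{2|\vn|}$.

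For the condition number, all copies are translates and share the norm $\|e_{\vm}\|_{L^2}$. The numerator of $\kappa_h$ is therefore exactly
\begin{equation*}
    h^{-|\vn|}\,\|e_{\vm}\|_{L^2} \sum_{\vj \leq \vn}\binom{\vn}{\vj} = h^{-|\vn|}\, 2^{|\vn|}\, \|e_{\vm}\|_{L^2} .
\end{equation*}
The denominator converges to $\|\partial_{\vm}^{\vn} e_{\vm}\|_{L^2}$, again by \cref{prop:gto-limit}(ii).

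The one point that needs care, and the step I expect to be the main obstacle, is that this limit is strictly positive, so that the denominator is bounded away from zero for small $h$. For that I will invoke \cref{eq:hermite}. The limit is a product of Hermite polynomials, each of exact degree $n_i$ and hence not identically zero, multiplied by a positive Gaussian, so it is a nonzero $L^2$ function. With the denominator bounded above and below by positive constants and the numerator exactly proportional to $h^{-|\vn|}$, we get $\kappa_h = \Theta(h^{-|\vn|})$, which completes the argument.
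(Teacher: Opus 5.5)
Your proof is correct and follows the paper's argument step for step. It uses the same explicit binomial coefficients (you take the entry $\vj = 0$ where the paper takes $\vj = \vn$), the same Rayleigh quotient with the difference-stencil vector as test vector, and the same exact numerator $2^{|\vn|}h^{-|\vn|}\|e_{\vm}\|_{L^2}$ against a denominator converging to $\|\partial_{\vm}^{\vn} e_{\vm}\|_{L^2}$, whose positivity you justify via the Hermite form where the paper only asserts it.
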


\begin{proof}
By \cref{eq:forward-diff}, $\vc_h = h^{-|\vn|}\vv$, where the entries of $\vv$ are signed binomial coefficients and the entry for $\vj = \vn$ is one, so $\|\vv\| \geq 1$.
By \cref{prop:gto-limit}(ii) in $L^2$,
\begin{align*}
    \vv^\top \mS_h \vv
    &= \|\Delta_h^{\vn} e_{\vm}\|_{L^2}^2 \\
    &= h^{2|\vn|}\, \|\partial_{\vm}^{\vn} e_{\vm}\|_{L^2}^2 + \mathcal{O}\big(h^{2|\vn| + 1}\big) .
\end{align*}
The Rayleigh quotient gives $\lambda_{\min}(\mS_h) \leq \vv^\top\mS_h\vv / \|\vv\|^2$, which is the second statement.
For the third, all copies have the same norm $\|e_{\vm}\|_{L^2}$, so the numerator of $\kappa_h$ is $h^{-|\vn|}\, \|\vv\|_1\, \|e_{\vm}\|_{L^2}$, while the denominator tends to $\|\partial^{\vn}_{\vm} e_{\vm}\|_{L^2} > 0$.
\end{proof}

\Cref{prop:gto-limit} licenses two statements.
First, the chart of \cref{eq:splat} needs no angular factor, since positive splats with signed coefficients reach every angular momentum.
Second, \cref{eq:gto-energy-bound} shows that a cloud is at least as expressive as the Gaussian basis it replaces.
The construction proves inclusion and is not the representation that the optimizer finds.
It uses $|T_1| = 4$ splats per $p$ primitive and $|T_2| = 10$ per $d$ primitive, more than the $2l + 1$ functions of a shell, while the clouds of \cref{sec:experiments} reach lower energies than the Gaussian basis at matched function count.
That gain comes from optimizing where the splats go.

\Cref{prop:gto-cost} shows that the limit is not attained by any finite cloud.
Reproducing an angular factor exactly requires coincident splats with coefficients of size $h^{-l}$, and their overlap matrix becomes singular as $h^{2l}$.
The occupied Gram matrix $\mG = \mC^\top\mS\mC$ of \cref{app:proofs-regeigh} can stay well conditioned in this limit, because the coefficients compensate for the near-singular overlap.
The compensation is a cancellation, however, and in floating-point arithmetic with unit roundoff $u$ it carries a relative error of order $u\,\kappa_h$.
A finite cloud therefore imitates angular momentum $l$ only down to displacements $h$ with $u\, h^{-l} \ll 1$, and this, not the floor of \cref{eq:regeigh}, is the limit.

\subsection{Floating-orbital models}\label{app:fsgo}

Optimizing the positions and shapes of Gaussian basis functions is an old idea, and the question is why it did not become standard practice.
This subsection shows that the earlier floating-orbital models are special cases of GS-DFT, and it states the four obstacles that stopped them as precise costs.
Each obstacle is removed by a specific component of GS-DFT.

Gaussians entered quantum chemistry because their integrals have closed forms.
\citet{Boys1950Gaussian} observed that the product of two Gaussians on different centers is a single Gaussian on a third, \cref{eq:gpt}, so every one- and two-electron integral reduces to elementary functions and the Boys function.
The price is the shape.
A Gaussian has no cusp at the nucleus and decays faster than an atomic orbital, so many Gaussians are needed per orbital, and standard basis sets pin fixed contractions of them to the atoms (\cref{def:gto-basis}).

\citet{Frost1967FSGO} took the opposite route with the floating spherical Gaussian orbital (FSGO) model.
It assigns one normalized spherical Gaussian to each doubly occupied orbital and minimizes the Hartree--Fock energy of the resulting determinant over the centers and widths, with no atom-centered basis.
\citet{Vescelius1974Ellipsoidal} and \citet{Cohen1976Ellipsoidal} made the Gaussians ellipsoidal.
\citet{Pederson1988FloatingGaussians} optimized floating Gaussians in DFT by simulated annealing and used the absence of Pulay terms to obtain Hellmann--Feynman forces without corrections, the result of \cref{prop:forces}.
Explicitly correlated Gaussians pushed nonlinear optimization of Gaussian parameters to spectroscopic accuracy, at a cost that confines them to a handful of particles \citep{Mitroy2013ECG}.
More recently, \citet{TamayoMendoza2018AutodiffHF} used automatic differentiation to optimize basis exponents and centers in Hartree--Fock on small molecules.

\begin{proposition}[FSGO is GS-DFT at $M = N_{\mathrm{occ}}$]\label{prop:fsgo}
Let $M = N_{\mathrm{occ}}$ and let $\mC$ be invertible.
Then the density matrix of GS-DFT does not depend on $\mC$,
\begin{equation*}
    \mP = 2\, \mS(\Theta)^{-1} ,
\end{equation*}
and the GS-DFT energy is a function of $\Theta$ alone.
With isotropic splats and $E_{\xc}$ set to exact exchange, it is the FSGO energy of \citet{Frost1967FSGO}.
With anisotropic splats, it is the ellipsoidal model of \citet{Vescelius1974Ellipsoidal}.
\end{proposition}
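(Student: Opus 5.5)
The plan is to compute the density matrix directly from the L\"owdin retraction of \cref{eq:lowdin}. Using the form recorded in \cref{app:proofs-forces}, $\mP = 2\bar{\mC}\bar{\mC}^\top = 2\,\mC(\mC^\top\mS\mC)^{-1}\mC^\top$. When $M = N_{\mathrm{occ}}$, $\mC$ is square and, by hypothesis, invertible. We then have $(\mC^\top\mS\mC)^{-1} = \mC^{-1}\mS^{-1}\mC^{-\top}$, and the outer factors cancel, so $\mP = 2\,\mS(\Theta)^{-1}$. Invertibility of $\mS$ follows from that of $\mG = \mC^\top\mS\mC$, which \cref{eq:lowdin} requires. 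Equivalently, when $\mC$ is invertible the orbital span equals the span of all $M$ splats, so the projector onto that span carries no information about $\mC$. Here I assume the regularized floor of \cref{eq:regeigh} is inactive; by \cref{prop:regeigh}(i), that case gives exactly $\mG^{-1/2}$, so the identity holds for the computed $\mP$ as well.

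Next, every term of the energy depends on $\mC$ only through $\mP$. The kinetic, external and Hartree terms are traces of $\mP$ against integral matrices over the splats. Semilocal exchange--correlation depends on $\rho = \sum_{\mu\nu}\mP_{\mu\nu}g_\mu g_\nu$ and its derivatives. The kinetic-energy density and exact exchange are the two orbital-dependent quantities. Both are invariant under the orthogonal rotations $\bar{\mC}\mapsto\bar{\mC}\mO$ and are expressible via $\mP$; for exchange, $E_x = -\tfrac14\sum \mP_{\mu\nu}\mP_{\lambda\sigma}(\mu\lambda|\nu\sigma)$ in the closed-shell convention. Substituting $\mP = 2\mS^{-1}$ shows the energy is a function of $\Theta$ alone.

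For the identification with FSGO, I would recall that the FSGO model takes one normalized floating spherical Gaussian per doubly occupied orbital and forms the single determinant of the symmetrically orthonormalized (equivalently, any orthonormalized) set. The closed-shell Hartree--Fock energy of that determinant is the standard expression $E = \operatorname{Tr}(\mP h) + \tfrac12\operatorname{Tr}(\mP J[\mP]) - \tfrac14\operatorname{Tr}(\mP K[\mP]) + E_{\mathrm{BG}}$, with $\mP = 2\mS^{-1}$ over the Gaussians, since the determinant is invariant under nonsingular mixing of its orbitals. Setting $E_\xc$ to exact exchange in GS-DFT yields the same functional. Isotropic splats ($\mA = 2\alpha\mathbbm{1}$, parametrized by $\vm$ and one width) are exactly the FSGO variables. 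General $\mA$ gives the ellipsoidal model.

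The main obstacle is not algebraic but one of matching conventions: I need to check that the GS-DFT normalization of the splats and the chosen exchange prefactor reproduce Frost's energy exactly, including the spin factor of $2$. I would handle this by verifying both sides on the closed-shell Slater-determinant formula. A secondary subtlety is the regularized retraction, and I would treat it by assuming $\operatorname{cond}(\mG)\le\varepsilon^{-1}$, as noted above.
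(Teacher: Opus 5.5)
Your proposal is correct and follows the paper's proof. Both arguments use the same cancellation $\mP = 2\mC(\mC^\top\mS\mC)^{-1}\mC^\top = 2\mS^{-1}$, note that the energy then depends on $\mC$ only through $\mP$, and identify the result with the FSGO determinant via its invariance under invertible recombination. Two of your additions tighten points the paper's proof passes over: you treat the orbital-dependent terms (exact exchange and $\tau$) explicitly, and you note that under the regularized floor of \cref{eq:regeigh} the identity $\mP = 2\mS^{-1}$ needs $\operatorname{cond}(\mG)\le\varepsilon^{-1}$.
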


\begin{proof}
With $\mG = \mC^\top\mS\mC$ and the L\"owdin coefficients of \cref{eq:lowdin},
\begin{align*}
    \mP &= 2\, \bar{\mC}\bar{\mC}^\top \\
        &= 2\, \mC\, (\mC^\top \mS \mC)^{-1} \mC^\top \\
        &= 2\, \mC\, \mC^{-1} \mS^{-1} \mC^{-\top} \mC^\top \\
        &= 2\, \mS^{-1} .
\end{align*}
The density $\rho = \sum_{\mu\nu} \mP_{\mu\nu}\, g_\mu g_\nu$, and hence every term of the energy, then depends on $\Theta$ only.
It is the density of the determinant built from the $N_{\mathrm{occ}}$ Gaussians themselves, whose energy is invariant under any invertible recombination of its orbitals.
That is the determinant FSGO optimizes, with spherical Gaussians in the original model and ellipsoidal ones in its extension.
\end{proof}

\Cref{prop:fsgo} places the whole lineage in one corner of the GS-DFT family: the minimal cloud, $M = N_{\mathrm{occ}}$, with the coefficients reduced to a gauge.
GS-DFT keeps the model and moves along the other axis, $M > N_{\mathrm{occ}}$.
Four obstacles made that axis unreachable.
Write $D = M(9 + N_{\mathrm{occ}})$ for the number of free parameters and $T$ for the number of optimization steps.
\begin{enumerate}
\item[(i)] \emph{Integral cost.}
A fixed basis computes its electron-repulsion tensor once and reuses it at every iteration.
In a floating basis every entry depends on $\Theta$, so the two-electron term costs
\begin{equation*}
    \mathcal{O}(M^4) \text{ integrals per step, } \qquad \mathcal{O}(T M^4) \text{ in total.}
\end{equation*}
A fixed auxiliary basis $\{\varphi_a\}$ pinned to the atoms does not remove this cost without an accuracy penalty, because by \cref{eq:df-gap} its error is $\tfrac12 \|\rho - \Pi_\varphi \rho\|_{\mathrm{C}}^2$, and nothing controls it once the cloud, and with it $\rho$, moves away from $\operatorname{span}\{\varphi_a\}$.
\item[(ii)] \emph{Gradients.}
Optimizing $\Theta$ needs the gradient of the energy with respect to all $D$ parameters.
Finite differences cost $D + 1$ energy evaluations per gradient, and analytic gradients without automatic differentiation require a hand-derived derivative for every integral class.
\item[(iii)] \emph{Optimization at scale.}
The energy is nonconvex in $\Theta$.
Newton-type methods need the $D \times D$ Hessian,
\begin{equation*}
    \mathcal{O}(D^2) \text{ memory and } \mathcal{O}(D^3) \text{ time per step,}
\end{equation*}
which is infeasible for the clouds of \cref{sec:experiments}.
\item[(iv)] \emph{Linear dependence.}
Freely moving functions approach each other, $\lambda_{\min}(\mS) \to 0$ (\cref{prop:gto-cost}), and the eigendecomposition gradient of the orthonormalization contains factors $(\lambda_i - \lambda_j)^{-1}$ that diverge at degeneracies.
\end{enumerate}

\Cref{tab:fsgo-obstacles} pairs each obstacle with the component of GS-DFT that removes it.
The adaptive density fit of \cref{sec:method} rebuilds the auxiliary basis from the cloud, so its error is bounded by \cref{prop:pairproduct} rather than growing with the motion of the cloud, and its per-step cost is quadratic (\cref{alg:gsdft}).
Reverse-mode automatic differentiation evaluates the full gradient at a cost bounded by a constant multiple of one energy evaluation, independent of $D$ \citep{baydinAutomaticDifferentiationMachine2018}.
Adam \citep{kingmaAdamMethodStochastic2015} needs $\mathcal{O}(D)$ memory and time per step, the regime in which 3DGS optimizes millions of Gaussians \citep{Kerbl2023GaussianSplatting}.
The regularized eigensolve bounds the backward pass at degeneracies (\cref{prop:regeigh}).

\begin{table}[h]
\centering
\caption{Obstacles that confined floating-Gaussian models to $M = N_{\mathrm{occ}}$, and the component of GS-DFT that removes each.}
\label{tab:fsgo-obstacles}
\resizebox{\linewidth}{!}{
\begin{tabular}{llll}
\toprule
Obstacle & Earlier cost & Component of GS-DFT & Cost or bound \\
\midrule
(i) Integrals & $\mathcal{O}(M^4)$ per step & Adaptive density fitting & \cref{alg:gsdft}, \cref{prop:pairproduct} \\
(ii) Gradients & $D + 1$ evaluations & Reverse-mode differentiation & $\mathcal{O}(1)$ evaluations \\
(iii) Scale & $\mathcal{O}(D^3)$ per step & Adam & $\mathcal{O}(D)$ per step \\
(iv) Dependence & unbounded gradient & Regularized eigensolve & \cref{prop:regeigh} \\
\bottomrule
\end{tabular}}
\end{table}

None of the obstacles is a limit of the representation, which \cref{prop:gto-limit} shows is at least as expressive as a fixed basis.
They are limits of the numerics, and they are what the related-work section means by inaccurate and unscalable: the earlier models were inaccurate because they were minimal, and they stayed minimal because (i) to (iv) made $M > N_{\mathrm{occ}}$ too expensive.

\subsection{Correspondence with 3D Gaussian Splatting}\label{app:3dgs}

\Cref{sec:background} writes 3DGS and DFT as two instances of one optimization problem over Gaussian primitives, \cref{eq:3dgs,eq:ksdft}.
The correspondence is what lets GS-DFT inherit the 3DGS toolkit, and it is also easy to overstate.
This subsection proves what carries over exactly and states precisely where the two problems differ.

\paragraph{The chart carries over exactly.}
\citet{Kerbl2023GaussianSplatting} parameterize the covariance of each Gaussian by a unit quaternion $\vq$ and a log-scale vector $\bm{\sigma} \in \mathbbm{R}^3$,
\begin{equation*}
    \mSigma = \mU(\vq)\, \diag(e^{2\bm{\sigma}})\, \mU(\vq)^\top .
\end{equation*}
Since $\mA = \mSigma^{-1}$ and $\mU(\vq)$ is orthogonal,
\begin{align*}
    \mA &= \mU(\vq)\, \diag(e^{-2\bm{\sigma}})\, \mU(\vq)^\top , \\
    \vl &= -2\bm{\sigma} ,
\end{align*}
so the 3DGS chart and the chart of \cref{eq:splat} are the same map up to a fixed linear change of the log-scales.
Both are unconstrained, every point is a valid Gaussian, and both are equivariant (\cref{prop:equivariance}).
We parameterize the precision because it is what enters the closed-form integrals, through the product precision $\mA_\mu + \mA_\nu$ of \cref{eq:ovlp,eq:gpt}.
Both methods optimize all primitives jointly with Adam, although 3DGS assigns each parameter group its own learning rate while GS-DFT uses a single cosine-scheduled rate (\cref{app:recipe}).
\Cref{tab:3dgs} pairs the remaining components.

\begin{table}[h]
\centering
\small
\caption{Components of 3DGS and their counterparts in GS-DFT.}
\label{tab:3dgs}
\begin{tabular}{lll}
\toprule
 & 3DGS & GS-DFT \\
\midrule
Primitive & anisotropic Gaussian & anisotropic Gaussian \\
Shape chart & covariance, quaternion and log-scales & precision, quaternion and log-eigenvalues \\
Weights & opacity and color, nonnegative & orbital coefficients $\mC$, signed \\
Channels & color channels & occupied orbitals \\
Represented signal & radiance along rays & orbitals $\psi_k$ and density $\rho$ \\
Forward map & projection and alpha compositing & closed-form integrals and quadrature \\
Objective & photometric loss against views & variational energy, no data \\
Constraint & none & orthonormality, via \cref{eq:lowdin} \\
Initialization & structure-from-motion points & chemically informed, from nuclei \\
Cloud size & adaptive, by densification and pruning & fixed $M$ \\
Optimizer & Adam & Adam \\
\bottomrule
\end{tabular}
\end{table}

\paragraph{The model is linear where 3DGS is not.}
3DGS renders a pixel $\vx$ by sorting the Gaussians along its ray and alpha compositing their colors $\vc_i$,
\begin{align}
    \mathcal{R}(\vx) &= \sum_{i} \vc_i\, a_i(\vx) \prod_{j < i} \big(1 - a_j(\vx)\big) , \label{eq:alpha-blend} \\
    a_i(\vx) &= o_i\, G^{\mathrm{2D}}_i(\vx) , \nonumber
\end{align}
with opacity $o_i \in (0, 1)$ and projected footprint $G^{\mathrm{2D}}_i \leq 1$.
Expanding the product,
\begin{equation*}
    \mathcal{R}(\vx) = \sum_i \vc_i\, a_i(\vx) + \mathcal{O}\big(\max_j a_j(\vx)^2\big) ,
\end{equation*}
so the linear superposition of \cref{eq:3dgs} is the first-order, low-opacity limit of \cref{eq:alpha-blend}.
In GS-DFT the orbitals are exactly linear in $\mC$ at fixed $\Theta$ and the density is exactly quadratic.

\paragraph{The weights must be signed.}
3DGS weights are nonnegative because radiance is.
Orbitals cannot be built that way.

\begin{proposition}[Signed coefficients are necessary]\label{prop:signed}
If $N_{\mathrm{occ}} \geq 2$, no two orthonormal orbitals can both be nonnegative combinations of splats.
\end{proposition}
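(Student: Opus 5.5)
The approach is a contradiction argument built on strict positivity of splats. Suppose $\psi_k = \sum_\mu \mC_{\mu k} g_\mu$ and $\psi_l = \sum_\mu \mC_{\mu l} g_\mu$ with $k \neq l$, all coefficients $\mC_{\mu k}, \mC_{\mu l} \geq 0$, and $\int \psi_k \psi_l = \delta_{kl}$. If coefficients are allowed to change with the Löwdin retraction, I will apply the argument to the orthonormalized $\bar{\mC}$. The statement is meant for the orthonormal orbitals that actually enter the density.

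The first step records positivity. By \cref{eq:splat}, every splat satisfies $g_\mu(\rr) > 0$ for all $\rr$, since $\mA_\mu \succ 0$ makes the normalization constant $\det(\mA_\mu/\pi)^{1/4}$ positive and the exponential is positive. Hence each $\psi_k$ is a nonnegative function. Normalization $\int \psi_k^2 = 1$ forces some $\mC_{\mu k} > 0$, and therefore $\psi_k(\rr) > 0$ everywhere. The same holds for $\psi_l$.

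The second step computes the overlap directly:
\begin{equation*}
    \int \dd[3]{\rr}\, \psi_k \psi_l = \sum_{\mu\nu} \mC_{\mu k} \mC_{\nu l}\, \mS_{\mu\nu}(\Theta) .
\end{equation*}
By \cref{eq:ovlp,eq:malahanobis}, every $\mS_{\mu\nu} > 0$. The determinant ratio is positive, and $\mK_{\mu\nu}$ is an exponential. Every term is therefore nonnegative, and the term with $\mu$ and $\nu$ chosen so that $\mC_{\mu k} > 0$ and $\mC_{\nu l} > 0$ is strictly positive. The overlap is then strictly positive, which contradicts $\int \psi_k \psi_l = 0$. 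Equivalently, the integral of the product of two everywhere-positive continuous functions is positive. Since $N_{\mathrm{occ}} \geq 2$ supplies at least one such pair $k \neq l$, the claim follows.

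There is no real obstacle here. The only point needing care is the degenerate case in which one orbital has all coefficients zero, and normalization rules it out. I would close by remarking that this is exactly why, in \cref{tab:3dgs}, the weights are signed and nodes arise from opposite-sign neighbouring splats, as in \cref{prop:gto-limit}.
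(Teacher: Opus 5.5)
Your proof is correct and follows the paper's argument. Every splat is strictly positive, so a nonnegative, nonzero combination is strictly positive everywhere, and the overlap of two such orbitals is then strictly positive, which contradicts orthogonality. Your term-by-term check via $\mS_{\mu\nu} > 0$ and your use of normalization to exclude the zero combination are a more explicit version of the same step.
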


\begin{proof}
Every splat is strictly positive everywhere, so a nonnegative, nonzero combination of splats is strictly positive everywhere.
If $\psi_1$ and $\psi_2$ were both of this form, then $\psi_1 \psi_2 > 0$ everywhere and
\begin{equation*}
    \int \dd[3]{\rr}\, \psi_1(\rr)\, \psi_2(\rr) > 0 ,
\end{equation*}
which contradicts orthogonality.
\end{proof}

At most one orbital can be sign-definite, and every other orbital needs coefficients of both signs.
This is the nodal structure that \cref{prop:gto-limit} builds from displaced splats of opposite sign.

\paragraph{The objective couples every pair of primitives.}
In 3DGS each pixel depends only on the Gaussians whose footprints cover it, and the loss is a sum over pixels.
In GS-DFT, the Hartree energy couples every pair of splat products through the Coulomb kernel,
\begin{equation*}
    E_H = \tfrac12 \sum_{\mu\nu\lambda\sigma} \mP_{\mu\nu} \mP_{\lambda\sigma}\, (g_\mu g_\nu \,|\, g_\lambda g_\sigma) ,
\end{equation*}
and this coupling does not decay to zero with distance.
By the Gaussian product theorem, $g_\mu g_\nu$ carries charge $\mS_{\mu\nu}$ at the center $\vm_{\mu\nu}$ of \cref{eq:gpt}, and as the two products separate,
\begin{equation*}
    (g_\mu g_\nu \,|\, g_\lambda g_\sigma) = \frac{\mS_{\mu\nu}\, \mS_{\lambda\sigma}}{|\vm_{\mu\nu} - \vm_{\lambda\sigma}|} \big(1 + o(1)\big) .
\end{equation*}
The interaction falls off only as the inverse distance, so it cannot be truncated by locality the way a rendering footprint can.
This is why GS-DFT needs the density fit of \cref{app:proofs-df}.
The orthonormality constraint couples every pair of orbitals in the same global way, which is why it needs the retraction of \cref{app:proofs-regeigh}.

\paragraph{Nothing is projected.}
Splatting in graphics is the projection of each Gaussian onto the image plane.
The elliptical weighted average filter of \citet{Zwicker2001SurfaceSplatting} replaces the camera map by its Jacobian $\mJ_{\mathrm{cam}}$ at the Gaussian's center, so that the projection stays Gaussian with covariance
\begin{equation*}
    \mSigma' = \mJ_{\mathrm{cam}}\, \mW\, \mSigma\, \mW^\top \mJ_{\mathrm{cam}}^\top ,
\end{equation*}
for a viewing transformation $\mW$.
The projection is therefore exact only to first order in the camera map.
GS-DFT integrates in three dimensions, the overlap, kinetic and Coulomb integrals are exact closed forms, and the only approximation in the forward map is the quadrature of $E_{\xc}$.
The name refers to the primitive and the chart, not to a rendering step.

\paragraph{There is no data.}
3DGS solves an inverse problem.
It fits a set of observed views $\{I_v\}$, and its quality is measured on views held out from the fit, so the training loss can decrease while the held-out error increases.
GS-DFT solves a forward problem.
For exact integrals and orthonormal orbitals, the Rayleigh--Ritz principle bounds every parameter value by the complete-basis energy $E_{\mathrm{CBS}}$ of the same functional,
\begin{equation*}
    E(\Theta, \mC) \;\geq\; E_{\mathrm{CBS}} \qquad \text{for all } (\Theta, \mC) ,
\end{equation*}
so a lower energy is always a better solution, and there is no held-out set and no generalization gap.
With density fitting, the bound holds up to the gap of \cref{eq:df-gap}.
The price is that there is no data to initialize from, which is why GS-DFT places its initial cloud from the nuclei instead of from reconstructed points.

\section{Implementation}\label{appendix:implementation}
This appendix describes GS-DFT as it runs.
\Cref{app:software} gives the software stack and the numerical conventions, and \cref{app:integrals} derives every integral of the energy.
\Cref{app:complexity} gives the cost and memory of each operation, and \cref{app:sharding} explains how one system is spread over several devices.
\Cref{app:recipe} lists the training settings, \cref{app:evaluation} the evaluation protocol behind every reported number, and \cref{app:reproducibility} the hardware, software and code layout.

\subsection{Software and numerics}\label{app:software}

GS-DFT uses \texttt{dftax} \citep{GuzmanCordero2026dftax} as its base, because it implements Kohn--Sham DFT natively in JAX \citep{bradburyJAXComposableTransformations2018, kidgerEquinoxNeuralNetworks2021}.
The engine supplies the exchange--correlation functionals, the Becke quadrature grids, the Boys function, and the fixed-basis Gaussian reference calculations.
Everything specific to splats lives in the \texttt{gs-dft} package:\footnote{\ificlrfinal Repository\else Anonymized repository\fi: \coderepo.} the chart of \cref{eq:splat}, the splat integrals, the adaptive density fit, screening, the regularized eigensolve, and multi-device sharding.
PySCF is used only as an external reference, for the calculations listed in \cref{app:evaluation}.

All arithmetic is in double precision, which the package enables globally when it is imported.
The occupations are frozen, and every term of \cref{eq:energy-terms} is a differentiable function of the parameters $(\Theta, \mC)$.
One reverse-mode pass therefore returns the gradient with respect to all of them, including the basis gradients that flow through the L\"owdin retraction of \cref{eq:lowdin}.
Three operations carry hand-written reverse passes: the regularized eigensolve of \cref{app:proofs-regeigh}, the streamed Coulomb contractions, and the Coulomb quadrature of \cref{app:integrals}.
Each replaces an automatically generated backward pass that would either store one intermediate per quadrature node or divide by eigenvalue gaps.

The chart is implemented as written in \cref{eq:splat}, so the precision and its derived quantities are closed forms in $(\vq, \vl)$,
\begin{align*}
    \mA &= \mU(\vq)\, \diag(e^{\vl})\, \mU(\vq)^\top , \\
    \det \mA &= e^{\sum_i \ell_i} , \\
    g_\theta(\rr) &= \det(\mA / \pi)^{1/4}\, \exp\!\big\{ -\tfrac12 (\rr - \vm)^\top \mA\, (\rr - \vm) \big\} ,
\end{align*}
and no matrix decomposition of a splat's own precision is ever needed.
Every $3 \times 3$ determinant, inverse and linear solve elsewhere in the code uses the adjugate formula, which is exact, branch-free and faster than a batched library call at this size.

\subsection{Integrals}\label{app:integrals}

Every term of \cref{eq:energy-terms} except the exchange--correlation energy reduces to a Gaussian integral over a product of two splats, followed, for the Coulomb terms, by a one-dimensional quadrature.
The exchange--correlation energy is a quadrature on a grid.
The building block is the product of \cref{eq:gpt},
\begin{equation}
\label{eq:product-explicit}
    g_\mu(\rr)\, g_\nu(\rr) = N_\mu N_\nu\, \mK_{\mu\nu}\, \exp\!\big\{ -\tfrac12 (\rr - \vm_{\mu\nu})^\top \mA_{\mu\nu}\, (\rr - \vm_{\mu\nu}) \big\} ,
\end{equation}
with $N_\mu = \det(\mA_\mu / \pi)^{1/4}$, the prefactor $\mK_{\mu\nu}$ of \cref{eq:malahanobis}, and
\begin{align*}
    \mA_{\mu\nu} &= \mA_\mu + \mA_\nu , \\
    \vm_{\mu\nu} &= \mA_{\mu\nu}^{-1} (\mA_\mu \vm_\mu + \mA_\nu \vm_\nu) .
\end{align*}

\paragraph{Overlap and kinetic energy.}
Integrating \cref{eq:product-explicit} gives the overlap of \cref{eq:ovlp},
\begin{align*}
    \mS_{\mu\nu}
    &= N_\mu N_\nu\, \mK_{\mu\nu}\, (2\pi)^{3/2} \det(\mA_{\mu\nu})^{-1/2} \\
    &= 2^{3/2}\, \frac{(\det \mA_\mu \det \mA_\nu)^{1/4}}{\det(\mA_\mu + \mA_\nu)^{1/2}} \, \mK_{\mu\nu}
    = \sqrt \frac{\det \sqrt{\mA_\mu \mA_\nu}}{\det(\tfrac12 (\mA_\mu + \mA_\nu))} \, \mK_{\mu\nu} \; ,
\end{align*}
where the last line suggests an algebraic interpretation of the prefactor -- it is a determinant of the ratio of the geometric and algebraic means of the two precision matrices.
For the kinetic energy, $\grad g_\mu(\rr) = -\mA_\mu (\rr - \vm_\mu)\, g_\mu(\rr)$.
Writing $\rr = \vm_{\mu\nu} + \vx$, with $\va = \vm_{\mu\nu} - \vm_\mu$ and $\vb = \vm_{\mu\nu} - \vm_\nu$, the integrand is a quadratic form in $\vx$ under a Gaussian of covariance $\mA_{\mu\nu}^{-1}$, and the cross terms vanish by symmetry.
Hence
\begin{align*}
    \mT_{\mu\nu}
    &= \tfrac12 \int \dd[3]{\rr}\, \grad g_\mu \cdot \grad g_\nu \\
    &= \tfrac12\, \mS_{\mu\nu} \Big[ \operatorname{tr}\!\big( \mA_\mu \mA_\nu \mA_{\mu\nu}^{-1} \big) + \va^\top \mA_\mu \mA_\nu\, \vb \Big] .
\end{align*}

\paragraph{The Coulomb kernel.}
The Coulomb kernel has no Gaussian closed form, but it is a superposition of Gaussians,
\begin{equation}
\label{eq:laplace}
    \frac{1}{|\rr|} = \frac{2}{\sqrt{\pi}} \int_0^\infty e^{-t^2 |\rr|^2}\, \dd t .
\end{equation}
Let $p$ and $q$ be two Gaussian charge distributions with charges $Q_p$ and $Q_q$, centers $\vm_p$ and $\vm_q$, and precisions $\mA_p$ and $\mA_q$.
Substituting \cref{eq:laplace} into the Coulomb inner product of \cref{eq:coulomb-inner} gives
\begin{equation}
\label{eq:coulomb-quadrature}
    (p \,|\, q) = Q_p Q_q\, \frac{2}{\sqrt{\pi}} \int_0^\infty
    \det\!\big( \mathbbm{1} + 2t^2 \mSigma_{pq} \big)^{-1/2}
    \exp\!\Big\{ -t^2\, \bm{\Delta}^\top \big( \mathbbm{1} + 2t^2 \mSigma_{pq} \big)^{-1} \bm{\Delta} \Big\}\, \dd t ,
\end{equation}
with
\begin{align*}
    \mSigma_{pq} &= \mA_p^{-1} + \mA_q^{-1} , \\
    \bm{\Delta} &= \vm_p - \vm_q .
\end{align*}
The derivation uses one fact: if $\vx$ and $\vy$ are distributed as the normalized densities $p / Q_p$ and $q / Q_q$, then $\vx - \vy - \bm{\Delta}$ is a centered Gaussian with covariance $\mSigma_{pq}$.
The inner integral is then the expectation of $e^{-t^2|\vz + \bm{\Delta}|^2}$ over $\vz \sim \mathcal{N}(0, \mSigma_{pq})$, which is the integrand of \cref{eq:coulomb-quadrature}.
For a splat product, the charge is the overlap, $Q = \mS_{\mu\nu}$, which is the charge used in the far-field argument of \cref{app:3dgs}.

Every Coulomb quantity of the method is an instance of \cref{eq:coulomb-quadrature}: the auxiliary metric $\mQ$ and the fit vector $\vt$ of \cref{eq:adaptive-density-fitting}, whose auxiliary functions are splat products, and the exact Hartree energy used for evaluation.
In general $\mSigma_{pq}$ is a $3 \times 3$ matrix, and its determinant and inverse are adjugate closed forms.
Two limits are elementary.
If $\mSigma_{pq} = s\,\mathbbm{1}$ is isotropic, the integral is
\begin{equation*}
    (p \,|\, q) = Q_p Q_q\, \frac{\operatorname{erf}\!\big( |\bm{\Delta}| / \sqrt{2s} \big)}{|\bm{\Delta}|} ,
\end{equation*}
which is the Boys function of order zero.
A nucleus is the limit $\mA_q \to \infty$ of a Gaussian, with $\mSigma_{pq} = \mA_{\mu\nu}^{-1}$, $Q_q = -Z_a$ and $\vm_q = \RR_a$, which gives the nuclear attraction.

The half-line integral in \cref{eq:coulomb-quadrature} is mapped to the unit interval by $t = s/(1-s)$ and evaluated with a 20-node Gauss--Legendre rule,
\begin{align*}
    \int_0^\infty f(t)\, \dd t
    &= \int_0^1 f\!\Big( \frac{s}{1-s} \Big) \frac{\dd s}{(1-s)^2} \\
    &\approx \sum_{n=1}^{20} \frac{w_n}{(1 - s_n)^2}\, f\!\Big( \frac{s_n}{1 - s_n} \Big) ,
\end{align*}
with nodes $s_n$ and weights $w_n$ of the rule on $[0, 1]$.
Range-separated hybrids such as $\omega$B97M-V need the long-range kernel as well, and it is the same superposition over a finite interval,
\begin{equation*}
    \frac{\operatorname{erf}(\omega |\rr|)}{|\rr|} = \frac{2}{\sqrt{\pi}} \int_0^\omega e^{-t^2 |\rr|^2}\, \dd t .
\end{equation*}
The integrand of \cref{eq:coulomb-quadrature} is unchanged, and only the node table differs: a 12-node Gauss--Legendre rule on $[0, \omega]$ replaces the half-line rule.
The node loop is unrolled into one fused kernel, and its reverse pass is written analytically, so that no per-node intermediate is stored.

\paragraph{Exchange--correlation.}
The exchange--correlation energy is the only term evaluated on a grid.
The density, its gradient and the kinetic-energy density are evaluated exactly at the points $\rr_g$ of a Becke grid with weights $w_g$,
\begin{align*}
    \rho(\rr_g) &= \sum_{\mu\nu} \mP_{\mu\nu}\, g_\mu(\rr_g)\, g_\nu(\rr_g) , \\
    E_\xc &\approx \sum_g w_g\, e_\xc\big( \rho(\rr_g),\, |\grad\rho(\rr_g)|^2,\, \tau(\rr_g) \big) ,
\end{align*}
where $e_\xc$ is the energy density of the functional supplied by \texttt{dftax}.
Generalized-gradient functionals use $\rho$ and $|\grad\rho|^2$, and meta-GGAs such as r$^2$SCAN also use the kinetic-energy density $\tau$.
The VV10 nonlocal correlation of $\omega$B97M-V \citep{Vydrov2010VV10} is a double sum over the grid,
\begin{align*}
    E_{\mathrm{c}}^{\mathrm{nl}} &\approx \sum_g w_g \rho_g \Big[ \beta + \tfrac12 \sum_{g'} w_{g'} \rho_{g'}\, \Phi_{gg'} \Big] , \\
    \Phi_{gg'} &= -\frac{3}{2\, u_{gg'}\, u_{g'g}\, (u_{gg'} + u_{g'g})} , \\
    u_{gg'} &= \omega_0(\rr_g)\, |\rr_g - \rr_{g'}|^2 + \kappa(\rr_g) ,
\end{align*}
with $\rho_g = \rho(\rr_g)$ and the local functions $\omega_0$ and $\kappa$ of VV10.
It is streamed over chunks of the outer grid, and its reverse pass is written analytically, because automatic differentiation would store the $\mathcal{O}(N_g^2)$ pair tensor that the chunking avoids.
The grid is processed in chunks, so the values of all splats at all grid points are never stored at once.

\subsection{Complexity and memory}\label{app:complexity}

\Cref{tab:complexity} lists the leading-order cost of every operation in one energy-and-gradient evaluation, in the following symbols:
\begin{align*}
    M &: \text{splats}, &
    N_{\mathrm{occ}} &: \text{occupied orbitals}, &
    N_{\mathrm{at}} &: \text{nuclei}, \\
    N_{\mathrm{aux}} &: \text{auxiliary functions}, &
    k &: \text{retained neighbors per splat}, &
    N_g &: \text{grid points}, \\
    K &: \text{refresh period}, &
    N_{g,\mathrm{c}} &: \text{grid points per chunk}. & &
\end{align*}
The number of retained pairs is $|\mathcal{P}| = \mathcal{O}(Mk)$, and every Coulomb entry carries the constant factor of the 20 quadrature nodes of \cref{app:integrals}.
Reverse-mode differentiation multiplies each cost by a constant and leaves every order unchanged.

\begin{table}[t]
\captionsetup{skip=4pt}
\centering
\footnotesize
\setlength{\tabcolsep}{3pt}
\caption{Leading-order cost of one energy-and-gradient evaluation. The factorization of the auxiliary metric is paid once per refresh and appears amortized over the $K$ steps it serves.}
\label{tab:complexity}
\resizebox{\textwidth}{!}{%
\begin{tabular}{llll}
\toprule
\textbf{Operation} & \textbf{Algorithm} & \textbf{Compute} & \textbf{Memory} \\
\midrule
Overlap and kinetic, $\mS\mC$ and $\mT\mC$ & Dense row blocks & $\mathcal{O}(M^2 N_{\mathrm{occ}})$ & $\mathcal{O}(M N_{\mathrm{occ}})$ \\
Occupied Gram and retraction, \cref{eq:regeigh} & Eigensolve of $\mG = \mC^\top\mS\mC$ & $\mathcal{O}(M N_{\mathrm{occ}}^2 + N_{\mathrm{occ}}^3)$ & $\mathcal{O}(N_{\mathrm{occ}}^2)$ \\
Nuclear attraction & Retained pairs & $\mathcal{O}(M k N_{\mathrm{at}})$ & $\mathcal{O}(M k)$ \\
Fitted Hartree, \cref{eq:adaptive-density-fitting} & Fit vector over retained pairs & $\mathcal{O}(M k N_{\mathrm{aux}} + N_{\mathrm{aux}}^2 + N_{\mathrm{aux}}^3 / K)$ & $\mathcal{O}(N_{\mathrm{aux}}^2)$ \\
Fitted exchange, hybrids only & One fit vector per orbital pair & $\mathcal{O}(M k N_{\mathrm{aux}} N_{\mathrm{occ}}^2 + N_{\mathrm{aux}}^2 N_{\mathrm{occ}}^2 + N_{\mathrm{aux}}^3)$ & $\mathcal{O}(N_{\mathrm{aux}}^2)$ \\
Exchange--correlation & Chunked grid & $\mathcal{O}(N_g M N_{\mathrm{occ}})$ & $\mathcal{O}(N_{g,\mathrm{c}} M)$ \\
VV10 nonlocal correlation, $\omega$B97M-V only & Chunked double grid sum & $\mathcal{O}(N_g^2)$ & $\mathcal{O}(N_{g,\mathrm{c}} N_g)$ \\
Exact Hartree, evaluation only & Streamed product pairs & $\mathcal{O}(M^4)$ & $\mathcal{O}(M^2)$ \\
Nuclear forces, once at the end & Reverse pass in $\RR$ & $\mathcal{O}(M k N_{\mathrm{at}} + N_{\mathrm{at}}^2)$ & $\mathcal{O}(M k)$ \\
\bottomrule
\end{tabular}}
\end{table}

The training step costs
\begin{equation}
\label{eq:step-cost}
    \mathcal{O}\Big( M^2 N_{\mathrm{occ}} + M k N_{\mathrm{aux}} + N_{\mathrm{aux}}^2 + N_g M N_{\mathrm{occ}} + \frac{N_{\mathrm{aux}}^3}{K} \Big)
\end{equation}
for semilocal functionals.
The last term is the one cubic operation of the method, the factorization of the auxiliary metric, amortized over $K$ steps.
Hybrid functionals add the exchange row, whose metric solve is repeated at every step, which is why they are run only on the smaller systems of \cref{sec:experiments}.

Three rows deserve comment.
The overlap and kinetic products are dense, not screened.
Discarding pairs from $\mS$ and $\mT$ makes the occupied Gram matrix and the kinetic energy indefinite, and a retraction built on an indefinite Gram matrix is not an orthonormalization.
They are still never stored, because both are needed only through $\mS\mC$ and $\mT\mC$, which are contracted row block by row block.
The nuclear attraction, by contrast, enters only as a trace against $\mP$, has no positivity requirement, and is screened.
Finally, the exact Hartree energy is never used in training.
It serves the final evaluations and the exact-Coulomb columns of the ablations.

The memory column is what sets the size limit of \cref{sec:experiments}, and it is controlled by one rule: no operator is materialized.
The fit vector is accumulated over chunks of the retained pair list, the grid is processed in chunks, and each streamed block is recomputed in the backward pass rather than stored.
The gradient therefore holds one block at a time, exactly as the energy does.
Peak memory is then set by two terms: the Cholesky factor of the auxiliary metric, $\mathcal{O}(N_{\mathrm{aux}}^2)$, and the temporaries of one chunk of the fit-vector stream, $\mathcal{O}(N_{\mathrm{aux}}\, c)$ for a chunk of $c$ retained pairs.
At the protein scale the second term dominates, so the chunk size is the memory knob: on insulin, reducing $c$ from $2048$ to $512$ halves the peak memory of a step, from $35.4$ to $17.6$~GB, costs $2.8\%$ more time, and leaves the energy unchanged to twelve digits, because chunking reorders the same arithmetic.
The protein-scale runs use $c = 512$.
The neighbor degree $k$ saturates as the cloud grows, because each splat overlaps only splats within a fixed distance, so the screened rows grow linearly in $M$ at fixed $k$.

\subsection{Multi-device sharding}\label{app:sharding}

One system is spread over $n$ devices by sharding the work and replicating the parameters.
The parameters $(\Theta, \mC)$ occupy $\mathcal{O}(M N_{\mathrm{occ}})$ memory, a fraction of the per-device budget at every size in \cref{sec:experiments}, so partitioning them would buy little and would force communication inside every contraction.
Instead, each expensive term of \cref{tab:complexity} is written as a sum over an index set, and the index set is partitioned.

\paragraph{What is partitioned.}
Three index sets are split into $n$ blocks, one per device $d$:
\begin{align*}
    \{1, \ldots, M\} &= R_1 \cup \cdots \cup R_n && \text{rows of } \mS\mC \text{ and } \mT\mC , \\
    \mathcal{P} &= \mathcal{P}_1 \cup \cdots \cup \mathcal{P}_n && \text{retained pairs} , \\
    \{\rr_g\} &= \mathcal{G}_1 \cup \cdots \cup \mathcal{G}_n && \text{grid points} .
\end{align*}
Each term is then a sum of per-device partials followed by one all-reduce.
For the occupied Gram matrix, the fit vector and the exchange--correlation energy,
\begin{align*}
    \mG &= \sum_{d=1}^n \mC_{R_d}^\top (\mS\mC)_{R_d} , \\
    \vt &= \sum_{d=1}^n \vt^{(d)} , \\
    \vt^{(d)}_a &= (\rho_{\mathcal{P}_d} \,|\, h_a) , \\
    E_\xc &\approx \sum_{d=1}^n \sum_{g \in \mathcal{G}_d} w_g\, e_\xc(\rr_g) ,
\end{align*}
and the kinetic and nuclear-attraction energies are split in the same way over rows and pairs.
The small operations that follow the all-reduce run replicated on every device: the retraction of \cref{eq:regeigh} on the $N_{\mathrm{occ}} \times N_{\mathrm{occ}}$ matrix $\mG$, and the metric solve of \cref{eq:adaptive-density-fitting} on the length-$N_{\mathrm{aux}}$ vector $\vt$.

\paragraph{Gradients.}
The adjoint of an all-reduce is an all-reduce, so reverse-mode differentiation through the sharded energy produces the sharded gradient with no extra code.
The sharded value and gradient therefore equal the single-device ones up to the order of floating-point summation.

\paragraph{Load balance.}
The pair list is padded to a fixed length, twice the number of pairs retained at initialization, so that its shape stays constant under compilation.
The padding must not overflow: pairs that do not fit are dropped, and the energy of the truncated sum is not a converged result, which a padding of $1.25$ produces on alanine dipeptide.
A contiguous split of the padded list would hand the last devices only padding.
The list is instead dealt round-robin: with $|\mathcal{P}| = n\ell$ padded entries, entry $j n + d$ of the original order becomes entry $d\ell + j$ of device $d$'s block.
Each device then receives the same number of valid pairs up to one, stored at the front of its block.
Every consumer of the list is an order-independent sum, so the reordering changes the work distribution and not the energy.

\paragraph{Replica synchronization.}
After every optimizer step, the replicated parameters are averaged across devices,
\begin{equation*}
    (\Theta, \mC) \;\leftarrow\; \frac{1}{n} \sum_{d=1}^n (\Theta, \mC)^{(d)} .
\end{equation*}
In exact arithmetic the copies are identical and the average is a no-op.
In floating point they are not, because the replicated operations are recomputed on each device and can round differently.
The forward retraction is insensitive to such differences, since $f(\mG)$ is a smooth matrix function (\cref{eq:daleckii-krein}).
Its regularized backward pass is not, because it switches between the divided difference and the midpoint rule at $|\lambda_i - \lambda_j| = \delta\lambda_{\max}$ and drops floored modes (\cref{prop:regeigh}).
Both switches are discontinuous in the eigenvalues, so Gram matrices that agree to the last bit can receive different gradients, and without the average the copies separate over the course of training.
The average costs one all-reduce of $\mathcal{O}(M N_{\mathrm{occ}})$ numbers per step.

\paragraph{Cost per device.}
The sharded terms of \cref{eq:step-cost} divide by $n$, and the replicated ones do not,
\begin{equation*}
    \mathcal{O}\Big( \frac{M^2 N_{\mathrm{occ}} + M k N_{\mathrm{aux}} + N_g M N_{\mathrm{occ}}}{n} + N_{\mathrm{occ}}^3 + N_{\mathrm{aux}}^2 + \frac{N_{\mathrm{aux}}^3}{K} \Big) .
\end{equation*}
Communication per step is the all-reduce of $\mG$, $\vt$, a few scalars, and the parameter average.
The Cholesky factor of the auxiliary metric is replicated, and the chunk of the fit-vector stream is a per-device constant, so both memory terms of \cref{app:complexity} are the same on every device.
Adding devices shortens each device's share of the work but does not reduce its peak memory.
The same code runs across nodes through the multi-process runtime of JAX.
The sharded path does not yet evaluate the kinetic-energy density, so meta-GGAs run on one device.

\subsection{Training recipe}\label{app:recipe}

\paragraph{Optimizer.}
All runs use Adam with global-norm gradient clipping, a linear warmup and a cosine decay,
\begin{align*}
    \eta_t &= \eta_{\max} \Big( 0.01 + 0.99\, \frac{t}{T_w} \Big) && t < T_w , \\
    \eta_t &= \eta_{\max} \Big( 0.01 + 0.99 \cdot \tfrac12 \Big[ 1 + \cos\!\Big( \pi\, \frac{t - T_w}{T - T_w} \Big) \Big] \Big) && t \geq T_w ,
\end{align*}
with peak rate $\eta_{\max} = 10^{-2}$, warmup $T_w = \min(\lfloor T/10 \rfloor, 200)$ for a budget of $T$ steps, and gradients clipped to global norm $1$.
There is no weight decay, because the objective is a variational energy and any penalty on the parameters would bias it.
\Cref{tab:recipe} lists the settings shared by all runs.

\begin{table}[ht]
\captionsetup{skip=4pt}
\centering
\small
\setlength{\tabcolsep}{5pt}
\caption{Settings shared by every GS-DFT run. Settings that vary by experiment are in \cref{tab:recipe-exp}.}
\label{tab:recipe}
\begin{tabular}{lll}
\toprule
\textbf{Setting} & \textbf{Value} & \textbf{Reference} \\
\midrule
Optimizer & Adam, clip at global norm $1$, no weight decay & above \\
Learning rate & $10^{-2}$ peak, $10^{-4}$ floor, cosine & above \\
Screening & $\tau = 10^{-7}$ on the normalized overlap, see below & \cref{sec:method} \\
Pair-list padding & $2.0$ & \cref{app:sharding} \\
Fit regularizer & $\lambda = 10^{-8}$ & \cref{prop:dunlap} \\
Eigensolve & $\varepsilon = \delta = 10^{-4}$ & \cref{eq:regeigh} \\
Coulomb quadrature & 20 Gauss--Legendre nodes & \cref{app:integrals} \\
Precision & float64 & \cref{app:software} \\
\bottomrule
\end{tabular}
\end{table}

\paragraph{Grids.}
The exchange--correlation quadrature uses the Becke grids of \texttt{dftax}, indexed by a level $L$ that sets the number of radial shells and the Lebedev order per atom,
\begin{equation*}
    L = 2 : (50, 194) , \qquad L = 3 : (75, 302) , \qquad L = 5 : (110, 590) .
\end{equation*}
Training uses level 3, except on the protein-scale systems, which use level 2.
The observables of \cref{fig:observables} are evaluated on a common level-5 grid, for the Gaussian references and for GS-DFT alike, so that the density errors compare the same point set and a cloud cannot profit from the grid it was trained on.

\paragraph{Screening.}
Pair screening is applied to every run of \cref{fig:observables}, and to the other experiments whenever $M \geq 300$.

\paragraph{Initialization.}
Two initializations are used, and \cref{tab:recipe-exp} states which run uses which.
Both place the cloud from the geometry and tabulated element constants alone, with no reference calculation for the molecule.
\begin{enumerate}
\item[(i)] \emph{Atom-centered.}
The $M$ splats are divided evenly over the atoms, each atom's splats are isotropic, $\mA = 2\alpha\mathbbm{1}$, with Gaussian exponents $\alpha$ log-spaced between $e^{-1}$ and $e^{4.5}$, and the centers and log-eigenvalues receive Gaussian perturbations of scale $0.1$ and $0.02$.
The coefficients are drawn as $\mC_{\mu k} \sim \mathcal{N}(0, 0.1^2)$.
\item[(ii)] \emph{Chemically informed.}
A quarter of the splats are placed at bond midpoints, with bonds detected from covalent radii, and the rest on the atoms in proportion to the nuclear charge, with at least one per atom.
Atom-centered splats are isotropic, $\mA = 2\alpha\mathbbm{1}$, with Gaussian exponents $\alpha$ log-spaced over the range of the element's primitive exponents in def2-SVP.
Bond-centered splats have precision eigenvalue $10 / d^2$ across a bond of length $d$ and $4 / d^2$ along it, so they are elongated along the bond, and a bond that carries several splats spreads these values over a factor of $2$.
The same perturbations as in (i) break the remaining symmetry.
The coefficients are then either random, as in (i), or the minimal-basis guess: the tabulated atomic occupations of each element are projected onto the splats, and the $N_{\mathrm{occ}}$ natural orbitals of largest occupation are kept.
\end{enumerate}
The comparison between these choices is an ablation (\cref{appendix:ablations}).

\paragraph{Per-experiment settings.}
\Cref{tab:recipe-exp} gives the settings that differ between experiments.
The refresh period differs because the size ladder was tuned to $K = 10$ after the basis-accuracy ladders had been run at $K = 50$.
The protein-scale runs have individual budgets: $4{,}500$ steps for Trp-cage, $1{,}500$ for defensin, $1{,}000$ for insulin, $972$ for $\beta$-spectrin, which ran to its walltime limit, and $600$ for 3IFU.

\begin{table}[ht]
\captionsetup{skip=4pt}
\centering
\small
\setlength{\tabcolsep}{4pt}
\caption{Settings that vary by experiment. $n_{\mathrm{ao}}$ is the function count of the named Gaussian basis.}
\label{tab:recipe-exp}
\resizebox{\textwidth}{!}{%
\begin{tabular}{llllllc}
\toprule
\textbf{Result} & \textbf{Cloud size} & \textbf{Init} & \textbf{Steps} & \textbf{$K$} & \textbf{Grid} & \textbf{Seeds} \\
\midrule
Observables, \cref{fig:observables} & ladder of $M$ & (i) & $12{,}000$ & $10$ & $3$ & $3$ \\
Accuracy per parameter, \cref{fig:isoparams} & ladder of $M$ & (i) & $12{,}000$ & $50$ & $3$ & $3$ \\
Functional coverage, \cref{tab:xc-coverage} & $M = n_{\mathrm{ao}}$(cc-pVTZ) & (i) & $48{,}000$ & $50$ & $3$ & $3$ \\
Ionization potentials, \cref{tab:gaps} & $M = n_{\mathrm{ao}}$(def2-QZVPP) & (i) & $12{,}000$ & $50$ & $3$ & $3$ \\
Anion convergence, \cref{fig:anions}(a) & ladder of $M$ & (i) & $12{,}000$ & $50$ & $3$ & $3$ \\
LiF dissociation, \cref{fig:anions}(b) & $M = n_{\mathrm{ao}}$(cc-pVTZ) & (i) & $4{,}000$ & $50$ & $3$ & $3$ \\
Protein-scale systems, \cref{tab:size-ladder} & $M = n_{\mathrm{ao}}$(cc-pVTZ) & (ii), minimal-basis $\mC$ & per system & $10$ & $2$ & $1$ \\
\bottomrule
\end{tabular}}
\end{table}

\subsection{Evaluation protocol}\label{app:evaluation}

\paragraph{Reported energies.}
Every reported GS-DFT energy on the molecules of \cref{fig:observables,fig:isoparams} and \cref{tab:xc-coverage,tab:gaps} is evaluated after training with the exact Hartree energy of case (iii) in \cref{app:integrals}, over the full pair list and without screening.
For these runs, the density-fitting error of \cref{app:proofs-df} affects the trained parameters but not the reported energy of those parameters, which is variational for semilocal functionals.
The protein-scale energies of \cref{tab:size-ladder} are the fitted training objective, because the $\mathcal{O}(M^4)$ exact evaluation is out of reach at $M \approx 2 \times 10^4$, and they carry the density-fitting error of \cref{app:proofs-df}.
For hybrid functionals and Hartree--Fock, the exchange term is evaluated with the fit on the converged auxiliary basis, because the exact four-index exchange tensor does not fit in memory at the cloud sizes where the comparison is made.

\paragraph{Ionization potentials.}
Direct minimization returns an occupied subspace, not canonical orbitals.
The orbital energies are recovered by one Fock build in the converged basis, followed by an $N_{\mathrm{occ}} \times N_{\mathrm{occ}}$ eigensolve in the occupied subspace, and the ionization potential is $-\varepsilon_{\mathrm{HOMO}}$ under Hartree--Fock \citep{koopmansUeberZuordnungWellenfunktionen1934}.
No further self-consistent iteration is run, since it would report a nearby stationary point instead of the minimized state.

\paragraph{Forces.}
Forces are the explicit partial derivative of \cref{prop:forces}, evaluated by one reverse pass in $\RR$.
By \cref{eq:force-residual}, their error is linear in the optimization residual, so checkpoints used for forces are first polished.
The polish continues training in three legs of $2000$ steps at learning rates $10^{-3}$, $3 \cdot 10^{-4}$ and $10^{-4}$, then solves for $\mC$ by a self-consistent-field iteration in the frozen cloud, accepted only if it lowers the energy, and finishes with $1000$ steps at $10^{-4}$.
The protein-scale forces of \cref{tab:size-ladder} are the explicit partial derivative at the training checkpoint, without this polish, reported as the root mean square over all Cartesian components,
\begin{equation*}
    F_{\mathrm{rms}} = \Big( \frac{1}{3 N_{\mathrm{at}}} \sum_{a,\, i} F_{a,i}^2 \Big)^{1/2} .
\end{equation*}

\paragraph{Limits.}
Two estimators of a basis-set limit are used.
The Gaussian limit of \cref{fig:observables} and \cref{tab:xc-coverage} is the three-point geometric extrapolation of the last three rungs $E_1, E_2, E_3$,
\begin{align*}
    r &= \frac{E_3 - E_2}{E_2 - E_1} , \\
    E_{\mathrm{CBS}} &= E_1 + \frac{E_2 - E_1}{1 - r} ,
\end{align*}
which is used when $0 < r < 1$; otherwise the largest rung is kept.
Applied pointwise to densities and forces, the same rule gives the field references of the second row of \cref{fig:observables}.
The limits of \cref{fig:isoparams} come from a power-law fit over all rungs of each ladder,
\begin{equation*}
    E_P - E_\infty \propto P^{-\alpha} ,
\end{equation*}
with $E_\infty$ constrained below every rung and chosen to maximize the coefficient of determination of the linear fit of $\log(E - E_\infty)$ against $\log P$.
For each $E_\infty$ the inner fit is linear and exact, so the estimator is a one-dimensional search.

\paragraph{Observables.}
For a density $\rho$ and a reference $\rho_{\mathrm{ref}}$ with $N$ electrons, both evaluated on the same grid, the density error is the total variation per electron,
\begin{equation*}
    \mathrm{TV} = \frac{1}{2N} \sum_g w_g\, \big| \rho(\rr_g) - \rho_{\mathrm{ref}}(\rr_g) \big| .
\end{equation*}
The force error is the largest absolute Cartesian component of the difference,
\begin{equation*}
    \Delta F = \max_{a,\, i}\, \big| F_{a,i} - F^{\mathrm{ref}}_{a,i} \big| .
\end{equation*}
Energy errors in \cref{fig:observables} are reported per atom, so that systems of different size share an axis.

\paragraph{References.}
Gaussian reference energies, densities and forces are computed with \texttt{dftax} on the same training grid as the GS-DFT run they are compared with, and the reference observables of \cref{fig:observables} are evaluated on the same level-5 grid as the GS-DFT ones.
Alanine-dipeptide force references and the quadruple- and quintuple-zeta ethanol references are computed with PySCF, and agree with \texttt{dftax} at cc-pVTZ to within $0.1$~mHa on ethanol for every functional used and to within $0.25$~mHa on the dipeptide.
Matched comparisons set $M = n_{\mathrm{ao}}$ of the named basis, so that both representations have the same number of functions and, by \cref{eq:param-overhead}, nearly the same number of parameters.

\subsection{Reproducibility}\label{app:reproducibility}

\paragraph{Hardware.}
Runs used NVIDIA A100 and H100 GPUs with 80~GB of memory, and H200 GPUs with 141~GB for the largest protein and for the comparison of \cref{appendix:frameworks}.
Multi-GPU runs used four GPUs of one node.

\paragraph{Software.}
Python~3.13, JAX~0.10 with CUDA~12, Equinox~0.13, Optax~0.2 and \texttt{dftax}~0.8.
Versions are pinned in the lockfile of the repository.

\paragraph{Geometries.}
Every molecular geometry used in \cref{sec:experiments} is provided in the repository, except the protein structures of \cref{tab:size-ladder}, which are available from the FMODB database \citep{takayaFMODBWorldFirst2021}.

\paragraph{Code layout.}
The repository separates the method from the experiments.
The package holds the chart, the integrals, the density fit, screening, the eigensolve, sharding and the trainer, and each experiment of \cref{sec:experiments} has its own directory with a runner, a configuration file, a launcher and the plotting script that produces its figure or table.
Every run prints one machine-readable result line with its settings and outputs, and the figures are built from those lines.
The configuration of every reported run is recorded in its result line, so any row of a table can be regenerated from its settings alone.

\section{Comparison with differentiable DFT frameworks}\label{appendix:frameworks}

We compare GS-DFT with the three open-source differentiable DFT codes: D4FT \citep{liD4FTDeepLearning2023} and MESS \citep{Helal2024MESS}, both written in JAX, and DQC \citep{KasimVinko2022DQC}, written in PyTorch.
PySCF \citep{Sun2020PySCF} is not differentiable and enters only as the reference energy every code is measured against.
All codes solve restricted Kohn--Sham DFT with the PBE functional on the same geometries.

\subsection{Capabilities}\label{app:frameworks-capabilities}

\Cref{tab:frameworks-capabilities} lists what each code supports.
A code is marked as supporting relevant bases when it reproduces the PySCF energy of water to within $0.1$~mHa in both cc-pVDZ and cc-pVTZ, the bases used in production chemistry (\cref{app:frameworks-water}).
MESS passes this test at STO-3G and 6-31G, and fails from cc-pVDZ upwards.
D4FT passes it in double precision, while its default single-precision mode returns NaN.
DQC runs on the CPU only, because its interface to the integral library converts GPU tensors to host arrays and fails when the molecule is placed on a GPU.
Among the differentiable codes, GS-DFT is the only one that distributes over several GPUs, and none of the others offers both density fitting and forces on a GPU.
The largest system that another differentiable code completes at cc-pVTZ is ethanol, with $9$ atoms, in DQC.

\begin{table}[ht]
\captionsetup{skip=4pt}
\centering
\small
\caption{Capabilities of differentiable DFT codes, with PySCF as the non-differentiable reference. Relevant bases means agreement with PySCF to within $0.1$~mHa for water in cc-pVDZ and cc-pVTZ. The last column is the largest system, in atoms, completed at cc-pVTZ on one node before OOM.}
\label{tab:frameworks-capabilities}
\resizebox{\linewidth}{!}{
\begin{tabular}{lccccccr}
\toprule
\textbf{Code} & \textbf{GPU} & \textbf{Multi-GPU} & \textbf{Differentiable} & \textbf{DF} & \textbf{Forces} & \textbf{Relevant bases} & \textbf{Largest system} \\
\midrule
GS-DFT (ours)                         & \cmark & \cmark & \cmark & \cmark & \cmark & \cmark & $2{,}742$ \\
D4FT \citep{liD4FTDeepLearning2023}   & \cmark & \xmark & \cmark & \xmark & \xmark & \cmark & $9$ \\
MESS \citep{Helal2024MESS}            & \cmark & \xmark & \cmark & \xmark & \xmark & \xmark & -- \\
DQC \citep{KasimVinko2022DQC}         & \xmark\tnote{a} & \xmark & \cmark & \cmark & \cmark & \cmark & $9$ \\
\midrule
PySCF \citep{Sun2020PySCF}            & \xmark & \xmark & \xmark & \cmark & \cmark & \cmark & $153$\tnote{b} \\
\bottomrule
\end{tabular}
}
\par\vspace{0.2em}
{\footnotesize $^a$The GPU path fails while building the molecule. The last column is the largest system, in atoms, that each code completes at cc-pVTZ on one node before OOM; a dash marks a code that fails at cc-pVTZ on water. $^b$PySCF's entry is set by wall-clock time on the CPU, not by memory.}
\end{table}

\subsection{Protocol}\label{app:frameworks-protocol}

We test each code in two steps.
The first is a correctness check on water in the cc-pVDZ basis, where every code should agree with PySCF.
The second is a size ladder in the cc-pVTZ basis: water, ethanol, alanine dipeptide, and alanine chains of $5$, $15$ and $45$ residues, followed by insulin in a $784$-atom structure, not the $947$-atom FMODB entry of \cref{tab:size-ladder}.
GS-DFT runs with as many splats as the cc-pVTZ basis has functions, $M = n_{\mathrm{ao}}$.
Every code uses the PBE functional, the level-3 PySCF grid or the closest grid the code exposes, and its own default optimizer and convergence criterion.
Where a code has no grid level of its own, its grid is built by PySCF at level 3, as MESS does by default.
Every code runs on one node with no time limit and climbs the ladder until its first out-of-memory error or crash.
The GPU codes use one node with eight H200 GPUs of $141$~GB each: GS-DFT uses four, as in \cref{tab:size-ladder}, and D4FT and MESS use one, since neither distributes over several.
PySCF and DQC run on the CPU with all $64$ cores and $500$~GB of a CPU node.
We record the energy, the peak GPU memory, the number of iterations, and the total time.

\subsection{Agreement on a small molecule}\label{app:frameworks-water}

\Cref{tab:frameworks-water} reports the water check.
D4FT in double precision agrees with PySCF to $0.04$~mHa at cc-pVDZ and to $0.05$~mHa at cc-pVTZ, the level of grid differences between codes, while its default single-precision mode returns NaN after the full $4{,}000$-step budget.
MESS uses Cartesian basis functions, so we compare it against PySCF in the same Cartesian basis.
It agrees to $0.006$~mHa at STO-3G and at 6-31G, and then fails once the basis carries polarization functions.
At cc-pVDZ it returns an energy $589$~mHa \emph{below} the reference, which no variational method can do, and with exact exchange it returns $-77.836$~Ha against a Hartree--Fock energy of $-76.027$~Ha.
At cc-pVTZ it fails in its integral callback before the first iteration.
The failure therefore arrives with the $d$ shells of cc-pVDZ and not with the $s$ and $p$ shells of the smaller bases, which points at the integrals rather than at the optimizer.
GS-DFT at a matched count of $M = n_{\mathrm{ao}} = 24$ ends $66.6$~mHa above the cc-pVDZ reference.
This is the smallest rung in this work, and it lies below the crossing of the two scaling laws of \cref{fig:isoparams}: the splat and Gaussian errors fall with different exponents, so the fixed basis stays ahead until the parameter count passes the crossing point. 

\begin{table}[ht]
\captionsetup{skip=4pt}
\centering
\small
\caption{Water, cc-pVDZ, PBE, level-3 grid. $\Delta$ is the energy minus the PySCF energy. Times are wall-clock seconds including compilation. Iterations are SCF cycles for PySCF and optimization steps for the others. $^a$Cartesian basis, compared against Cartesian PySCF ($-76.33470$~Ha).}
\label{tab:frameworks-water}
\begin{tabular}{lrrrr}
\toprule
\textbf{Code} & $E$ (Ha) & $\Delta$ (mHa) & \textbf{Iterations} & \textbf{Time (s)} \\
\midrule
PySCF                   & $-76.33344$ & --          & $7$         & $1.9$ \\
D4FT (float64)          & $-76.33340$ & $+0.038$    & $637$       & $128$ \\
D4FT (float32, default) & NaN         & --          & $4{,}000$   & $206$ \\
MESS$^a$                & $-76.92339$ & $-588.7$    & $110$       & $9.7$ \\
DQC (CPU)               & $-76.33345$ & $-0.010$    & --          & $15.4$ \\
GS-DFT ($M = 24$)       & $-76.26686$ & $+66.6$     & $12{,}000$  & $154$ \\
\bottomrule
\end{tabular}
\end{table}

\subsection{Scaling with system size}\label{app:frameworks-scaling}

\Cref{tab:frameworks-scaling} reports the cc-pVTZ ladder.
The differentiable baselines stop early.
MESS fails on water before its first iteration.
DQC reaches ethanol and then runs out of host memory on alanine dipeptide, because in its default mode it stores the four-index electron repulsion tensor, which for $n_{\mathrm{ao}} = 468$ is $384$~GB in double precision.
D4FT reaches ethanol only by filling $138.9$ of the card's $141$~GB, and on alanine dipeptide its own estimate for the integral precomputation alone is eight hours.
Every differentiable baseline therefore stops at ethanol or before, while GS-DFT runs the whole ladder.
GS-DFT climbs to insulin at $17{,}448$ functions in $18.4$~GB, and its peak memory grows by a factor of $102$ across the ladder while the basis grows by a factor of $300$.

GS-DFT is slower in wall-clock time than a converged SCF on these systems, because it optimizes the basis together with the orbitals over $12{,}000$ steps where an SCF needs tens of cycles at a fixed basis.
Its cost per step grows from $0.10$~s at water to $80$~s at insulin, a factor of $800$ across a $300$-fold increase in basis size.
What it offers in exchange is reach: the same method runs the $2{,}742$-atom FMO systems of \cref{tab:size-ladder} on one node, while every other differentiable code here stops at ethanol or before.

\begin{table}[ht]
\captionsetup{skip=4pt}
\centering
\small
\caption{The cc-pVTZ ladder on one node (\cref{app:frameworks-protocol}). Memory is the peak GPU memory in GB. GS-DFT reports time per optimization step and PySCF time per SCF cycle, both in seconds; D4FT and DQC report the wall-clock time of the whole run, since neither exposes a per-iteration cost. GS-DFT runs on four H200 GPUs at every rung. GS-DFT uses $M = n_{\mathrm{ao}}$ splats; its time per step is the median over $100$ steps after compilation. OOM marks an out-of-memory failure, and a dash marks a system not attempted after an earlier failure or, for PySCF, one that did not finish within the day of wall-clock time it was given.}
\label{tab:frameworks-scaling}
\resizebox{\linewidth}{!}{
\begin{tabular}{lrrccccc}
\toprule
& & & \multicolumn{2}{c}{\textbf{GS-DFT}} & \textbf{D4FT} & \textbf{DQC} & \textbf{PySCF} \\
\cmidrule(lr){4-5}
\textbf{System} & \textbf{Atoms} & $n_{\mathrm{ao}}$ & Mem & Time/step & Wall (s) & Wall (s) & Time/cycle \\
\midrule
Water                  & $3$   & $58$       & $0.18$ & $0.10$ & $474$   & $19.9$ & $0.7$ \\
Ethanol                & $9$   & $174$      & $1.37$ & $0.11$ & $4{,}365$ & $29.3$ & $1.2$ \\
$(\textrm{Ala})_2$     & $22$  & $468$      & $1.49$ & $0.15$ & --      & OOM    & $4.6$ \\
$(\textrm{Ala})_5$     & $53$  & $1{,}158$  & $1.76$ & $0.34$ & --      & --     & $22.9$ \\
$(\textrm{Ala})_{15}$  & $153$ & $3{,}358$  & $6.43$ & $2.06$ & --      & --     & $238$ \\
$(\textrm{Ala})_{45}$  & $453$ & $9{,}958$  & $9.45$  & $21.0$ & --      & --     & -- \\
$\textrm{Insulin}$     & $784$ & $17{,}448$ & $18.43$ & $80.1$ & --      & --     & -- \\
\bottomrule
\end{tabular}
}
\end{table}

\section{Ablations}\label{appendix:ablations}

Each subsection varies one setting of \cref{app:recipe} and holds the rest at their production values.
Energies are totals in Hartree and differences are in mHa.

\vspace{-0.2\baselineskip}
\subsection{Regularized eigensolve}\label{app:abl-floor}
\vspace{-0.2\baselineskip}
The regularized eigensolve of \cref{prop:regeigh} keeps the eigendecomposition gradient finite when the occupied Gram matrix has degenerate eigenvalues, where the standard gradient divides by zero.
We train four symmetric molecules for $12{,}000$ steps with it and without it (\cref{tab:abl-floor}).
Without the regularization, methane and benzene stall more than $40$~mHa above the regularized result.
On \ce{CO2} and \ce{N2} the two agree to within $2$~mHa, as they do on a $45$-residue alanine chain at matched wall-clock time.
The regularization removes a failure on some symmetric molecules and costs nothing elsewhere.
Which symmetries trigger it stays open, since \ce{CO2} and \ce{N2} also have degenerate occupied orbitals and do not stall.

\begin{table}[ht]
\captionsetup{skip=4pt}
\centering
\small
\caption{Regularized eigensolve against a plain eigendecomposition, at matched settings. $\Delta$ is the energy without the regularization minus the energy with it.}
\label{tab:abl-floor}
\begin{tabular}{lrrrr}
\toprule
\textbf{System} & $M$ & \textbf{Regularized} & \textbf{Plain} & $\Delta$ (mHa) \\
\midrule
\ce{CH4}  & $86$  & $-40.51841$  & $-40.47054$  & $+47.9$ \\
\ce{C6H6} & $264$ & $-232.09545$ & $-232.05306$ & $+42.4$ \\
\ce{CO2}  & $90$  & $-188.48252$ & $-188.48038$ & $+2.1$ \\
\ce{N2}   & $60$  & $-109.45544$ & $-109.45653$ & $-1.1$ \\
\bottomrule
\end{tabular}
\end{table}

\vspace{-0.2\baselineskip}
\subsection{Optimizer}\label{app:abl-optimizer}
\vspace{-0.2\baselineskip}
GS-DFT trains with plain Adam and no weight decay.
We compare every first-order optimizer in Optax on alanine dipeptide at $M = 468$.
Each one takes its peak learning rate from a $600$-step exponential range test and then runs the production schedule of \cref{app:recipe} for $12{,}000$ steps (\cref{tab:abl-optimizer}).
The optimizers split into three groups.
Adam, five of its variants and LAMB finish within $14$~mHa of Adam, the best being Adamax at $13.7$~mHa below it.
Yogi, RMSProp and AMSGrad finish $75$ to $94$~mHa higher.
The rest finish $0.2$ to $1.0$~Ha higher, and Lion and Adan diverge.

\begin{table}[ht]
\captionsetup{skip=4pt}
\centering
\small
\caption{Optimizers on $(\textrm{Ala})_2$ with $M = 468$. $\Delta$ is the final energy minus that of Adam.}
\label{tab:abl-optimizer}
\begin{tabular}{llr@{\hspace{2.5em}}llr}
\toprule
\textbf{Optimizer} & \textbf{Family} & $\Delta$ (mHa) & \textbf{Optimizer} & \textbf{Family} & $\Delta$ (mHa) \\
\midrule
Adamax    & Adam       & $-13.7$ & Novograd  & Adam            & $+228.6$ \\
AdaBelief & Adam       & $-9.5$  & AdaGrad   & second moment   & $+302.2$ \\
RAdam     & Adam       & $-9.4$  & signSGD   & sign            & $+321.3$ \\
LAMB      & trust ratio & $-8.3$ & SGD       & non-adaptive    & $+449.3$ \\
AdamW     & weight decay & $-4.2$ & LARS     & trust ratio     & $+847.2$ \\
\textbf{Adam} & Adam   & $0.0$   & Fromage   & trust ratio     & $+998.8$ \\
NAdam     & Adam       & $+5.4$  & AdaDelta  & second moment   & $+1043.4$ \\
Yogi      & Adam       & $+75.0$ & Lion      & sign            & diverged \\
RMSProp   & second moment & $+88.9$ & Adan   & Adam            & diverged \\
AMSGrad   & Adam       & $+94.4$ &  &   &  \\
\bottomrule
\end{tabular}
\end{table}

\vspace{-0.2\baselineskip}
\subsection{Initialization}\label{app:abl-init}
\vspace{-0.2\baselineskip}
The chemically informed initialization of \cref{app:recipe} has four ingredients, which we add one at a time to a random atom-centered start: element exponent ranges from def2-SVP (S1), bond-centered splats (S2), bond-aligned anisotropy (S3), and splats allocated proportionally to nuclear charge (S4).
\Cref{tab:abl-init} reports water at $M = 48$ and ethanol at $M = 144$ after $4{,}000$ steps.
The full initialization ends $27$~mHa lower on water and $35$~mHa lower on ethanol than random placement, with S1 and S4 contributing most.
The seed spread falls from $11.7$ to $2.1$~mHa on water and from $4.6$ to $0.1$~mHa on ethanol, so one run is representative.

\begin{table}[ht]
\captionsetup{skip=4pt}
\centering
\small
\caption{Initialization ingredients, added cumulatively. Median energy over seeds after $4{,}000$ steps, with the seed spread in mHa.}
\label{tab:abl-init}
\begin{tabular}{llrrrr}
\toprule
 & & \multicolumn{2}{c}{\ce{H2O}, $M = 48$} & \multicolumn{2}{c}{\ce{C2H5OH}, $M = 144$} \\
\cmidrule(lr){3-4}\cmidrule(lr){5-6}
 & \textbf{Adds} & \textbf{Energy} & \textbf{Spread} & \textbf{Energy} & \textbf{Spread} \\
\midrule
S0 & random, atom-centered   & $-76.30164$ & $11.7$ & $-154.87106$ & $4.6$ \\
S1 & element exponent ranges & $-76.32034$ & $1.3$  & $-154.89223$ & $1.9$ \\
S2 & bond-centered splats    & $-76.32044$ & $1.3$  & $-154.89412$ & $6.0$ \\
S3 & bond-aligned anisotropy & $-76.32078$ & $2.3$  & $-154.89656$ & $1.7$ \\
S4 & charge-proportional allocation & $-76.32853$ & $2.1$ & $-154.90574$ & $0.1$ \\
\bottomrule
\end{tabular}
\end{table}

\vspace{-0.2\baselineskip}
\subsection{Refresh period}\label{app:abl-refresh}
\vspace{-0.2\baselineskip}
The pair list and the auxiliary basis are rebuilt from the current splats every $K$ steps and frozen in between (\cref{app:recipe}).
A small $K$ follows the moving cloud closely but pays for rebuilds often.
We vary $K$ alone on alanine dipeptide at $M = 468$ for $12{,}000$ steps at $\tau = 10^{-7}$ (\cref{tab:abl-refresh}).
Cost falls monotonically with $K$ and saturates early: rebuilding every step costs $76\%$ more than rebuilding every $200$ steps, and $K = 10$ is already within $8\%$ of the cheapest setting.
The energies span $86$~mHa without ordering in $K$, so at this budget $K$ sets cost and not accuracy.
We use $K = 10$.

\begin{table}[ht]
\captionsetup{skip=4pt}
\centering
\small
\caption{Refresh period $K$ on alanine dipeptide, $M = 468$, $12{,}000$ steps, $\tau = 10^{-7}$. Time is the total wall-clock time of the run on one A100.}
\label{tab:abl-refresh}
\begin{tabular}{rrrr}
\toprule
$K$ & \textbf{Energy} (Ha) & \textbf{Time} (s) & \textbf{ms / step} \\
\midrule
$1$   & $-495.60818$ & $9{,}629$ & $802$ \\
$5$   & $-495.59166$ & $6{,}427$ & $536$ \\
$10$  & $-495.63820$ & $5{,}878$ & $490$ \\
$50$  & $-495.57240$ & $5{,}580$ & $465$ \\
$200$ & $-495.55246$ & $5{,}462$ & $455$ \\
\bottomrule
\end{tabular}
\end{table}
\vspace{-0.3\baselineskip}

\vspace{-0.2\baselineskip}
\subsection{Screening threshold}\label{app:abl-screen}
\vspace{-0.2\baselineskip}
A density pair enters the Coulomb term only if its product has weight above $\tau$ (\cref{app:sharding}), which fixes both the cost of a step and the error in the Coulomb energy.
We repeat the run above at $K = 10$ and vary $\tau$, where $\tau = 0$ keeps every pair (\cref{tab:abl-screen}).
At $\tau = 10^{-8}$ screening already removes half of the pairs and $46\%$ of the wall-clock time, and four further decades remove only $12\%$ more.
The energies again span $63$~mHa without ordering in $\tau$, the same scale as the refresh sweep.
We use $\tau = 10^{-7}$, at the knee of the pair count.

\begin{table}[ht]
\captionsetup{skip=4pt}
\centering
\small
\caption{Screening threshold $\tau$ on alanine dipeptide, $M = 468$, $12{,}000$ steps, $K = 10$. Pairs is the number retained at the last refresh, out of $203{,}560$ unscreened.}
\label{tab:abl-screen}
\begin{tabular}{lrrrr}
\toprule
$\tau$ & \textbf{Energy} (Ha) & \textbf{Pairs} & \textbf{Retained} & \textbf{Time} (s) \\
\midrule
$0$         & $-495.61730$ & $203{,}560$ & $100\%$ & $11{,}025$ \\
$10^{-8}$   & $-495.59341$ & $102{,}450$ & $50\%$  & $6{,}310$ \\
$10^{-7}$   & $-495.63820$ & $94{,}658$  & $47\%$  & $5{,}878$ \\
$10^{-6}$   & $-495.65585$ & $86{,}488$  & $42\%$  & $5{,}503$ \\
$10^{-5}$   & $-495.61626$ & $76{,}734$  & $38\%$  & $5{,}019$ \\
\bottomrule
\end{tabular}
\end{table}

\vspace{-0.3\baselineskip}
\vspace{-0.5\baselineskip}
\section{Additional results}\label{appendix:additional}

\vspace{-0.2\baselineskip}
\subsection{Bare anions}\label{app:anions}
\vspace{-0.3\baselineskip}
\Cref{fig:anions}(a) shows the fluoride ion.
\Cref{tab:anions} extends it to the hydroxide ion and gives the full ladder: for each cardinal number $\zeta$, a cloud with the function count of the plain cc-pV$\zeta$Z basis, trained with the atom-centered initialization of \cref{app:recipe}, against the plain and the augmented Gaussian bases of the same cardinal number.
The cloud receives no diffuse functions and no information about the anion beyond the nuclei and the electron count.

\begin{table}[ht]
\captionsetup{skip=4pt}
\centering
\small
\setlength{\tabcolsep}{5pt}
\caption{Bare anions at matched function count. Energies in Hartree, restricted PBE. $M$ is the function count of the plain cc-pV$\zeta$Z basis. Bold marks a splat energy below the augmented basis of the same cardinal number.}
\label{tab:anions}
\begin{tabular}{llrrrrr}
\toprule
 & $\zeta$ & $M$ & cc-pV$\zeta$Z & aug-cc-pV$\zeta$Z & GS-DFT & Spread \\
\midrule
\ce{F-}  & D & $14$  & $-99.66562$ & $-99.77465$ & $-99.08264$ & $304.1$ \\
         & T & $30$  & $-99.74783$ & $-99.79967$ & $\mathbf{-99.80099}$ & $4.7$ \\
         & Q & $55$  & $-99.77711$ & $-99.80743$ & $\mathbf{-99.81283}$ & $0.3$ \\
         & 5 & $91$  & $-99.79765$ & $-99.81012$ & $\mathbf{-99.81330}$ & $0.1$ \\
\midrule
\ce{OH-} & D & $19$  & $-75.63289$ & $-75.73520$ & $-75.54983$ & $83.2$ \\
         & T & $44$  & $-75.70329$ & $-75.75426$ & $\mathbf{-75.75548}$ & $1.8$ \\
         & Q & $85$  & $-75.72893$ & $-75.76038$ & $\mathbf{-75.76674}$ & $4.4$ \\
         & 5 & $146$ & $-75.74814$ & $-75.76270$ & $\mathbf{-75.77348}$ & $0.2$ \\
\bottomrule
\end{tabular}
\vspace{-0.5\baselineskip}
\end{table}
\vspace{-0.5\baselineskip}

From TZ onwards, the cloud lies below the augmented basis of the same cardinal number for both anions, while using only the functions of the plain basis.
At QZ, it already lies below aug-cc-pV5Z, the largest augmented basis we compute.
The diffuse support that augmentation adds by hand is therefore found by the optimizer, which moves splats outward from the nuclei.

\subsection{Dissociation energy ladders}\label{app:dissociation}

\Cref{fig:dissociation-both} gives the total-energy dissociation curves $E(R)$ of LiH and LiF.
Each panel compares the plain cc-pV$\zeta$Z bases and their augmented aug-cc-pV$\zeta$Z counterparts, for $\zeta = \mathrm{D}, \mathrm{T}, \mathrm{Q}$, against splat clouds with the function counts of the plain bases.
Restricted PBE keeps both electrons of the bond paired, so as $R$ grows the closed-shell solution follows the ionic diabat \ce{Li+X-} rather than dissociating into neutral atoms.
This is the state every curve in \cref{fig:dissociation-both} is meant to describe, and at large $R$ it needs the diffuse density of an isolated anion, which is where the plain bases fail.

The Gaussian references are single-point restricted Kohn-Sham calculations from an atomic guess at each $R$.
At the most stretched geometries, the larger bases converge to a higher self-consistent solution instead of the ionic ground state.
We identify these points variationally: a larger basis cannot give a higher energy for the same state, so a geometry is excluded from the matched-count margins of \cref{sec:exp-fidelity} wherever the augmented ladder is not non-increasing in $\zeta$,
\begin{equation*}
    E_{\text{aug-cc-pVDZ}}(R) \;\geq\; E_{\text{aug-cc-pVTZ}}(R) \;\geq\; E_{\text{aug-cc-pVQZ}}(R) .
\end{equation*}
At LiH, $R = 6$~\AA, aug-cc-pVQZ lies $335$~mHa above aug-cc-pVDZ, which is a convergence failure of the baseline rather than a property of the representation.
These points are shown in \cref{fig:dissociation-both} but not used in any comparison.
On the other hand, the splat curves are computed as one continuation along $R$ rather than as independent calculations.
The first geometry starts from the atom-centered initialization, while each following geometry starts from the converged cloud of the previous one: splats closer to the displaced atom than to Li are translated with it, the coefficients are kept.
The cloud therefore follows the density from the covalent to the ionic regime instead of restarting at every geometry.
Each point is the lower of the continued and an independent calculation at that $R$.
The continued clouds track the ionic diabat below the entire ladder at matched function count.

\begin{figure}[ht]
\captionsetup{skip=4pt}
\centering
\includegraphics[width=\linewidth]{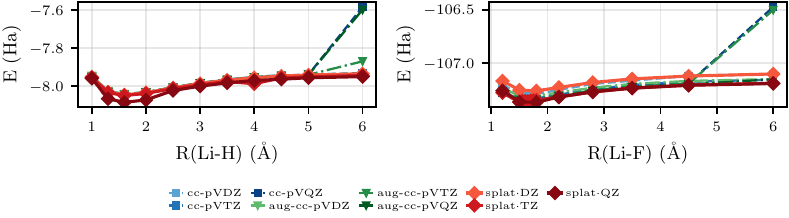}
\caption{Dissociation energy ladders $E(R)$ for LiH (left) and LiF (right), restricted PBE on the ionic \ce{Li+H-}/\ce{Li+F-} diabat. Plain cc-pVDZ/TZ/QZ (blue), augmented aug-cc-pVXZ (green), and splats sized to the plain cc-pVXZ counts (red).}
\label{fig:dissociation-both}
\end{figure}

\end{document}